\definecolor{detailsgray}{gray}{0.3}
\def \cbb{\mathbb{C}}
\def \nbb{\mathbb{N}}
\def \rbb{\mathbb{R}}
\def \ccal {\mathcal{C}}
\def \fcal {\mathcal{F}}
\def \hcal {\mathcal{H}}
\def \ical {\mathcal{I}}
\def \kcal {\mathcal{K}}
\def \lcal {\mathcal{L}}
\def \mcal {\mathcal{M}}
\def \ncal {\mathcal{N}}
\def \ocal {\mathcal{O}}
\def \pcal {\mathcal{P}}
\def \qcal {\mathcal{Q}}
\def \scal {\mathcal{S}}
\def \ucal {\mathcal{U}}
\def \wcal {\mathcal{W}}
\def \bfk  {\mathfrak{B}}
\newcommand{\thetav}{\boldsymbol{\theta}}
\newcommand{\etav}{\boldsymbol{\eta}}
\newcommand{\lambdav}{\boldsymbol{\lambda}}
\newcommand{\zetav}{\boldsymbol{\zeta}}
\newcommand{\piv}{\boldsymbol{\pi}}
\newcommand{\ev}{\boldsymbol{e}}
\newcommand{\zerov}{\boldsymbol{0}}
\def \. { \,\! }
\def\clap#1{\hbox to 0pt{\hss#1\hss}}
\def \cdotarg { \, \cdot \, }
\DeclareMathOperator{\im}{Im}
\DeclareMathOperator{\re}{Re}
\newcommand{\id}{\operatorname{id}}
\newcommand{\idop}{\boldsymbol{1}}
\DeclareMathOperator{\supp}{supp}
\DeclareMathOperator{\dom}{dom}
\def \expltext#1 {\\ \text{\footnotesize{ (#1) }}\\}
\def \intercomm#1 {\\ \text{\footnotesize{ (#1) }}\\}
\def \undercomm#1 {\underset{\text{\scriptsize{ (#1) }}}}
\def \overcomm#1 {\overset{\text{\scriptsize{ (#1) }}}}
\newcommand{\Hil}{\mathcal{H}}
\newcommand{\boundedops}{\bfk(\Hil)}
\newcommand{\qf}{\mathcal{Q}}
\newcommand{\A}{\mathcal{A}}
\newcommand{\F}{\mathcal{F}}
\newcommand{\M}{\mathcal{M}}
\newcommand{\hscalar}[2]{\langle #1 , #2 \rangle }
\newcommand{\gnorm}[2]{\lVert #1 \rVert_{#2}}
\newcommand{\onorm}[2]{\lVert #1 \rVert_{#2}^\omega}
\newcommand{\zd}{z^\dagger}
\newcommand{\cmeA}[2]{f_{#1}^{\lbrack #2 \rbrack}}
\newcommand{\cmeB}[2]{f_{#1}{\lbrack #2 \rbrack}}
\newcommand{\cme}[2]{\mathchoice{\cmeA{#1}{#2}}{\cmeB{#1}{#2}}{\cmeB{#1}{#2}}{\cmeB{#1}{#2}}}
\newcommand{\perms}[1]{\mathfrak{S}_{#1}}
\DeclareMathOperator*{\res}{res}
\newcommand{\scin}{\mathrm{in}}
\newcommand{\scout}{\mathrm{out}}
\newcommand{\scinout}{\mathrm{in/out}}
\newcommand{\twist}{\mathsf{t}\mspace{1.5mu}}
\DeclareMathOperator{\adj}{ad}
\newcommand{\cinfty}{\mathcal{C}^\infty}
\newcommand{\diso}{\mathcal{D}}
\newcommand{\fock}{\mathscr{F}}
\newcommand{\mo}{W}
\newcommand{\grad}{\Gamma}
\newcommand{\wedg}{\mathcal{W}}
\newcommand{\trans}{\mathsf{t}}
\newcommand{\boost}{\mathsf{b}}
\newcommand{\refl}{\mathsf{j}}
\newcommand{\raps}[1]{\mathcal{R}\lbrack {#1} \rbrack}
\newcommand{\pfg}{\phi}
\newcommand{\email}[1]{\mbox{\href{mailto:#1}{#1}}}
\newtheorem{definition}{Definition}[section]
\newtheorem{lemma}[definition]{Lemma}
\newtheorem{proposition}[definition]{Proposition}
\newtheorem{theorem}[definition]{Theorem}
\newtheorem{corollary}[definition]{Corollary}
\numberwithin{equation}{section}
\title{Fermionic integrable models and graded Borchers triples}
\author{
Henning Bostelmann\thanks{%
University of York, Department of Mathematics, York YO10 5DD, United Kingdom. \newline
Present address: Hochschule Merseburg, Fachbereich Ingenieur- und Naturwissenschaften, Eberhard-Leibnitz-Straße 2, 06217 Merseburg, Germany; 
e-mail: \email{henning.bostelmann@hs-merseburg.de}}
\and 
Daniela Cadamuro\thanks{%
Institut f\"ur Theoretische Physik, Universit\"at Leipzig, Br\"uderstra\ss{}e 16, 04103 Leipzig, Germany;  \newline e-mail: \email{cadamuro@itp.uni-leipzig.de}}
}
\date{February 16, 2024}
\begin{document}

\maketitle

\begin{abstract}
We provide an operator-algebraic construction of integrable models of quantum field theory on 1+1 dimensional Minkowski space with fermionic scattering states. These are obtained by a grading of the wedge-local fields or, alternatively, of the underlying Borchers triple defining the theory. This leads to a net of graded-local field algebras, of which the even part can be considered observable, although it is lacking Haag duality. Importantly, the nuclearity condition implying nontriviality of the local field algebras is independent of the grading, so that existing results on this technical question can be utilized. Application of Haag-Ruelle scattering theory confirms that the asymptotic particles are indeed fermionic. We also discuss connections with the form factor programme.
\end{abstract}

\tableofcontents

\section{Introduction}

The task of rigorously constructing interacting models remains one of the central questions of quantum field theory. In the past two decades, progress has been made in low-dimensional models, in particular in the class of integrable quantum field theories (where the scattering matrix is factorizing). 

A substantial class of these integrable models are now amenable to a construction with operator-algebraic techniques, directly on Minkowski space and without the need for a Euclidean formulation, beginning with the scalar models constructed by Lechner \cite{Lechner:2008} which have then progressively been generalized, e.g., in \cite{LechnerSchuetzenhofer:2012,AL2017,Tanimoto:federbush,CT2015}.

These models are essentially built in an ``inverse scattering'' approach, prescribing the asymptotic particle spectrum and the two-particle scattering matrix as an input to the construction. So far, these particles have always been bosons, even if some of the models can be described as deformations of Fermi fields \cite{Allazawi:2012}.

This raises the question whether the approach can encompass asymptotic fermions as well. In fact, in another approach to integrable systems -- the form factor programme \cite{Smirnov:1992}, which tackles the same problem but normally does not address functional analytic questions about the local operators, or alternatively, the convergence of $n$-point functions given by infinite series -- fermionic systems are described by some extension of the usual form factor axioms \cite{BFKZ:sinegordon,Lashkevich:1994}. However, in the operator-algebraic approach, fermionic integrable systems seem so far to have escaped attention.

It is the purpose of the present article to fill that gap, constructing graded-local field algebras by an extension of the techniques used for bosonic models.

To convey the idea, let us recall in rough terms how the case of a scalar boson, of mass $\mu>0$ and with two-particle scattering function $S$, was treated in \cite{Lechner:2008}. Based on the Zamolodchikov-Faddeev operators $z(\theta)$, $\zd(\theta)$ which fulfill an $S$-deformed version of the CCR and act on an $S$-symmetric Fock space, one considers the field
\begin{equation}\label{eq:leftfield}
   \phi(x) = \int d\theta\, \Big(e^{ip(\theta)\cdot x} \zd(\theta) + e^{-ip(\theta)\cdot x} z(\theta)\Big),
\end{equation}
which is analogous to the free Bose field (also in its functional analytic properties), but does not fulfill local commutation relations. Rather, it is interpreted as localized in an extended region, the left wedge $\wedg_L+x$ with tip at $x$. It commutes with the right field, $\phi'(x) = J \phi(-x) J$ where $J$ is a naturally defined antiunitary operator (the PCT operator); thus $\phi'(x)$ is interpreted as localized in the right wedge $\wedg_R+x$ with tip at $x$. One can now pass to the von Neumann algebras generated by smeared versions of these fields, and thus obtain algebras $\M_x$ associated with $\wedg_R+x$, and its commutant $\M_x'$ associated with $\wedg_L+x$. Instead of considering local fields, one may now define local von Neumann algebras
\begin{equation}\label{eq:aintersect}
   \A(\ocal_{x,y}) := \M_x \cap \M_y'  \quad \text{if } \ocal_{x,y} = (\wedg_R+x) \cap (\wedg_L+y). 
\end{equation}
This yields a consistent definition of a quantum field theory. The technically difficult point is now to prove that the intersection \eqref{eq:aintersect} contains more than just multiples of the identity operator; this is usually done via establishing nuclearity of certain inclusion maps. With this settled, one can use standard approaches to scattering theory to establish that the model does indeed have the desired scattering matrix (between bosonic particles).

Our approach to a fermionic version is formally rather simple: We retain the left field $\phi$ as in \eqref{eq:leftfield}, but replace the right field $\phi'$ with a ``twisted'' version,
\begin{equation}
   \hat \phi (x) = i (-1)^N \phi'(x),
\end{equation}
where $N$ is the particle number operator. This field \emph{anti}commutes with $\phi$ if the wedges are spacelike separated. Its smeared versions generate a von Neumann algebra $\M^\twist$ rather than $\M$, and our twisted-local field algebras are simply
\begin{equation}\label{eq:fintersect}
   \F(\ocal_{x,y}) := \M_x^{\twist} \cap \M_y'.
\end{equation}
Most of the remaining construction can be carried over from the bosonic case. An important insight is that nontriviality of the local (field) algebras is guaranteed by the very \emph{same} condition (modular nuclearity) that was used in the bosonic case; since this condition has been established for a certain class of scattering functions $S$ in the literature (at substantial effort), we immediately obtain a large class of nontrivial fermionic models.

While the scalar integrable models of \cite{Lechner:2008} (and their graded modifications) are our prime example, most of our analysis is independent of this specific choice. In fact, most parts can (as in the bosonic case) be considered at a more abstract level: just like \emph{local} 1+1-dimensional quantum field theories can be obtained from a Borchers triple $(\M,U,\Omega)$, consisting of one von Neumann algebra $\M$ (thought to be associated with a wedge), a representation of the translation group $U$, and a vacuum vector $\Omega$, we can describe our models by enhancing these triples with a grading operator $\Gamma$; the integrable models sketched above being examples of this approach, with $\grad=(-1)^N$.

A new aspect of the fermionic situation is how the net of \emph{local} observables, i.e., the even part $\A$ of $\F$ under the grading, is abstractly related to the field net $\F$ itself -- since one may take the view that the theory should be determined by its local observables, not the field net. The usually accepted answer is given by the Doplicher-Haag-Roberts theory of superselection sectors. However, this approach is not applicable to massive 1+1-dimensional theories  \cite{Mueger:massive2d}. Following an analysis of M\"uger \cite{Mueger:qdouble}, we establish however that the net $\A$ lacks Haag duality, in a way that can quite precisely be described; and we find that the field net $\F$, its even part $\A$, and the non-graded algebras in \eqref{eq:aintersect}, can all be seen as fixed points of one larger (though non-local) net of algebras under the action of certain automorphisms. A major part is played by the so-called ``disorder operators'', which also feature in the form factor programme \cite{Lashkevich:1994}.

Our plan for the paper is as follows. After recalling some background in Sec.~\ref{sec:bg}, we introduce graded Borchers triples and the relevant nuclearity condition, and draw some direct consequences, in Sec.~\ref{sec:triples}. The connection with observable algebras is analyzed in Sec.~\ref{sec:obs}. In Sec.~\ref{sec:scattering}, we formulate our variant of Haag-Ruelle scattering theory, which will allow us to identify asymptotic particles as fermions. Sec.~\ref{sec:integrable} then describes the example of scalar integrable models, including the connection of our approach with the form factor programme. We end with a brief outlook in Sec.~\ref{sec:conclusions}.

\section{Prerequisites}\label{sec:bg}

We repeat some technical prerequisites, mainly to fix our notation. For more information, see \cite{Roberts:1970,Rob:lectures2}.

\paragraph{Grading}
A \emph{grading} on a Hilbert space $\Hil$ is given by a unitary involution $\grad$ (i.e., $\grad\in\boundedops$ such that $\Gamma=\Gamma^\ast=\Gamma^{-1}$); we may split $\Hil=\Hil_+ \oplus \Hil_-$ such that $\grad\restriction \Hil_\pm = \pm \idop$. The grading acts on $\boundedops$ by the automorphism $\alpha = \adj \Gamma$. With $\alpha_\pm:=\frac{1}{2}(1 \pm \alpha)$, we can uniquely split $A \in \boundedops$ into ``even'' and ``odd'' parts, $A = A_+ + A_-$ with $\alpha(A_\pm)=\pm A_\pm$; namely, $A_\pm=\alpha_\pm(A)$. We also define the unitary ``twist'' operator $Z:=\frac{1-i}{2} + \frac{1+i}{2} \grad$; it acts as $A^\twist:= ZAZ^\ast= A_+ + i \grad A_-$. 

For a von Neumann algebra $\ncal\subset\boundedops$, set $\ncal^\twist = \{ A^\twist : A \in \ncal\}$. If $\alpha(\ncal)=\ncal$, this ``twist'' reproduces the usual Bose/Fermi  (anti)commutation relations in the following sense: If $A \in\ncal$, $A' \in \ncal^{\twist \prime}=\ncal^{\prime \twist}$, we have 
\begin{equation}
 A'A = AA' - 2 A_-A'_- = \sum_{s,s' = \pm 1} (-1)^{(1-s)(1-s')/4}\alpha_s(A)\alpha_{s'}(A').
\end{equation}
For several operators $A_j\in \ncal_j$ ($j = 1,\ldots,n$), with $\alpha(\ncal_j)=\ncal_j$ and $\ncal_j \subset \ncal_k^{\twist\prime}$ ($j \neq k$), this generalizes to
\begin{equation}\label{eq:gradpermute}
 A_{\sigma(1)} \dotsm A_{\sigma(n)} = \sum_{s_j = \pm 1} \Big(\prod_{\substack{j<k \\ \sigma(j)>\sigma(k)}} (-1)^{(1-s_j)(1-s_k)/4}\Big) \alpha_{s_{1}}(A_{1}) \dotsm \alpha_{s_n}(A_{n})
\end{equation}
for any permutation of $n$ elements, $\sigma\in\perms{n}$.

The even part $\ncal_+ := \{A\in \ncal : \alpha(A)=A\}$ of an algebra $\ncal$ has natural representations $\pi_\pm$ on $\Hil_\pm$, namely, $\pi_\pm(A)=A \restriction \Hil_\pm$.

\paragraph{Minkowski space} 
We work on 1+1-dimensional Minkowski space $\rbb^2$. Define the regions $\wedg_R= \{x : |x^0|<x^1\}$, $\wedg_L=-\wedg_R$. 
For $x\in \rbb^2$, let us define the ``covariant length'' 
\begin{equation}
 \delta(x):=\begin{cases}\sqrt{-x\cdot x} \quad & \text{if }x \in \wedg_R, \\ 0 & \text{otherwise} . \end{cases}
\end{equation}
Then for $x,y \in \rbb^2$ with $\delta(y-x)>0$, we define the double cone $\ocal_{x,y} := (\wedg_R + x) \cap (\wedg_L + y)$; as a specific case, the standard double cone of radius $r$ centered at the origin is $\ocal_r := \ocal_{(0,-r),(0,r)}$. 
For any $\ocal$, we denote by $\ocal'$ its (open) causal complement.

The symmetry group of our spacetime is the proper orthochronous Poincar\'e group $\pcal_+^\uparrow$, consisting of translations $\trans_x$ ($x \in \rbb^2$) and boosts $\boost_\theta$ ($\theta\in\rbb$), and the larger group $\pcal_+$ including also the space-time reflection, $\refl$.  

\paragraph{Nets of algebras}
A net of algebras is an assignment of (open) regions in $\rbb^2$ to von Neumann algebras $\F(\ocal)\subset\boundedops$ such that $\F(\ocal)$ is increasing with $\ocal$ \emph{(isotony)}. We say $\F$ is \emph{local} if $\F(\ocal')\subset \F(\ocal)'$, and \emph{twisted-local} if $\F(\ocal')\subset \F(\ocal)^{\twist\prime}$, for any open $\ocal\subset\rbb^2$.  Given a unitary representation $U$ of $\pcal_+^\uparrow$, we say that $\F$ is \emph{covariant} if $U(g)\F(\ocal)U(g)^\ast = \F(g.\ocal)$ for all $g\in\pcal_+^\uparrow$; similarly for $\pcal_+$, but in this case, we take spacetime reflection to be represented by an \emph{anti}-unitary operator.

We will usually define such nets by specifying only $\F(\ocal_{x,y})$, setting 
\begin{equation}\label{eq:fgen}
  \F(\ocal):=\bigvee_{\ocal_{x,y}\subset\ocal} \F(\ocal_{x,y}) 
\end{equation}
for general open regions $\ocal$ without further mention. In this case, it suffices to check isotony, (twisted) locality or covariance for the $\F(\ocal_{x,y})$ only, in analogous form.

\section{Graded Borchers triples and field algebras}\label{sec:triples}

We start with an abstract operator-algebraic construction, based on the notion of Borchers triples. We introduce a graded variant of these:
\begin{definition}\label{def:triple}
 Given a separable Hilbert space $\Hil$, a \emph{graded Borchers triple} $(\M, U, \Omega, \grad)$ consists of a von Neumann algebra $\M \subset \boundedops$, 
 a unitary representation $U$ of $(\rbb^2,+)$ on $\Hil$, a vector $\Omega \in \Hil$, and a unitary involution $\grad \in \boundedops$, such that
 \begin{enumerate}[(i)]
  \item $U$ is strongly continuous, fulfills the spectrum condition, and has $\Omega$ as a unique (up to a phase) invariant vector;
  \item $\M_x:=U(x) \M U(x)^\ast \subset \mcal$ for all $x \in \wedg_R$;
  \item \label{it:btcyclic} $\Omega$ is cyclic and separating for $\M$;
  \item $\grad \mcal \grad^\ast=\mcal$, $\grad$ commutes with $U$, and $\grad\Omega=\Omega$.
 \end{enumerate} 
\end{definition}
Given such a triple (or rather quadruple), we then set
\begin{equation}\label{eq:fdef}
    \F(\ocal_{x,y}) := \M_x^\twist \cap \M_y^{\prime},
\end{equation}
which generates $\F(\ocal)$ for general $\ocal$ by \eqref{eq:fgen}. 
Analogous to the non-graded case, one immediately has:

\begin{proposition}\label{prop:fnet}
   $\F$ is a twisted-local net of algebras, covariant under an extension of $U$ to a representation of $\pcal_+$ with $U(\boost_\theta)=\Delta^{i\theta/2\pi}$ and $U(\refl)=ZJ$, where $(\Delta,J)$ are the Tomita-Takesaki modular data of $(\M,\Omega)$.
\end{proposition}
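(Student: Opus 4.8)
The plan is to verify the defining properties of a covariant twisted-local net one at a time, using the structure of the graded Borchers triple and the Tomita-Takesaki data of $(\M,\Omega)$. The proof is modeled on the familiar non-graded argument (as in Borchers and in \cite{Lechner:2008}), and the new input is only the bookkeeping of the grading via the twist operator $Z$. First I would record the basic compatibility facts: since $\grad$ commutes with $U$ and fixes $\Omega$, it commutes with the modular data, i.e., $\grad\Delta^{it}=\Delta^{it}\grad$ and $\grad J=J\grad$; consequently $Z$ commutes with $\Delta^{it}$ and $J$ as well. Also $\grad\M\grad=\M$ gives $\M^\twist = Z\M Z^\ast$ a von Neumann algebra with the same modular data up to the adjustment by $Z$, and in particular $Z\Omega=\Omega$ so that $\Omega$ is cyclic and separating for both $\M$ and $\M^\twist$.

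**Isotony and twisted locality.** Isotony of $\ocal\mapsto\F(\ocal)$ is automatic from \eqref{eq:fgen}, so by the remark after \eqref{eq:fgen} it suffices to treat the double cones $\ocal_{x,y}$. For isotony at this level: if $\ocal_{x',y'}\subset\ocal_{x,y}$ then $x'-x\in\overline{\wedg_R}$ and $y-y'\in\overline{\wedg_R}$, whence $\M_{x'}\subset\M_x$ (by property (ii) and its closure, together with translation covariance of the inclusion) and $\M_{y'}'\supset\M_y'$... here one must be slightly careful and instead argue $\M_{x'}^\twist\subset\M_x^\twist$ and $\M_{y'}'\subset\M_y'$ in the correct directions; the point is that $\F(\ocal_{x',y'})=\M_{x'}^\twist\cap\M_{y'}'$ sits inside $\M_x^\twist\cap\M_y'=\F(\ocal_{x,y})$. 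For twisted locality, if $\ocal_{u,v}$ lies in the causal complement of $\ocal_{x,y}$, the geometry of wedges in $\rbb^2$ forces $v\in\overline{\wedg_R}+y$ (or the symmetric configuration), so that $\M_v\subset\M_y$, hence $\M_v^\twist\subset\M_y^\twist$; then $\F(\ocal_{u,v})\subset\M_v^\twist\subset\M_y^\twist=(\M_y')^{\twist\prime\prime\,\twist}$, and one checks $\M_y^\twist\subset\F(\ocal_{x,y})^{\twist\prime}$ using $\F(\ocal_{x,y})\subset\M_y'$ and the identity $(\M_y')^{\twist\prime}=\M_y^\twist{}'{}'=\M_y^\twist$ — more precisely one uses that $\ncal\subset\ncal'^\twist{}'$ type relations from the Grading paragraph give $\F(\ocal_{u,v})\subset\F(\ocal_{x,y})^{\twist\prime}$.

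**Boost covariance.** Define $U(\boost_\theta):=\Delta^{i\theta/2\pi}$. By the Borchers commutation relations (which hold because $(\M,U,\Omega)$ is an ordinary Borchers triple after forgetting $\grad$), $\Delta^{it}$ implements the correct geometric action of boosts on the translations $U(x)$ and satisfies $\Delta^{it}\M\Delta^{-it}=\M$; one then checks that this extends $U$ to a representation of the identity component, acts correctly on $\M_x$ for $x\in\wedg_R$, and — since $\Delta^{it}$ commutes with $\grad$, hence with $Z$ — also on $\M_x^\twist$. It follows that $U(\boost_\theta)$ maps $\F(\ocal_{x,y})$ onto $\F(\boost_\theta.\ocal_{x,y})=\F(\ocal_{\boost_\theta x,\boost_\theta y})$.

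**Reflection covariance — the main obstacle.** The delicate part is the reflection, represented by the antiunitary $U(\refl):=ZJ$. One must show (a) $ZJ$ is antiunitary — immediate since $Z$ is unitary, $J$ antiunitary, and they commute, and $(ZJ)^2=ZJZJ=Z(JZJ)J\cdot(\text{sign})$; here one needs $JZJ=Z^\ast$, which follows from $JZJ = \frac{1-i}{2}\cdot 1 + \frac{1+i}{2}\,J\grad J$ together with $J\grad J=\grad$ (as $J$ is antilinear, it conjugates the scalar coefficients), giving $JZJ=\frac{1+i}{2}+\frac{1-i}{2}\grad=Z^\ast$, so $(ZJ)^2=ZZ^\ast=1$ — good, it is an involution; (b) it implements the correct action on translations, $ZJ\,U(x)\,(ZJ)^\ast = U(-x) = U(\refl.x)$, which holds because $J U(x) J = U(-x)$ (a standard consequence of the Borchers/Tomita structure for the wedge) and $Z$ commutes with $U$; and (c) the crucial algebraic step $ZJ\,\F(\ocal_{x,y})\,(ZJ)^\ast = \F(\ocal_{\refl x,\refl y})$. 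For (c) the key identity is $J\M J = \M'$ (Tomita-Takesaki), from which one computes $ZJ\M^\twist(ZJ)^\ast = ZJ\,Z\M Z^\ast\,JZ^\ast = Z(JZJ)(J\M J)(JZ^\ast J)Z^\ast = Z\,Z^\ast\,\M'\,Z\,Z^\ast$ — wait, this needs care with the order and with $JZ^\ast J=Z$; carrying it through yields $ZJ\M^\twist(ZJ)^\ast=(\M')$ up to a twist, and symmetrically $ZJ\M'(ZJ)^\ast=\M^\twist$. Combined with the translation action from (b) and the fact that $\refl$ swaps the roles of the left and right wedge (so that $\refl.\ocal_{x,y}=\ocal_{\refl y,\refl x}$), one obtains $ZJ(\M_x^\twist\cap\M_y')(ZJ)^\ast = \M_{\refl x}'\cap\M_{\refl y}^\twist$ reindexed correctly to $\M_{\refl y}^\twist\cap\M_{\refl x}'=\F(\ocal_{\refl y,\refl x})$. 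I expect the sign/ordering bookkeeping in (c) — in particular verifying that the twist is not lost or doubled when conjugating $\M^\twist$ by $ZJ$, using $JZJ=Z^\ast$ repeatedly — to be the one genuinely error-prone computation; everything else is a routine transcription of the bosonic proof with $\grad$ inserted.
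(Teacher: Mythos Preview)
Your overall strategy matches the paper's: reduce to the non-graded Borchers result for the $\pcal_+^\uparrow$-action via $\Delta^{it}$, then handle the reflection with $U(\refl)=ZJ$, and check twisted locality by a chain of wedge-algebra inclusions. However, two concrete slips should be fixed.

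First, in your setup you assert that $Z$ commutes with $J$. This is false: from $J\grad J=\grad$ and antilinearity of $J$ one gets $JZJ=Z^\ast$, i.e.\ $ZJ=JZ^\ast$, which is exactly the relation the paper records and which you yourself derive later in the reflection step. The correct identity is what makes $(ZJ)^2=ZZ^\ast=\idop$ work and what yields $ZJ\,\M^\twist\,JZ^\ast=Z(JZJ)(J\M J)(JZ^\ast J)Z^\ast=ZZ^\ast\M'ZZ^\ast=\M'$; your later computation is right, but the earlier claim contradicts it.

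Second, your twisted-locality argument contains a wrong inclusion. You write ``$\F(\ocal_{u,v})\subset\M_v^\twist$'', but by definition $\F(\ocal_{u,v})=\M_u^\twist\cap\M_v'$, which sits inside $\M_u^\twist$ and $\M_v'$, not $\M_v^\twist$. The clean chain (as in the paper) is: if $\ocal_{u,v}$ lies spacelike to the left of $\ocal_{x,y}$, i.e.\ $x-v\in\wedg_R$, then
\[
\F(\ocal_{u,v})\subset \M_v' \subset \M_x' = (\M_x^\twist)^{\twist\prime}\subset(\M_x^\twist\cap\M_y')^{\twist\prime}=\F(\ocal_{x,y})^{\twist\prime},
\]
using $(\ncal^\twist)^{\twist\prime}=\grad\ncal'\grad=\ncal'$ for $\grad$-invariant $\ncal$; the right-sided case is symmetric. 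Your geometric condition ``$v\in\overline{\wedg_R}+y$'' is also not the sharp one---spacelike separation to the right requires $u-y\in\wedg_R$, not merely $v-y\in\wedg_R$. With these corrections your argument becomes the paper's.
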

\begin{proof}
We first note that, since $\grad \M \grad^\ast=\M$, $\grad\Omega=\Omega$, and due to uniqueness of the modular data, $\grad$ commutes with $\Delta$ and $J$; hence $\Delta Z = Z \Delta$ and $ZJ = JZ^\ast$. 
    
   Now it is well known \cite[Sec.~III]{Borchers:1992} that $U$ extends to a representation of $\pcal_+^\uparrow$ where $U(\boost_{2\pi\theta})=\Delta^{i\theta}$, such that $U(g)\M_xU(g)^\ast=\M_{g.x}$ for all $g$; and $J$ completes this to a representation of $\pcal_+$. Then $U(\refl):=ZJ$ defines another antiunitary involution that combines to a representation of $\pcal_+$ with the same $U(g)$, $g\in\pcal_+^\uparrow$, since $[U(g),Z]=0$. One easily checks $U(\refl)\M^\twist U(\refl) = \M'$, and hence it follows that $U(g) \F(\ocal_{x,y}) U(g)^\ast=\F(g.\ocal_{x,y})$ for \emph{all} $g\in\pcal_+$.

   For twisted locality, one notes that if $\ocal_{z,w}$ is spacelike from $\ocal_{x,y}$, say, $x-w \in \wedg_R$, then
   \begin{equation}
      \F(\ocal_{z,w}) =  \mcal_z^\twist \cap \M_w' \subset \M_w' \subset \M_x' = (\M_x^\twist)^{\twist\prime} \subset (\mcal_x^\twist \cap \M_y')^{\twist\prime} =  \F(\ocal_{x,y})^{\twist\prime};
   \end{equation}
   likewise if $z-y\in\wedg_R$.
\end{proof}

We will introduce a class of examples of such graded triples in Sec.~\ref{sec:integrable}. For the moment, the reader may think of $\mcal'$ being generated by a free Majorana field smeared with test functions in $\wedg_L$ and $\mcal^\twist$ by the same field smeared in $\wedg_R$, with $\Gamma=(-1)^N$.

While in this example, it is clear that the intersection in \eqref{eq:fdef} is nontrivial---$\F(\ocal_{x,y})$ is generated by the Majorana field in $\ocal_{x,y}$---, this does not follow from the abstract axioms; one may well have $\F(\ocal_{x,y}) = \cbb \idop$, even if $\Gamma=\idop$ \cite{LechnerScotford:deformations}. However, an additional condition (``modular nuclearity'') implies in the non-graded case \cite{BuchholzLechner:2004} that $\F(\ocal_{x,y})$ is a type $\mathrm{III}_1$ factor, and has $\Omega$ as a cyclic and separating vector. This is related to the split property, which follows from the modular nuclearity condition \cite{BuchholzDAntoniLongo:1990-1} defined as follows:
\begin{definition}
 We say that a (graded) Borchers triple fulfills \emph{modular nuclearity} (at distance $r \geq 0$) if for any $x \in \rbb^2$ with $\delta(x)>r$, the map
 \begin{equation}
    \Xi_x : \M_x \to \Hil, \quad A \mapsto \Delta^{1/4} A \Omega
 \end{equation}
 is nuclear.
\end{definition}
Based on this condition, which does not depend on the grading, we will establish a graded version of the ``standard split'' property (Proposition~\ref{prop:unitaryequiv} below).\footnote{%
We include a direct proof, inspired by \cite{BDF:universal_structure}, for the convenience of the reader. But Proposition~\ref{prop:unitaryequiv} can also be deduced from the literature: Modular nuclearity is known \cite{BuchholzDAntoniLongo:1990-1} to imply the split property, i.e., existence of a unitary $V$ that fulfills \eqref{eq:vmmv} and \eqref{eq:vabv} \emph{without} grading (replace $\Gamma$ with $\idop$). Now $\adj\Gamma$ is an automorphism of $\M_x$, $\M'_y$, hence one can choose $V$ such that $V \Gamma =(\Gamma\otimes\Gamma)V$ \cite{DopLon:split}. Using this property, $V$ can be used to construct a unitary with the features listed in   Proposition~\ref{prop:unitaryequiv} (``with grading''). 
}

To that end, for the remainder of this section, fix a graded Borchers triple $(\M,U,\Omega,\grad)$ which fulfills modular nuclearity, and $x,y$ with $\delta(y-x)>r$. We first recall the notion of graded tensor product $\otimes_\grad$ of von Neumann algebras \cite[Sec.~2.6]{CKL:superconformal}: For $A \in \M_x'$, $B\in \M_y$, we write $A \otimes_\grad B = A \otimes B_+ + A\Gamma \otimes B_-$, and we denote with $\ncal_{x,y}:=\M_x' \otimes_\grad \M_y \subset \bfk(\Hil \otimes \Hil)$ the von Neumann algebra generated by all such $A \otimes_\grad B$ (linear combinations and closure are sufficient). We note
\begin{lemma}\label{lemma:ncycsep}
   The vector $\Omega\otimes\Omega\in\Hil\otimes\Hil$ is cyclic and separating for $\ncal_{x,y}$.
\end{lemma}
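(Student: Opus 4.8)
The plan is to deduce cyclicity and separation of $\Omega\otimes\Omega$ for $\ncal_{x,y}=\M_x'\otimes_\grad\M_y$ from the corresponding properties of $\Omega$ for $\M_x'$ and $\M_y$ (which hold by axiom \eqref{it:btcyclic} of Definition~\ref{def:triple} together with Takesaki's theorem: $\Omega$ is cyclic and separating for $\M$, hence for $\M'$, hence for $\M_x$ and $\M_x'$ and likewise $\M_y$, $\M_y'$, since translations in $\wedg_R$ preserve these properties by covariance and uniqueness of $\Omega$). The key observation is that the graded tensor product differs from the ordinary tensor product only by the unitary $1\otimes$(something) or, more precisely, that $\ncal_{x,y}$ is the image of the ordinary von Neumann tensor product $\M_x'\otimes\M_y$ under a suitable unitary on $\Hil\otimes\Hil$ that fixes $\Omega\otimes\Omega$.

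Concretely, first I would exhibit such a unitary. Define $W\in\bfk(\Hil\otimes\Hil)$ by $W(\xi\otimes\eta)=\xi\otimes\eta$ if $\xi\in\Hil_+$ and $W(\xi\otimes\eta)=\xi\otimes\grad\eta$ if $\xi\in\Hil_-$; equivalently $W=\sum_{s=\pm}E_s\otimes\grad^{(1-s)/2}$ where $E_\pm=\frac12(1\pm\grad)$ are the spectral projections of $\grad$. Then $W$ is unitary (indeed a self-adjoint involution), it satisfies $W(\Omega\otimes\Omega)=\Omega\otimes\Omega$ because $\grad\Omega=\Omega$, and a direct computation using $A=A_++A_-$, $B=B_++B_-$ gives $W(A\otimes B)W^\ast=A\otimes_\grad B$ for $A\in\M_x'$, $B\in\M_y$: one checks that conjugation by $W$ sends $A\otimes B_+\mapsto A\otimes B_+$ and $A\otimes B_-\mapsto A\grad\otimes B_-$ on each summand $\Hil_s\otimes\Hil$, using that $\grad A\grad^\ast$ is again the relevant component structure and that $A\grad=\grad\alpha(A)^{-?}$—more carefully, $W(A\otimes B)W^\ast$ acts on $\xi\otimes\eta$ with $\xi\in\Hil_s$ as $A\xi\otimes\grad^{(1-s)/2}B\grad^{(1-s)/2}\eta$, and since $\grad B_\pm\grad^\ast=\pm B_\pm$ this equals $A\xi\otimes(B_+ + s\,B_-)\eta$, which is exactly how $A\otimes_\grad B=A\otimes B_+ + A\grad\otimes B_-$ acts (as $\grad\xi=s\xi$). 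Hence $W\ncal_{x,y}W^\ast=\M_x'\otimes\M_y$ (the ordinary tensor product), since $W$ maps a generating set to a generating set and is a $\ast$-isomorphism, so it maps the generated von Neumann algebra onto the generated von Neumann algebra.

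With this in hand the statement is immediate: $\Omega\otimes\Omega$ is cyclic and separating for the ordinary tensor product $\M_x'\otimes\M_y$ by the standard fact that if $\Omega$ is cyclic and separating for $\M_x'$ and for $\M_y$ then $\Omega\otimes\Omega$ is cyclic and separating for $\M_x'\otimes\M_y$ (cyclicity is elementary; separation follows from $\overline{(\M_x'\otimes\M_y)'}=\M_x\otimes\M_y'$ together with commutation theorem for tensor products, or simply from the fact that $\Omega\otimes\Omega$ is cyclic for the commutant $\M_x\otimes\M_y'$ as well). Transporting back by $W^\ast$, which is unitary and fixes $\Omega\otimes\Omega$, yields that $\Omega\otimes\Omega$ is cyclic and separating for $\ncal_{x,y}$.

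I expect the only mild obstacle to be the bookkeeping in the computation $W(A\otimes B)W^\ast=A\otimes_\grad B$: one must be careful that $W$ conjugates the \emph{second} tensor factor by $\grad$ only on the subspace where the \emph{first} factor lies in $\Hil_-$, and that $A\in\M_x'$ need not be even, so no grading is applied to the first factor. Everything else—unitarity of $W$, $W(\Omega\otimes\Omega)=\Omega\otimes\Omega$, and the standard tensor-product cyclic/separating lemma—is routine. An alternative, essentially equivalent, route avoiding $W$ entirely is to note that $\M_x'\otimes_\grad\idop$ and $\idop\otimes_\grad\M_y$ generate $\ncal_{x,y}$ and are each unitarily equivalent (via $W$) to $\M_x'\otimes\idop$ and $\idop\otimes\M_y$; but the single-unitary formulation above is cleaner to write down.
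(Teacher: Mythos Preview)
Your overall strategy---reduce to the ungraded tensor product $\M_x'\otimes\M_y$---is exactly right, but the specific unitary $W=E_+\otimes\idop+E_-\otimes\grad$ does \emph{not} satisfy $W(A\otimes B)W^\ast=A\otimes_\grad B$. The flaw is in the line ``$W(A\otimes B)W^\ast$ acts on $\xi\otimes\eta$ with $\xi\in\Hil_s$ as $A\xi\otimes\grad^{(1-s)/2}B\grad^{(1-s)/2}\eta$'': after applying $A\otimes B$, the first-factor vector $A\xi$ need no longer lie in $\Hil_s$, so the outer $W$ does not act by the same power of $\grad$. A one-line counterexample: take $A,B$ both odd and $\xi\in\Hil_+$, $\eta\in\Hil_+$. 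Then $W^\ast(\xi\otimes\eta)=\xi\otimes\eta$, $(A\otimes B)(\xi\otimes\eta)=A\xi\otimes B\eta$ with $A\xi\in\Hil_-$, and $W(A\xi\otimes B\eta)=A\xi\otimes\grad B\eta=-A\xi\otimes B\eta$; whereas $(A\otimes_\grad B)(\xi\otimes\eta)=(A\grad\otimes B)(\xi\otimes\eta)=A\xi\otimes B\eta$. So the two differ by a sign. In fact one checks that your $W$ does not even map the \emph{algebra} $\M_x'\otimes\M_y$ onto $\ncal_{x,y}$, since the image involves factors $1\otimes\grad$ which are not in $\ncal_{x,y}$ unless $\grad\in\M_y$.

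The paper bypasses the search for such a unitary altogether with a much shorter observation: since $\grad\Omega=\Omega$, one has directly
\[
(A\otimes_\grad B)(\Omega\otimes\Omega)=A\Omega\otimes B_+\Omega+A\grad\Omega\otimes B_-\Omega=A\Omega\otimes B\Omega=(A\otimes B)(\Omega\otimes\Omega),
\]
so the $\ncal_{x,y}$-orbit of $\Omega\otimes\Omega$ contains the $(\M_x'\otimes\M_y)$-orbit, giving cyclicity immediately. For the separating property one runs the same computation on the commutant side: the operators $A'_+\otimes\idop+A'_-\otimes\grad$ (for $A'\in\M_x$) and $\idop\otimes B'$ (for $B'\in\M_y'$) lie in $\ncal_{x,y}'$, and acting on $\Omega\otimes\Omega$ they produce all $A'\Omega\otimes B'\Omega$, hence $\Omega\otimes\Omega$ is cyclic for $\ncal_{x,y}'$ and thus separating for $\ncal_{x,y}$. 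This avoids any delicate bookkeeping with Klein-type unitaries.
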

\begin{proof}
  One has $(A \otimes_\grad B) (\Omega\otimes\Omega) = (A \otimes B) (\Omega\otimes\Omega)$, hence the statement follows since $\Omega\otimes\Omega$ is cyclic and separating for the non-graded tensor product $\M_x' \otimes \M_y$. 
\end{proof}

Now we consider the map $\pi_\twist: \ncal_{x,y} \to \boundedops$ given by linear extension of 
\begin{equation}
   \pi_\twist(A \otimes_\grad B) := A^\twist B = ZA Z^\ast B, \quad A \in \M_x', \, B \in \M_y;
\end{equation}
one checks that this is in fact a $\ast$-representation of $\ncal_{x,y}$.
We also consider the state $\omega_\twist := \hscalar{\Omega}{\pi_\twist(\cdotarg)\Omega}$ on $\ncal_{x,y}$.
\begin{lemma}\label{lemma:normal}
    $\omega_\twist$ is normal.
\end{lemma}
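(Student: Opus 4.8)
The plan is to show that $\omega_\twist$ agrees with a manifestly normal functional on $\ncal_{x,y}$, exploiting the defining structure of $\pi_\twist$ together with the modular nuclearity assumption. First I would record the elementary identity: for $A \in \M_x'$ and $B \in \M_y$ (split as $B = B_+ + B_-$),
\begin{equation*}
   \omega_\twist(A \otimes_\grad B) = \hscalar{\Omega}{ZAZ^\ast B\,\Omega} = \hscalar{Z^\ast\Omega}{A Z^\ast B\,\Omega} = \hscalar{\Omega}{A Z^\ast B\,\Omega},
\end{equation*}
using $Z\Omega = \Omega$ (which follows from $\grad\Omega=\Omega$). Now $Z^\ast B = B_+ - i\grad B_-$, and since $\grad$ commutes with $U$ and fixes $\Omega$, one has $\grad \in \M_x'$ acting from the left on vectors in $\M_y\Omega$; more carefully, I want to move the factor $Z^\ast$ past $B$ so that $\omega_\twist(A\otimes_\grad B)$ is expressed through the ordinary (non-graded) tensor-product state $\hscalar{\Omega\otimes\Omega}{(\,\cdot\,\otimes\,\cdot\,)\,\Omega\otimes\Omega}$ composed with a bounded operation on $\ncal_{x,y}$.

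The cleaner route is to factor $\pi_\twist$ through the non-graded tensor product. Define the unitary $W := (\text{mult.\ by }Z\text{ on the left of }\Hil)$ — no, better: note that the map $m : \M_x' \otimes \M_y \to \boundedops$, $A \otimes B \mapsto AB$, is the restriction of the ordinary product, and it is well known (standard split / commuting von Neumann algebras, valid once we know $\M_x' \subset \M_y'$ which holds since $\delta(y-x)>r>0$ forces $x \in \wedg_L + y$ up to the relevant inclusion, hence $\M_x \supset \M_y$ is false — rather $\M_y \subset \M_x$, so $\M_x' \subset \M_y'$) that $m$ is a normal $\ast$-homomorphism because $\M_x'$ and $\M_y$ commute. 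Then observe $\pi_\twist = m \circ (\operatorname{Ad}Z \otimes \id) \circ \kappa$ where $\kappa : \ncal_{x,y} \to \M_x' \otimes \M_y$ is the canonical normal isomorphism $A \otimes_\grad B \mapsto A \otimes B_+ + A\grad \otimes B_-$ (normal and $\ast$-preserving by the structure of the graded tensor product, cf.\ \cite[Sec.~2.6]{CKL:superconformal}), and $\operatorname{Ad}Z \otimes \id$ is a normal automorphism of $\M_x' \otimes \M_y$ since $Z$ normalizes $\M_x'$ (because $\grad \M_x' \grad^\ast = \M_x'$ by axiom (iv) applied after conjugating with $U(x)$, and $Z$ is a function of $\grad$). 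Being a composition of normal maps, $\pi_\twist$ is normal, hence $\omega_\twist = \hscalar{\Omega}{\pi_\twist(\cdot)\Omega}$ is normal as the composition of a normal representation with a vector state.

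Actually the inclusion needed is just that $\M_x'$ and $\M_y$ commute: for $\ocal_{x,y} = (\wedg_R+x)\cap(\wedg_L+y)$ nonempty we have $\wedg_R + x \supset \wedg_R + y'$... more simply, $\delta(y-x)>0$ means $y-x \in \wedg_R$, so $\wedg_L + y \subset \wedg_L + x + \wedg_R \subset \wedg_L + x$? That is not right either; what is true and all we need is $\M_y \subset \M_x$ (from axiom (ii), $\M_{y-x} \subset \M$ conjugated by $U(x)$, since $y - x \in \wedg_R$), hence $\M_x' \subset \M_y'$, so indeed $\M_x'$ and $\M_y$ commute and $m$ is a normal homomorphism on $\M_x' \otimes \M_y$. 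So the argument closes: $\pi_\twist$ is normal and therefore so is $\omega_\twist$.

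The main obstacle I anticipate is purely bookkeeping: verifying that $\kappa$ really is a well-defined normal $\ast$-isomorphism of $\ncal_{x,y}$ onto a subalgebra of $\M_x' \otimes \M_y$ (rather than merely a linear map on generators), and that $\operatorname{Ad}Z$ genuinely preserves $\M_x'$ — the latter hinges on noting $Z = \tfrac{1-i}{2} + \tfrac{1+i}{2}\grad$ lies in the von Neumann algebra generated by $\grad$, and $\grad \in \M_x'$ is \emph{false} in general, but $\operatorname{Ad}\grad$ still maps $\M_x'$ to $\M_x'$ by hypothesis, which is all that is used. I would make these two points explicit and leave the rest as routine. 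An alternative, if one prefers to avoid discussing $\kappa$, is to argue directly: $\pi_\twist$ is a $\ast$-representation (already noted in the text) of the von Neumann algebra $\ncal_{x,y}$ which is normal on the weakly dense $\ast$-subalgebra of finite sums $\sum A_i \otimes_\grad B_i$ because there each $A_i^\twist B_i$ depends weak-operator-continuously on $(A_i, B_i)$ in the relevant sense; combined with Kaplansky density and the fact that a $\ast$-representation of a von Neumann algebra that is $\sigma$-weakly continuous on the unit ball is normal, one concludes. Either way the content is the same and the write-up is short.
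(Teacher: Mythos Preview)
There is a genuine gap. Your argument hinges on the claim that the multiplication map
\[
   m : \M_x' \,\bar\otimes\, \M_y \longrightarrow \boundedops, \qquad A \otimes B \longmapsto AB,
\]
is a normal $\ast$-homomorphism ``because $\M_x'$ and $\M_y$ commute''. Commutation alone does not give this. For commuting von Neumann algebras $\ncal_1,\ncal_2 \subset \boundedops$, multiplication is certainly a $\ast$-homomorphism on the \emph{algebraic} tensor product $\ncal_1 \odot \ncal_2$, but its extension to a \emph{normal} map on the spatial von Neumann tensor product $\ncal_1\,\bar\otimes\,\ncal_2$ is precisely the split property for the inclusion $\ncal_1 \subset \ncal_2'$. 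Indeed, if $m$ were normal it would be a surjection onto $\ncal_1\vee\ncal_2$ with $\sigma$-weakly closed kernel; when $\ncal_1,\ncal_2$ are type~III factors (the generic situation here), the kernel in the factor $\ncal_1\,\bar\otimes\,\ncal_2$ is trivial and $m$ would be an isomorphism --- which fails whenever split fails (e.g.\ take $\ncal_2=\ncal_1'$: then $\ncal_1\vee\ncal_2=\boundedops$ is type~I while $\ncal_1\,\bar\otimes\,\ncal_1'$ is type~III). So your step assumes exactly what the paper is working towards: Lemma~\ref{lemma:normal} is the key input to Proposition~\ref{prop:unitaryequiv}, which \emph{is} the (graded) split property. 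Invoking split here is circular. Your fallback via Kaplansky density has the same defect: $\sigma$-weak continuity of $\pi_\twist$ on a dense subalgebra does not propagate to the closure without precisely the uniform control that split would give.

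By contrast, the paper's proof uses modular nuclearity directly and never touches normality of $\pi_\twist$. One computes, for $A\in\M'$, $B\in\M_y$ (taking $x=0$),
\[
   \omega_\twist(A\otimes_\grad B) = \hscalar{Z\Delta^{-1/4}A^\ast\Omega}{\Delta^{1/4}B\,\Omega}
   = \hscalar{Z\Psi(A^\ast)}{\Xi_y(B)},
\]
where $\Psi:\M'\to\Hil$, $A\mapsto\Delta^{-1/4}A\Omega$ is bounded and normal, and $\Xi_y$ is nuclear by hypothesis. Writing $\Xi_y=\sum_j \ell_j(\cdot)\psi_j$ with normal $\ell_j$, one obtains $\omega_\twist$ as a norm-convergent series of normal functionals on $\ncal_{x,y}$, hence normal. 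The modular nuclearity condition is thus genuinely used; it cannot be bypassed by algebra alone.
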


\begin{proof}
   Assume without loss of generality that $x=0$ and  $y\in\wedg_R$.
Let $A \in \M_x'=\M'$, $B\in\M_y$. Since $A^\ast\Omega \in \dom \Delta^{-1/4}$, $B\Omega \in \dom \Delta^{+1/4}$, and since $Z$ commutes with $\Delta$, we have
\begin{equation}
   \omega_\twist(A \otimes_\grad B) = \hscalar{Z \Delta^{-1/4} A^\ast \Omega}{ \Delta^{1/4 }B \Omega} = \hscalar{Z\Psi(A^\ast)}{\Xi_y(B)},
\end{equation}
where the map $\Psi: \M' \to \Hil$, $A \mapsto \Delta^{-1/4} A \Omega$ is normal and bounded (cf.~\cite[p.~236]{BuchholzDAntoniLongo:1990-1}) and the map $\Xi_y: \mcal_y \to \Hil$ is nuclear by assumption. Writing $\Xi_y=\sum_j \ell_j(\cdotarg) \psi_j$ with $\psi_j \in \Hil$ and normal functionals $\ell_j$,
we have $\omega_\twist(A \otimes_\grad B) = \sum_j \hscalar{Z\Psi(A^\ast )}{\psi_j} \ell_j(B)$ as a norm-convergent sum. 
Each summand extends to a normal functional on $\ncal_{x,y}$.
Thus $\omega_\twist$ is a norm-convergent sum of normal functionals and hence normal.
\end{proof}

\begin{lemma}\label{lemma:faithful}
    $\omega_\twist$ is faithful.
\end{lemma}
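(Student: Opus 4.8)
The plan is to reduce faithfulness to a concrete assertion about the representation $\pi_\twist$, and then to import the modular nuclearity input exactly along the lines of Buchholz--D'Antoni--Longo, adapted to the $\mathbb{Z}_2$-grading.

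\textbf{Reduction.} Faithfulness of $\omega_\twist = \hscalar{\Omega}{\pi_\twist(\cdot)\Omega}$ means: if $T \in \ncal_{x,y}$, $T \geq 0$ and $\omega_\twist(T) = 0$, then $T = 0$; writing $T = S^\ast S$ with $S = T^{1/2} \in \ncal_{x,y}$ this is the same as the implication $\pi_\twist(S)\Omega = 0 \Rightarrow S = 0$. Now $\pi_\twist(\ncal_{x,y})$ contains $\pi_\twist(\M_x' \otimes_\grad \cbb\idop) = (\M_x')^\twist = Z \M_x' Z^\ast$, and $(\M_x')^\twist\Omega = Z\M_x'\Omega$ is dense (because $\M_x'\Omega = U(x)\M'\Omega$ is dense, $Z$ is unitary, and $Z^\ast\Omega = \Omega$); hence $\Omega$ is cyclic for $\pi_\twist(\ncal_{x,y})$, so $(\pi_\twist, \Hil, \Omega)$ realises the GNS representation of $\omega_\twist$, and faithfulness is equivalent to: $\pi_\twist$ is injective, and $\Omega$ is separating for $\pi_\twist(\ncal_{x,y})$, i.e.\ cyclic for the commutant $\pi_\twist(\ncal_{x,y})' = \M_x^\twist \cap \M_y' = \F(\ocal_{x,y})$. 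Since both of these fail whenever $\F(\ocal_{x,y}) = \cbb\idop$, this is genuinely a nontriviality statement and the modular nuclearity hypothesis must enter.

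\textbf{Using nuclearity.} I would now re-run the non-graded mechanism in the graded setting. By the proof of Lemma~\ref{lemma:normal} (for $x=0$, $y\in\wedg_R$), $\omega_\twist(A\otimes_\grad B) = \hscalar{Z\Delta^{-1/4}A^\ast\Omega}{\Delta^{1/4}B\Omega}$ for $A \in \M'$, $B \in \M_y$, with $A \mapsto \Delta^{-1/4}A\Omega$ bounded and the modular nuclearity map $\Xi_y\colon B \mapsto \Delta^{1/4}B\Omega$ nuclear, $\Xi_y = \sum_j \ell_j(\cdot)\psi_j$ with $\sum_j \gnorm{\ell_j}{}\,\gnorm{\psi_j}{} < \infty$. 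Exactly as in \cite{BuchholzDAntoniLongo:1990-1,BuchholzLechner:2004}, this summable decomposition produces an interpolating type $\mathrm{I}$ factor for the inclusion $\M_y \subseteq \M_x$, and conjugating it by $Z$ furnishes the spatial structure for the twisted pair $\big((\M_x^\twist)',\,\M_y\big)$; from this one reads off on the one hand that $A\otimes_\grad B \mapsto A^\twist B$ is injective, and on the other that $\Omega$ is cyclic for $\F(\ocal_{x,y})$ (the graded Reeh--Schlieder statement for the local field algebra). Together these give faithfulness of $\omega_\twist$. The grading enters only through $Z$, which commutes with $\Delta$ and $J$ (cf.\ the proof of Proposition~\ref{prop:fnet}) and hence with the modular flow, so the even/odd splitting merely rides along; and, crucially, the nuclearity hypothesis and the decomposition of $\Xi_y$ are completely grading-independent, so no new analytic work is required.

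\textbf{Main obstacle.} The one genuinely hard step is the passage from the mere summability $\sum_j \gnorm{\ell_j}{}\,\gnorm{\psi_j}{} < \infty$ to the interpolating type $\mathrm{I}$ factor --- equivalently, to the lower bound $\gnorm{\pi_\twist(S)\Omega}{} \geq c\,\gnorm{S(\Omega\otimes\Omega)}{}$ for $S \in \ncal_{x,y}$, which dominates $\omega_\twist$ from below by a positive multiple of the manifestly faithful state $\hscalar{\Omega\otimes\Omega}{\cdot\,(\Omega\otimes\Omega)}$ (faithful by Lemma~\ref{lemma:ncycsep}, since $\Omega\otimes\Omega$ is separating for $\ncal_{x,y}$). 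This is the heart of the nuclearity method and is taken over essentially verbatim from the non-graded theory; everything around it is $\mathbb{Z}_2$-bookkeeping, which stays painless precisely because $Z$ is built from $\Gamma$ alone and $\Gamma$ commutes with the Tomita--Takesaki data.
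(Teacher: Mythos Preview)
Your proposal takes a genuinely different route from the paper. The paper's argument is short and self-contained: since $\omega_\twist$ is normal (Lemma~\ref{lemma:normal}) and $\ncal_{x,y}$ has the separating vector $\Omega\otimes\Omega$ (Lemma~\ref{lemma:ncycsep}), one writes $\omega_\twist=\hscalar{\hat\Omega}{\,\cdot\,\hat\Omega}$ for some implementing vector $\hat\Omega\in\Hil\otimes\Hil$, arranges $\hscalar{\Omega\otimes\Omega}{\hat\Omega}\neq 0$ by a unitary in the commutant, and then shows $\hat\Omega$ is separating by a clustering argument: if $C\hat\Omega=0$ with $C\in\ncal_{x,y}$, translation covariance and analyticity in the spectrum condition propagate this to $C(U\otimes U)(z)\hat\Omega=0$ for all $z$, and sending $z$ to spacelike infinity collapses onto the vacuum projector, giving $C(\Omega\otimes\Omega)=0$ and hence $C=0$. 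Nuclearity enters only through normality; the rest is pure Borchers-triple input (spectrum condition, uniqueness of the vacuum).

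Your strategy instead inverts the paper's logical order: you want to establish the split/type~I interpolation first (via the full Buchholz--D'Antoni--Longo machinery) and then read off both injectivity of $\pi_\twist$ and Reeh--Schlieder for $\F(\ocal_{x,y})$---which in the paper are \emph{consequences} of faithfulness (Proposition~\ref{prop:unitaryequiv} and Theorem~\ref{thm:gradednet}\eqref{it:reehschlieder}). This is legitimate in principle, and indeed the paper's own footnote to Proposition~\ref{prop:unitaryequiv} sketches exactly this alternative. But two points deserve flagging. First, ``conjugating by $Z$'' does not by itself transport the non-graded split structure to the twisted pair $(\M_x^{\twist\prime},\M_y)$: since $Z\M_y Z^\ast=\M_y^\twist\neq\M_y$ in general, and $\M_x'\otimes_\grad\M_y$ is a genuinely different subalgebra of $\bfk(\Hil\otimes\Hil)$ from $\M_x'\otimes\M_y$ when $\Gamma\notin\M_x'$, one needs the compatibility $V\Gamma=(\Gamma\otimes\Gamma)V$ of the standard implementation (as the footnote indicates), not a mere $Z$-conjugation. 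Second, the lower bound $\gnorm{\pi_\twist(S)\Omega}{}\geq c\,\gnorm{S(\Omega\otimes\Omega)}{}$ you invoke is not what the nuclearity argument in \cite{BuchholzDAntoniLongo:1990-1} directly yields; it gives the type~I interpolation, from which such a bound would still have to be extracted. So your outline is workable but imports substantially heavier machinery and leaves the graded adaptation only gestured at, whereas the paper's clustering proof needs nothing beyond normality of $\omega_\twist$ and the standard analyticity/uniqueness-of-vacuum toolkit.
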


\begin{proof}
  Since $\omega_\twist$ is normal (Lemma~\ref{lemma:normal}) and $\ncal_{x,y}$ has a separating vector (Lemma~\ref{lemma:ncycsep}), it follows that $\omega_\twist= \hscalar{\hat\Omega}{ \cdotarg \hat\Omega}\restriction \ncal_{x,y}$ with some $\hat\Omega\in\Hil$ \cite[Theorem 7.2.3]{KadRin:algebras2}. Replacing $\hat\Omega$ with $V\hat\Omega$ with a suitable unitary $V\in\ncal_{x,y}'$, we can assume that $\hscalar{\Omega\otimes\Omega}{\hat\Omega} \neq 0$ (in view of Lemma~\ref{lemma:ncycsep}).

  We show that $\hat\Omega$ is separating for $\ncal_{x,y}$ (and hence $\omega_\twist$ faithful): Since also $\delta(y-z-x)>r$ for $z$ from some small open set in $\wedg_R$, we can assume that $\omega_\twist$ extends to $\ncal_{x,y-z}$ and is still given there by $\hscalar{\hat\Omega}{\cdotarg \hat\Omega}$. Now let $C \in \ncal_{x,y}$ such that $C\hat\Omega=0$. By translation invariance of $\Omega$, we have $C (U \otimes U)(z) \hat \Omega=0$ for $z$ in some open set. By an analyticity argument (using the spectrum condition for $U$), we find $C (U \otimes U)(z) \hat \Omega=0$ for \emph{all} $z\in\rbb^2$. Taking $z$ to spacelike infinity, $(U\otimes U)(z)$ converges weakly to the projector $P$ onto $\Omega\otimes\Omega$ (due to uniqueness of the invariant vector); hence $CP\hat\Omega=0$. Now $P\hat \Omega \neq 0$ is separating for $\ncal_{x,y}$ (Lemma~\ref{lemma:ncycsep}), hence $C=0$.
\end{proof}

Now we can establish a graded analogue to the ``standard split'' property \cite{DopLon:split}:
\begin{proposition}\label{prop:unitaryequiv}
    There exists a unitary $V:\Hil \to \Hil\otimes\Hil$ such that
    \begin{align}
    V \big( \M_x^{\twist\prime} \vee \M_y \big) V^\ast &= \M_x' \otimes_\grad \M_y, \label{eq:vmmv} \\  
       V A^\twist B V^\ast &= A \otimes_\grad B \quad \text{for all } A \in \M_x',\, B \in \M_y, \label{eq:vabv} \\   
       V \grad &= (\grad \otimes \grad) V.  \label{eq:vg}  
    \end{align}
    Further, $\Omega$ is cyclic and separating for $\M_x^{\twist\prime} \vee \M_y$.
\end{proposition}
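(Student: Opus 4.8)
The plan is to build $V$ as the spatial isomorphism comparing two realisations of $\ncal_{x,y}=\M_x'\otimes_\grad\M_y$: its defining action on $\Hil\otimes\Hil$, which has the cyclic and separating vector $\Omega\otimes\Omega$ (Lemma~\ref{lemma:ncycsep}), and the representation $\pi_\twist$ on $\Hil$, which carries $\Omega$. The key is that both induce the \emph{same} state $\omega_\twist$, which is normal (Lemma~\ref{lemma:normal}) and faithful (Lemma~\ref{lemma:faithful}). Accordingly, let $\hat\Omega\in\Hil\otimes\Hil$ be the unique vector in the natural positive cone of $(\ncal_{x,y},\Omega\otimes\Omega)$ with $\omega_\twist=\hscalar{\hat\Omega}{\cdotarg\hat\Omega}\restriction\ncal_{x,y}$; it exists by normality of $\omega_\twist$ and the standard theory of standard forms, and since $\omega_\twist$ is faithful it is in turn cyclic and separating for $\ncal_{x,y}$. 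Fixing $\hat\Omega$ through the cone, rather than as an arbitrary implementing vector, is what will make the grading relation come out.

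\emph{Construction of $V$.} First note that $\Omega$ is cyclic for $\M_x^{\twist\prime}\vee\M_y$: since $\grad\Omega=\Omega$ we have $Z\Omega=\Omega$, hence $\M_x^{\twist\prime}\Omega=Z\M_x'Z^\ast\Omega=Z\M_x'\Omega$ is dense, $\Omega$ being cyclic for $\M_x'$ (as it is separating for $\M_x$). In particular $\pi_\twist(\ncal_{x,y})\Omega$ is dense in $\Hil$, and I would define $V$ there by $V\pi_\twist(C)\Omega:=C\hat\Omega$ for $C\in\ncal_{x,y}$. Well-definedness and isometry are immediate from $\lVert\pi_\twist(C)\Omega\rVert^2=\omega_\twist(C^\ast C)=\lVert C\hat\Omega\rVert^2$; since $\ncal_{x,y}\hat\Omega$ is dense, $V$ extends to a unitary $\Hil\to\Hil\otimes\Hil$. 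By construction $V\pi_\twist(C)=CV$, so $V\pi_\twist(\ncal_{x,y})V^\ast=\ncal_{x,y}$ and $\pi_\twist=\adj(V^\ast)\restriction\ncal_{x,y}$ is normal; hence $\pi_\twist(\ncal_{x,y})$ is the von Neumann algebra generated by the $A^\twist B$, i.e.\ $\pi_\twist(\ncal_{x,y})=\M_x^{\twist\prime}\vee\M_y$. This gives \eqref{eq:vmmv}, and $VA^\twist BV^\ast=V\pi_\twist(A\otimes_\grad B)V^\ast=A\otimes_\grad B$ gives \eqref{eq:vabv}. Finally $\Omega=V^\ast\hat\Omega$, together with $V\pi_\twist(\ncal_{x,y})V^\ast=\ncal_{x,y}$ and separability of $\hat\Omega$, shows that $\Omega$ is separating for $\M_x^{\twist\prime}\vee\M_y$.

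\emph{The grading relation.} For \eqref{eq:vg} I would first record, by a direct computation with the definitions (using $\grad\M\grad=\M$, $[\grad,U]=0$, $\grad\Omega=\Omega$ and $[\grad,Z]=0$), that $\grad$ normalises $\M_x'$, $\M_x^{\twist\prime}$ and $\M_y$, that $\grad\otimes\grad$ normalises $\ncal_{x,y}$ with $(\grad\otimes\grad)(A\otimes_\grad B)(\grad\otimes\grad)=\alpha(A)\otimes_\grad\alpha(B)$, and that $\pi_\twist\circ\adj(\grad\otimes\grad)=\adj\grad\circ\pi_\twist$; in particular $\omega_\twist$ is $\adj(\grad\otimes\grad)$-invariant. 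Now $\grad\otimes\grad$ is a \emph{unitary} fixing $\Omega\otimes\Omega$ and normalising $\ncal_{x,y}$, so by uniqueness of the modular data of $(\ncal_{x,y},\Omega\otimes\Omega)$ it commutes with the associated modular conjugation and therefore preserves the natural cone. Hence $(\grad\otimes\grad)\hat\Omega$ lies in that cone and implements $\omega_\twist\circ\adj(\grad\otimes\grad)=\omega_\twist$, so uniqueness of the cone representative forces $(\grad\otimes\grad)\hat\Omega=\hat\Omega$. Evaluating $V\grad$ and $(\grad\otimes\grad)V$ on $\pi_\twist(C)\Omega$, and using $\grad\Omega=\Omega$, the relation $\grad\pi_\twist(C)\grad=\pi_\twist(\adj(\grad\otimes\grad)(C))$, and $(\grad\otimes\grad)\hat\Omega=\hat\Omega$, then yields \eqref{eq:vg}.

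\emph{Main obstacle.} Most of the work is already packaged in the three preceding lemmas and in the fact that $\pi_\twist$ is a $\ast$-representation of the graded tensor product --- the twist being precisely what compensates for $\M_x^{\twist\prime}$ and $\M_y$ failing to commute. Given these, the only genuinely new step is \eqref{eq:vg}, and the subtle point there is to pin $\hat\Omega$ down canonically through the natural cone of $(\ncal_{x,y},\Omega\otimes\Omega)$, so that $\grad\otimes\grad$ automatically fixes it; taking for $\hat\Omega$ merely some implementing vector, such as the one produced in the proof of Lemma~\ref{lemma:faithful}, would leave \eqref{eq:vg} out of reach.
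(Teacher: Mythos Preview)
Your proof is correct and follows essentially the same route as the paper. The paper invokes the unitary implementation theorem \cite[Theorem 7.2.9]{KadRin:algebras2} for \eqref{eq:vmmv}--\eqref{eq:vabv} and, for \eqref{eq:vg}, appeals to uniqueness of the \emph{standard} implementation of $\pi_\twist^{-1}$ in the sense of \cite[Appendix]{DopLon:split}, which is precisely the natural-cone construction you spell out by hand; your observation that $\hat\Omega$ must be chosen in the natural cone of $(\ncal_{x,y},\Omega\otimes\Omega)$ so that $\grad\otimes\grad$ fixes it is exactly the content of that uniqueness argument.
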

\begin{proof}
  Since $\omega_\twist$ is faithful and normal, $\pi_\twist$ is a $\ast$-isomorphism. Also, both $\ncal_{x,y}$ and $\pi_\twist(\ncal_{x,y})=\M_x^{\twist\prime} \vee \M_y$ have a cyclic and separating vector, namely $\Omega\otimes \Omega$ and $\Omega$ respectively (the latter is separating since $\omega_\twist$ is faithful). Hence $V$ fulfilling \eqref{eq:vmmv}, \eqref{eq:vabv} exists by the unitary implementation theorem \cite[Theorem 7.2.9]{KadRin:algebras2}. In fact, if we choose $V$ to be the unitary standard implementation \cite[Appendix]{DopLon:split}, then also \eqref{eq:vg} is fulfilled (due to uniqueness of the standard implementation, and since $\grad\Omega=\Omega$ and $(\grad\otimes\grad)(\Omega\otimes\Omega)=\Omega\otimes\Omega$).   
\end{proof}

We can now conclude, mirroring \cite[Theorem~2.5]{Lechner:2008} (see also \cite[Theorem 3.7]{BostelmannCadamuro:examples} for proof techniques):

\begin{theorem}\label{thm:gradednet}
Let $(\M,U,\Omega,\grad)$ be a graded Borchers triple which fulfills modular nuclearity at distance $r$. Then we have
\begin{enumerate}[(a)]
 \item \label{it:locwedge} \emph{Locally generated wedge algebras:} $\F(\wedg_R)=\M^\twist$, $\F(\wedg_L)=\M^{\prime}$;
\end{enumerate}
and for any $x,y$ with $\delta(y-x)>r$:
\begin{enumerate}[(a)]
\setcounter{enumi}{1}
 \item \label{it:reehschlieder} \emph{Reeh-Schlieder property:} $\Omega$ is cyclic and separating for $\F(\ocal_{x,y})$;
 \item \label{it:haagdual} \emph{Twisted Haag duality:} $\F(\ocal_{x,y})'=\F(\ocal_{x,y}')^\twist$;
 \item \label{it:weakadd} \emph{Weak additivity:} If $e\in\rbb^2$ is spacelike or lightlike, then $\bigvee_{t \in \rbb} \F(\ocal_{x,y}+te) = \boundedops$.
\end{enumerate}
 
\end{theorem}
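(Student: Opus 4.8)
The strategy is to reduce all four assertions to the abstract split structure of Proposition~\ref{prop:unitaryequiv}, together with the covariance and twisted-locality already established in Proposition~\ref{prop:fnet}, following the pattern of \cite[Theorem~2.5]{Lechner:2008}. For (\ref{it:locwedge}), one inclusion is immediate: $\M^\twist = \M_0^\twist \cap \M_y'$ for $y$ deep in $\wedg_L$ (so that $\M_y \supset \M$, hence $\M_y' \subset \M'$ contains $\M^\twist$ once one checks $\M^\twist \subset \M'$, which is twisted locality of the wedge pair), giving $\M^\twist \subset \F(\wedg_R)$; and conversely every $\F(\ocal_{x,y})$ with $\ocal_{x,y}\subset\wedg_R$ satisfies $x\in\wedg_R$, so $\M_x^\twist \subset \M^\twist$ by covariance/isotony (axiom (ii) twisted), whence $\F(\wedg_R)\subset\M^\twist$. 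The statement for $\F(\wedg_L)$ follows by applying $U(\refl)=ZJ$ and using $U(\refl)\M^\twist U(\refl)=\M'$ from the proof of Proposition~\ref{prop:fnet}.

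For (\ref{it:reehschlieder}): from Proposition~\ref{prop:unitaryequiv}, $\Omega$ is cyclic and separating for $\M_x^{\twist\prime}\vee\M_y$, hence a fortiori separating for the smaller algebra $\F(\ocal_{x,y}) = \M_x^\twist \cap \M_y' = (\M_x^{\twist\prime}\vee\M_y)'$. Cyclicity of $\Omega$ for $\F(\ocal_{x,y})$ is the harder half and is where the bulk of the work lies: transporting via $V$, one must show $\Omega\otimes\Omega$ is cyclic for $V\F(\ocal_{x,y})V^\ast = (\M_x'\otimes_\grad\M_y)' = \M_x'' \otimes_\grad \M_y' = \M_x \otimes_\grad \M_y'$ — i.e. one needs the commutant of a graded tensor product to be the graded tensor product of the commutants. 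This graded commutation theorem (the analogue of $(\mathcal{R}\otimes\mathcal{S})' = \mathcal{R}'\otimes\mathcal{S}'$) must be invoked or proved; granting it, cyclicity of $\Omega\otimes\Omega$ for $\M_x\otimes_\grad\M_y'$ reduces to cyclicity for the ordinary tensor product $\M_x\otimes\M_y'$ (since the two algebras agree on $\Omega\otimes\Omega$, as in Lemma~\ref{lemma:ncycsep}), which holds because $\Omega$ is cyclic for both $\M_x$ (being separating for $\M_x'$, i.e. cyclic for $\M_x$ by modular theory) and $\M_y'$.

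For (\ref{it:haagdual}): one inclusion, $\F(\ocal_{x,y}')^\twist \subset \F(\ocal_{x,y})'$, is twisted locality (Proposition~\ref{prop:fnet}) rewritten via $\ncal^{\twist\prime}=\ncal'^{\twist}$ and $(\ncal^\twist)^\twist$ relations. For the reverse, one computes $\F(\ocal_{x,y})' = (\M_x^\twist\cap\M_y')' = \M_x^{\twist\prime}\vee\M_y$ and must identify this with $\F(\ocal_{x,y}')^\twist$; here $\ocal_{x,y}'$ is a union of two wedges $(\wedg_L+x)\cup(\wedg_R+y)$ up to the spacelike region, and by the definition \eqref{eq:fgen} together with part (\ref{it:locwedge}) (suitably translated) $\F(\ocal_{x,y}') = \M_x' \vee \M_y^\twist$, whose twist is $(\M_x')^\twist \vee (\M_y^\twist)^\twist = \M_x^\twist \vee \M_y$ — matching. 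The one subtlety is that $\F((\wedg_L+x))=\M_x'$ and $\F((\wedg_R+y))=\M_y^\twist$ require the wedge-algebra identification of (\ref{it:locwedge}) at translated wedges, which follows by covariance, and that joining over all double cones inside $\ocal_{x,y}'$ recovers exactly the join of these two wedge algebras — an application of weak additivity within each wedge.

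For (\ref{it:weakadd}): fix $e$ spacelike or lightlike; by covariance $\F(\ocal_{x,y}+te)=U(\trans_{te})\F(\ocal_{x,y})U(\trans_{te})^\ast$, and one wants $\bigvee_{t}$ of these to be all of $\boundedops$. The standard argument: the von Neumann algebra $\mathcal{R}:=\bigvee_t \F(\ocal_{x,y}+te)$ is translation-invariant in the $e$-direction, and since $e$ is spacelike or lightlike one can fit translates $\ocal_{x,y}+te$ (for $|t|$ large) inside both $\wedg_R$ and $\wedg_L$ after a further translation, so by (\ref{it:locwedge}) and covariance $\mathcal{R}$ contains a translate of $\M^\twist$ and a translate of $\M'$; more carefully, $\mathcal{R} \supset \F(\wedg_R + a) \vee \F(\wedg_L + a) = \M_a^\twist \vee \M_a'$ for suitable $a$ once enough translates are collected, and $\M^\twist \vee \M' \supset \M \vee \M' = \boundedops$ by factoriality of $\M$ (which itself follows from modular nuclearity via the split property — $\M$ has trivial center). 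Hence $\mathcal{R}=\boundedops$. The main obstacle throughout is the graded commutation theorem for $\otimes_\grad$ needed in part (\ref{it:reehschlieder}) and the bookkeeping of twists when intersecting and commuting the wedge algebras in parts (\ref{it:haagdual}) and (\ref{it:weakadd}); once the identity $(\mathcal{A}\otimes_\grad\mathcal{B})' = \mathcal{A}'\otimes_\grad\mathcal{B}'$ and the basic twist-manipulation rules ($(\ncal^\twist)' = (\ncal')^\twist$, $(\ncal^\twist)^\twist$ acting as $\adj\grad$ on the odd part) are in hand, each of (a)–(d) is a short computation.
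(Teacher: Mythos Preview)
Your proposal has genuine gaps in parts (\ref{it:locwedge}) and (\ref{it:weakadd}), and takes an unnecessary detour in (\ref{it:reehschlieder}).

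\textbf{Part (\ref{it:locwedge}).} You have the two directions reversed in difficulty. The inclusion $\F(\wedg_R)\subset\M^\twist$ is the easy one, and your argument for it (via $\ocal_{x,y}\subset\wedg_R \Rightarrow x\in\overline{\wedg_R} \Rightarrow \M_x^\twist\subset\M^\twist$) is correct. But the reverse inclusion $\M^\twist\subset\F(\wedg_R)$ is the substantive claim --- it says the wedge algebra is \emph{generated} by the strictly local algebras inside it --- and your argument for it fails. You write ``$\M^\twist = \M_0^\twist\cap\M_y'$ for $y$ deep in $\wedg_L$'', but if $y\in\wedg_L$ then $\ocal_{0,y}$ is not a double cone at all (one needs $y\in\wedg_R$ for $\delta(y)>0$), so no such term appears in the definition \eqref{eq:fgen} of $\F(\wedg_R)$. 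More seriously, your supporting claim ``$\M^\twist\subset\M'$, which is twisted locality of the wedge pair'' is false: twisted locality for the pair $(\M^\twist,\M')$ reads $\M'\subset(\M^\twist)^{\twist\prime}=\M'$, a triviality; and in the trivially graded case $\grad=\idop$ your claim would become $\M\subset\M'$, forcing $\M$ abelian. The paper instead proves (\ref{it:reehschlieder}) \emph{first}, then uses it for (\ref{it:locwedge}): cyclicity of $\Omega$ for the local algebras makes $\F(\wedg_R)\Omega$ dense and $\Delta^{it}$-invariant, hence a core for $\Delta^{1/2}$, which forces the inclusion $\F(\wedg_R)\subset\M^\twist$ to be strong-operator dense via \cite[Theorem~9.2.36]{KadRin:algebras2}.

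\textbf{Part (\ref{it:reehschlieder}).} You overcomplicate the cyclic half. Proposition~\ref{prop:unitaryequiv} already gives that $\Omega$ is cyclic \emph{and} separating for $\M_x^{\twist\prime}\vee\M_y$; it is then cyclic and separating for the commutant $(\M_x^{\twist\prime}\vee\M_y)'=\F(\ocal_{x,y})$ by the elementary equivalence ``$\Omega$ cyclic for $\ncal$ $\Leftrightarrow$ $\Omega$ separating for $\ncal'$''. No graded commutation theorem for $\otimes_\grad$ is required.

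\textbf{Part (\ref{it:weakadd}).} The join $\bigvee_t \F(\ocal_{x,y}+te)$ runs over double cones of \emph{fixed} size translated along a line; their union does not cover any wedge, and there is no a~priori reason their join should contain $\M_a^\twist$ or $\M_a'$ for any $a$ --- part (\ref{it:locwedge}) tells you wedge algebras are generated by \emph{all} double cones inside them, not by a one-parameter family. The paper argues directly: for $B$ in the commutant of the join and $A\in\F(\ocal_{x,y})$, the function $t\mapsto\hscalar{A\Omega}{U(-te)B\Omega}$ extends boundedly and analytically to both half-planes (spectrum condition plus $[B,U(te)AU(te)^\ast]=0$), hence is constant; sending $t\to\infty$ and using uniqueness of $\Omega$ gives $B\Omega\in\cbb\Omega$, whence $B\in\cbb\idop$ by (\ref{it:reehschlieder}).

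\textbf{Part (\ref{it:haagdual}).} Your approach matches the paper's, modulo a slip in the twist bookkeeping: $(\M_x')^\twist = \M_x^{\twist\prime}$, not $\M_x^\twist$; with that correction your computation yields $\M_x^{\twist\prime}\vee\M_y$, which is indeed $\F(\ocal_{x,y})'$.
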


\begin{proof}

We first show (\ref{it:reehschlieder}): By Proposition~\ref{prop:unitaryequiv}, $\Omega$ is cyclic and separating for $\M_x^{\twist\prime} \vee \M_y$
and hence also for $(\M_x^{\twist\prime} \vee \M_y)'=\M_x^\twist \cap \M_y'=\F(\ocal_{x,y})$.

In (\ref{it:locwedge}), we prove the equality for $\M^\twist$, the other follows from covariance. First note that the modular operator for $(\M^\twist,\Omega)$ is $Z \Delta Z^\ast=\Delta$. Then consider the subspace $\hcal_R :=  \F(\wedg_R) \Omega \subset \M^\twist\Omega\subset \operatorname{dom} \Delta^{1/2}$, which is dense in $\Hil$ by (\ref{it:reehschlieder}), and invariant under $\Delta^{it}$ by Lorentz covariance of $\F$. Then $\hcal_R$ is a core for $\Delta^{1/2}$, cf.~\cite[Ch.~II Prop.~1.7]{EngelNagel:semigroups}, which implies that $\F(\wedg_R) \subset \M^\twist$ is strong-operator dense \cite[Theorem~9.2.36]{KadRin:algebras2}.

Regarding (\ref{it:haagdual}), observe that 
\begin{equation}
\begin{split}
 \F(\ocal_{x,y})' = \M_x^{\twist\prime} \vee \M_y = \F((\wedg_L+x)  \cup (\wedg_R+y))^\twist = \F(\ocal_{x,y}')^\twist,
\end{split}
\end{equation}
where the second equality makes use of (\ref{it:locwedge}).

Part (\ref{it:weakadd}) follows by a standard argument \cite[Appendix]{Kuckert:regions}: Let $A \in \F(\ocal_{x,y})$ and $B \in (\bigvee_{t\in\rbb} \F(\ocal_{x,y} + te))'$. The spectrum condition for $U$ implies that $f(t):=\hscalar{A\Omega}{U(-te) B \Omega} = \hscalar{B^\ast\Omega}{U(te) A^\ast \Omega}$ has bounded analytic continuations to both the upper and the lower halfplanes; hence $f$ must be constant. Now as $t \to \infty$, due to uniqueness of $\Omega$, the unitaries $U(te)$ weakly converge to the projector onto $\Omega$; therefore $\hscalar{A\Omega}{B \Omega} = \hscalar{A \Omega}{\Omega} \hscalar{\Omega}{B\Omega}$. Then (\ref{it:reehschlieder}) implies $B\Omega = \hscalar{\Omega}{B\Omega} \Omega$ and further $B = \hscalar{\Omega}{B\Omega} \idop$, showing the claim.
\end{proof}

\section{Observable algebras and disorder operators}\label{sec:obs}

We now focus our attention on the ``observable algebras'' defined by 
\begin{equation}
\A(\ocal_{x,y}):=\F(\ocal_{x,y})_+. 
\end{equation}
This automatically yields a $\pcal_+$-covariant, local (rather than twisted-local) net, as one easily checks. Assuming modular nuclearity, it also fulfills the Reeh-Schlieder property in its two naturally given representations $\pi_+$ and $\pi_-$ with respect to $\Omega$ and $F\Omega$, respectively, where $F$ is any fixed invertible element of $\F(\ocal_{x,y})_-$.

However, the net $\A$ typically lacks Haag duality; that is, in general, $\pi_\pm(\A(\ocal)) \subsetneq \pi_\pm(\A(\ocal'))'$. This may roughly be seen as follows: pick two odd operators $F_L$ and $F_R$ in $\F(\wedg_L+x)$ and $\F(\wedg_R+y)$ respectively; then the product $F_LF_R$ is even and commutes with $\A(\ocal_{x,y})$, but is not contained in $\A(\ocal_{x,y}')=\F(\wedg_{L}+x)_+\vee \F(\wedg_{R}+y)_+$, since the individual factors $F_L$, $F_R$ are not even. Alternatively, if one can find an even unitary $V$ which \emph{anti}commutes with $\F(\wedg_L+x)_-$ but \emph{commutes} with $\F(\wedg_R+y)$ (that is, $\operatorname{ad} V = \operatorname{ad} \grad$ ``on the left'' but $\operatorname{ad} V = \operatorname{id}$ ``on the right''), then this operator is contained in $\A(\ocal_{x,y}')'$ but not in $\A(\ocal_{x,y})$. Such $V$ we will call \emph{disorder operators}.\footnote{The nomenclature stems from the analogy with the disorder operators in the Ising model obtained as a lattice limit \cite{SchroerTruong:1978}, although we stress that this model arises via \emph{non-twisted} Borchers triples \cite{Lechner:2008}.}

We will clarify this phenomenon in the following; it is essentially a special case of the analysis in \cite{Mueger:qdouble}, although the arguments are more direct in our context. We define formally:
\begin{definition}\label{def:disorder}
Given a double cone $\ocal_{x,y}$, we call $V\in\boundedops$ a \emph{left [right] disorder operator} in $\ocal_{x,y}$ if $V$ is unitary, commutes with $\grad$, and if 
\begin{equation}\label{eq:leftdis}
\begin{aligned}
   V A V^\ast &= \grad A \grad^\ast  & \text{for all $A \in \M_x'$ [all $A \in \M_y$]}, 
   \\
   V A V^\ast &= A &\text{for all $A \in \M_y$ [all $A \in \M_x'$]}.
\end{aligned}
\end{equation}
We denote the set of these $V$ as $\diso_L(\ocal_{x,y})$ [$\diso_R(\ocal_{x,y})$]. 
\end{definition}
Since $V$ commutes with $\grad$, we may replace $\M_x'$ and/or $\M_y$ in the conditions \eqref{eq:leftdis} above with their twisted counterparts. The $\diso_{L/R}(\ocal_{x,y})$ are selfadjoint sets, but not usually algebras.  We further note:

\begin{lemma}\label{lemma:disorder}
The sets of disorder operators fulfill:
 \begin{enumerate}[(a)]
  \item \label{it:visot} If $\ocal_{x,y} \subset \ocal_{z,w}$ then $\diso_{L}(\ocal_{x,y}) \subset \diso_L(\ocal_{z,w})$;
  \item \label{it:vpoi} For $g \in \pcal_{+}^\uparrow$, we have $U(g)\diso_{L}(\ocal_{x,y})U(g)^\ast = \diso_{L}(\ocal_{g.x,g.y})$; 
  \item \label{it:vrefl} $\diso_R(\ocal_{x,y}) = U(\refl) \diso_L(\ocal_{-y,-x}) U(\refl)$;
  \item \label{it:gammalr} $\diso_R(\ocal_{x,y}) = \grad \diso_L(\ocal_{x,y})$;
  \item \label{it:vlocaltf} If $V \in \diso_{L}(\ocal_{x,y})$, then $V \F(\ocal_{x,y}) V^\ast \subset \F(\ocal_{x,y})$;
  \item \label{it:vlunique} If $V,\hat V \in \diso_{L}(\ocal_{x,y})$, then $V = A \hat V$ with some unitary $A \in \A(\ocal_{x,y})$;
  %\item If $V \in \diso_{L}(\ocal_{x,y})$, $\hat V \in \diso_{R}(\ocal_{x,y})$, then $V\hat V = A \grad $ with $A \in \A(\ocal_{x,y})$ unitary;
\item \label{it:vnontriv} If the Borchers triple fulfills the modular nuclearity condition with distance $r$, and $\delta(y-x)>r$, then $\diso_{L}(\ocal_{x,y})$ is not empty.

  \end{enumerate}
  \smallskip
 
 \noindent
 Corresponding statements hold with the roles of $\diso_L$ and $\diso_R$ reversed.
\end{lemma}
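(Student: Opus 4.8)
The plan is to verify the seven items separately. Items (a)--(f) are elementary consequences of the defining relations in Definition~\ref{def:disorder} together with facts already in place, while (g) is the only point with real content and rests on Proposition~\ref{prop:unitaryequiv}. For (a) one notes that $\ocal_{x,y}\subset\ocal_{z,w}$ forces $\wedg_R+x\subset\wedg_R+z$ and $\wedg_R+w\subset\wedg_R+y$, hence $\M_z'\subset\M_x'$ and $\M_w\subset\M_y$ by isotony of $z\mapsto\M_z$ (Definition~\ref{def:triple}(ii) together with the $\pcal_+^\uparrow$-covariance of Proposition~\ref{prop:fnet}); the two conditions defining $\diso_L(\ocal_{z,w})$ are then weaker than those defining $\diso_L(\ocal_{x,y})$. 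For (b), conjugating \eqref{eq:leftdis} by $U(g)$ and using $[\grad,U(g)]=0$ and $U(g)\M_xU(g)^\ast=\M_{g.x}$ gives one inclusion, and applying this to $g^{-1}$ gives the other. Item (c) is analogous with $U(\refl)=ZJ$: one has $[\grad,U(\refl)]=[Z,U(\refl)]=0$, $U(\refl)^2=\idop$ (using $JZ=Z^\ast J$), and $U(\refl)\M_x^\twist U(\refl)=\M_{-x}'$, so conjugation by $U(\refl)$ sends $\M_{-y}'^\twist$ to $\M_y$ and $\M_{-x}^\twist$ to $\M_x'$; this interchanges the two lines of \eqref{eq:leftdis} and converts a left disorder operator in $\ocal_{-y,-x}$ into a right disorder operator in $\ocal_{x,y}$, where I use the remark after Definition~\ref{def:disorder} that the conditions can be imposed equally on the algebras or on their twists (as $V$ commutes with $Z$). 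For (d) I set $V':=\grad V$ and observe that, since $V$ commutes with $\grad$ and $\grad^2=\idop$, conjugation by $V'$ realizes $\adj\grad$ precisely where conjugation by $V$ was the identity and vice versa, yielding an involutive bijection $\diso_L(\ocal_{x,y})\leftrightarrow\diso_R(\ocal_{x,y})$.

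Items (e) and (f) are pure commutation bookkeeping. For (e), given $F\in\F(\ocal_{x,y})=\M_x^\twist\cap\M_y'$ and $V\in\diso_L(\ocal_{x,y})$: since conjugation by $V$ is the identity on $\M_y$ and $F\in\M_y'$, one finds that $VFV^\ast$ commutes with $\M_y$; and since conjugation by $V$ equals $\adj\grad$ on $(\M_x^\twist)'=(\M_x')^\twist$, since $\grad$ normalizes $\M_x'$ (Definition~\ref{def:triple}(iv) and covariance), and since $F$ commutes with $(\M_x^\twist)'$, one finds that $VFV^\ast$ commutes with $(\M_x^\twist)'$; hence $VFV^\ast\in\M_x^\twist\cap\M_y'$. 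For (f), with $V,\hat V\in\diso_L(\ocal_{x,y})$ put $A:=V\hat V^\ast$; the same computation shows $A\in\M_x^\twist\cap\M_y'=\F(\ocal_{x,y})$, and $\grad A\grad^\ast=A$ because $V$ and $\hat V$ both commute with $\grad$, so $A\in\F(\ocal_{x,y})_+=\A(\ocal_{x,y})$ and $V=A\hat V$.

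The substance is (g). Assuming modular nuclearity at distance $r<\delta(y-x)$, let $V\colon\Hil\to\Hil\otimes\Hil$ be the standard-implementation unitary furnished by Proposition~\ref{prop:unitaryequiv}, so that \eqref{eq:vmmv}--\eqref{eq:vg} hold, and put $W:=V^\ast(\grad\otimes\idop)V$; this is unitary, and \eqref{eq:vg} gives $W\grad=\grad W$. Taking $B=\idop$ in \eqref{eq:vabv} yields $VA^\twist V^\ast=A\otimes\idop$ for all $A\in\M_x'$; applying this (also to $\grad A\grad^\ast\in\M_x'$) and conjugating with $\grad\otimes\idop$ gives $WA^\twist W^\ast=(\grad A\grad^\ast)^\twist$, which, since $W$ commutes with $Z$, is equivalent to $WAW^\ast=\grad A\grad^\ast$. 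Taking $A=\idop$ in \eqref{eq:vabv} yields $VBV^\ast=\idop\otimes B_++\grad\otimes B_-$ for all $B\in\M_y$, an operator fixed by conjugation with $\grad\otimes\idop$, so $WBW^\ast=B$. These three facts say precisely that $W\in\diso_L(\ocal_{x,y})$, so this set is non-empty, and (d) then makes $\diso_R(\ocal_{x,y})$ non-empty as well; the remaining reversed statements follow from (d) combined with (a)--(g), or by symmetric arguments.

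The main obstacle is (g): it hinges entirely on having a split implementer on $\Hil\otimes\Hil$ that intertwines the gradings in the sharp form \eqref{eq:vg}, and on isolating from \eqref{eq:vabv} the two ``one-sided'' identities $VA^\twist V^\ast=A\otimes\idop$ and $VBV^\ast=\idop\otimes B_++\grad\otimes B_-$, which are exactly what makes $V^\ast(\grad\otimes\idop)V$ act by $\adj\grad$ on the left wedge algebra and trivially on the right one; once Proposition~\ref{prop:unitaryequiv} is available, the rest is routine.
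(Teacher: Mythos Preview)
Your proof is correct and follows the same route as the paper's (very terse) argument; in particular, for (g) you construct exactly the paper's candidate $V^\ast(\grad\otimes\idop)V$ and supply the verification that the paper leaves to the reader, and for (f) you use the same commutant computation $V\hat V^\ast\in(\M_x^{\twist\prime}\vee\M_y)'=\F(\ocal_{x,y})$.

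One small correction in (c): the claim $[Z,U(\refl)]=0$ is false. With $U(\refl)=ZJ$ and $JZ=Z^\ast J$ one finds $ZU(\refl)=Z^2J=\grad J$ while $U(\refl)Z=ZJZ=ZZ^\ast J=J$, so $Z$ and $U(\refl)$ do not commute unless $\grad=\idop$. Fortunately you do not actually use this: your argument only needs $[\grad,U(\refl)]=0$, $U(\refl)^2=\idop$, and $U(\refl)\M_{-x}^\twist U(\refl)=\M_x'$ (hence $U(\refl)(\M_{-y}')^\twist U(\refl)=(U(\refl)\M_{-y}^\twist U(\refl))'=\M_y$), together with the remark that the disorder conditions can be imposed on the twisted algebras because $V$ commutes with $\grad$ and therefore with $Z$. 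Simply delete the claim $[Z,U(\refl)]=0$; the rest of your argument for (c) stands.
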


\begin{proof} 
 (\ref{it:visot}), (\ref{it:vpoi}), (\ref{it:vrefl}) follow from covariance properties of $\M_x$ as in Proposition~\ref{prop:fnet}, and (\ref{it:gammalr}), (\ref{it:vlocaltf}) are clear from Definition~\ref{def:disorder}. For (\ref{it:vlunique}), one notes that $V \hat V^\ast$ commutes with $\M_x^{\twist\prime} \vee \M_y = \F(\ocal_{x,y})^{\prime}$, hence $V \hat V^\ast\in\F(\ocal_{x,y})$; but $V \hat V^\ast$ is even by definition.   
 For (\ref{it:vnontriv}), set $V_L:= V^\ast (\grad \otimes \idop)V$ with $V$ the unitary from Proposition~\ref{prop:unitaryequiv}; one checks that $V_L \in\diso_{L}(\ocal_{x,y})$.
\end{proof}

In particular, there is at most one (and under nuclearity assumptions, exactly one) left/right disorder operator up to ``local multiples''.
We now discuss the relation between $\A(\ocal_{x,y}')$ and $\F(\ocal_{x,y}')$. 

\begin{proposition}\label{prop:adual}
Suppose that $\diso_L(\ocal_{x,y}) \neq \emptyset$. Then
\begin{enumerate}[(a)]
\item \label{it:afv}  $\A(\ocal_{x,y}')' = \F(\ocal_{x,y}) \vee \diso_L(\ocal_{x,y}) \vee \diso_R(\ocal_{x,y})$,
\item \label{it:anondual}  $\pi_\pm(\A(\ocal_{x,y}'))' = \pi_\pm \big( \A(\ocal_{x,y}) \vee \diso_L(\ocal_{x,y}) \big)$.
\end{enumerate}
\end{proposition}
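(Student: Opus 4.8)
The strategy is to reduce everything to wedge algebras via Theorem~\ref{thm:gradednet}(\ref{it:haagdual}) (twisted Haag duality) and the properties of disorder operators collected in Lemma~\ref{lemma:disorder}. For part~(\ref{it:afv}), the first step is to identify $\A(\ocal_{x,y}')$ concretely. Since the grading-even part of a wedge field algebra is generated by even elements, and products of two odd operators localized in \emph{causally separated} wedges are even, we have $\A(\ocal_{x,y}') = \F(\wedg_L+x)_+ \vee \F(\wedg_R+y)_+$. Taking commutants and using that $\F(\wedg_L+x)=\M_x'$, $\F(\wedg_R+y)=\M_y^\twist$ are factors with $\Omega$ cyclic and separating, one computes $\A(\ocal_{x,y}')'$ by the standard description of the commutant of an even subalgebra: the commutant of $\ncal_+$ inside $\boundedops$ is generated by $\ncal'$ together with any operator implementing $\adj\grad$ on $\ncal$ while acting trivially elsewhere. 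Concretely, $(\M_x')_+{}' = \M_x' \vee \diso_R(\ocal_{x,y})$ (a right disorder operator flips $\adj\grad$ on $\M_y$, hence by Lemma~\ref{lemma:disorder}(\ref{it:gammalr}) is $\grad$ times a left one, and this is exactly what is needed to generate the commutant of the even part on the left factor), and symmetrically on the right; intersecting, $\A(\ocal_{x,y}')' = \M_x^\twist \cap \M_y' \vee \diso_L \vee \diso_R = \F(\ocal_{x,y}) \vee \diso_L(\ocal_{x,y}) \vee \diso_R(\ocal_{x,y})$, using Lemma~\ref{lemma:disorder}(\ref{it:vlocaltf}) to see the disorder operators are compatible. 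One must check carefully that nothing \emph{beyond} $\F(\ocal_{x,y})$ and the two disorder lines appears; here Lemma~\ref{lemma:disorder}(\ref{it:vlunique}) (uniqueness up to local even multiples) is the key input, ensuring $\diso_L$ contributes ``one generator's worth'' modulo $\A(\ocal_{x,y})\subset\F(\ocal_{x,y})$.

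For part~(\ref{it:anondual}), I would restrict the identity from~(\ref{it:afv}) to $\Hil_\pm$ via $\pi_\pm$. The point is that $\pi_\pm$ is a representation of the \emph{even} algebra $\A(\ocal_{x,y}')$, so $\pi_\pm(\A(\ocal_{x,y}'))'$ is the commutant taken on $\Hil_\pm$, which is strictly larger than $\pi_\pm$ applied to the full-space commutant. One shows $\pi_\pm(\A(\ocal_{x,y}'))' = \pi_\pm\big((\A(\ocal_{x,y}')_{\text{in }\boundedops}){}'\big)''$ suitably interpreted; more precisely, the even part of $\A(\ocal_{x,y}')' = \F(\ocal_{x,y})\vee\diso_L\vee\diso_R$ is what survives, because on $\Hil_\pm$ the grading acts as a scalar and so $\diso_R = \grad\diso_L$ becomes proportional to $\diso_L$, collapsing the two disorder lines into one; and the odd part of $\F(\ocal_{x,y})$ is killed. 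Hence $\pi_\pm(\A(\ocal_{x,y}'))' = \pi_\pm\big(\F(\ocal_{x,y})_+ \vee \diso_L(\ocal_{x,y})\big) = \pi_\pm\big(\A(\ocal_{x,y}) \vee \diso_L(\ocal_{x,y})\big)$. The content of ``$'$ commutes with $\pi_\pm$ up to taking even parts'' is a standard fact about representations of even subalgebras (it appears in M\"uger's analysis and in the Doplicher--Haag--Roberts literature); I would cite \cite{Mueger:qdouble} and spell out the short argument using the projections $\alpha_\pm$ and the fact that $\Hil_\pm$ is the range of $\frac{1}{2}(1\pm\grad)$, which lies in $\diso_L'$.

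\textbf{Main obstacle.} The delicate point is part~(\ref{it:afv}): proving that $\A(\ocal_{x,y}')'$ is \emph{exactly} $\F(\ocal_{x,y}) \vee \diso_L \vee \diso_R$ and not larger. The inclusion ``$\supseteq$'' is immediate from the defining (anti)commutation relations, but ``$\subseteq$'' requires showing that any operator commuting with both $\F(\wedg_L+x)_+$ and $\F(\wedg_R+y)_+$ can be written in terms of the wedge-intersection algebra and the two disorder operators. The natural route is: given $T$ in the commutant, use the split unitary $V$ of Proposition~\ref{prop:unitaryequiv} to transport the problem to the graded tensor product $\M_x'\otimes_\grad\M_y$, where the commutant of the even part factorizes transparently; the even-part commutant of a graded tensor product is generated by the graded tensor product of the commutants together with the two ``one-sided grading'' operators $\grad\otimes\idop$ and $\idop\otimes\grad$ (equivalently $\grad\otimes\grad$ and one of them) — and pulling back by $V^\ast$ turns these precisely into the disorder operators, by the very construction $V_L = V^\ast(\grad\otimes\idop)V$ in the proof of Lemma~\ref{lemma:disorder}(\ref{it:vnontriv}). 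So the real work is the tensor-product computation and bookkeeping of which grading-twisted operators generate which commutants; once that is set up, everything else is formal.
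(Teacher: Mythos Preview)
Your route for (a) is genuinely different from the paper's, and more laborious. The paper does not attempt to compute $(\F(\wedg_L+x)_+)'$ and $(\F(\wedg_R+y)_+)'$ separately and then intersect. Instead it fixes $V_L\in\diso_L(\ocal_{x,y})$, $V_R:=\grad V_L\in\diso_R(\ocal_{x,y})$, and observes that
\[
   m(A)\;:=\;\tfrac14\big(A+V_LAV_L^\ast+V_RAV_R^\ast+V_LV_RAV_R^\ast V_L^\ast\big)
\]
is a (normal) conditional expectation of $\F(\ocal_{x,y}')$ onto $\A(\ocal_{x,y}')$; this is immediate once one checks that $\adj V_L$ acts as $\adj\grad$ on $\F(\wedg_L+x)$ and trivially on $\F(\wedg_R+y)$, and symmetrically for $V_R$. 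Hence
\[
   \A(\ocal_{x,y}')=\F(\ocal_{x,y}')\cap\{V_L\}'\cap\{V_R\}',
\]
and taking commutants gives (a) directly, with Lemma~\ref{lemma:disorder}(\ref{it:vlunique}) only used at the very end to identify the algebra generated by $\{V_L,V_R\}$ with $\F(\ocal_{x,y})\vee\diso_L\vee\diso_R$. No tensor-product computation and no split unitary are needed; the ``distributivity of $\cap$ over $\vee$'' problem you flag simply does not arise.

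A couple of slips in your sketch: your formula $(\M_x')_+{}'=\M_x'\vee\diso_R(\ocal_{x,y})$ is wrong on both counts. The commutant of $(\M_x')_+$ is $\M_x\vee\{\grad\}$ (since $(\M_x')_+=\M_x'\cap\{\grad\}'$), and if you insist on a disorder operator as implementer it is $V_L$, not $V_R$, that acts as $\adj\grad$ on $\M_x'$. Also, your fallback via Proposition~\ref{prop:unitaryequiv} imports modular nuclearity, which is strictly stronger than the hypothesis $\diso_L(\ocal_{x,y})\neq\emptyset$ actually stated.

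For (b) your plan is essentially the paper's one-line argument: since $\diso_R=\grad\,\diso_L$ and $\pi_\pm(\grad)=\pm\idop$, the two disorder sets collapse under $\pi_\pm$; and compression by $P_\pm=\tfrac12(1\pm\grad)\in\A(\ocal_{x,y}')'$ kills the odd part of $\F(\ocal_{x,y})$, leaving $\A(\ocal_{x,y})$. The general ``commutant commutes with $\pi_\pm$'' statement you allude to is just the standard fact $(P\ncal P)'=P\ncal'P$ for a projection $P\in\ncal$, applied with $\ncal=\A(\ocal_{x,y}')'$.
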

Thus the (essentially unique) disorder operator is the reason for Haag duality to fail (unless, of course, $\grad=\idop$, in which case $\idop$ is a disorder operator, and Haag duality holds).

\begin{proof}
It is clear that
\begin{equation}\label{eq:afxy}
   \A(\ocal_{x,y}') = \A(\wedg_{L}+x) \vee \A(\wedg_R+y) ,
\quad
   \F(\ocal_{x,y}') = \F(\wedg_{L}+x) \vee \F(\wedg_R+y) .
\end{equation}
Now pick $V_L\in\diso_L(\ocal_{x,y})$ and set $V_R:=\grad V_L \in\diso_R(\ocal_{x,y})$ (Lemma~\ref{lemma:disorder}(\ref{it:gammalr})). Defining 
\begin{equation}
   m(A) := \frac{1}{4} \big( A + V_L A V_L^\ast + V_R A V_R^\ast + V_L V_R A V_R^\ast V_L^\ast \big),
\end{equation}
one verifies using \eqref{eq:afxy} that $m$ is a conditional expectation from $\F(\ocal_{x,y}')$ onto $\A(\ocal_{x,y}')$.
Now for $A \in \F(\ocal_{x,y}')$, it holds that
\begin{equation}
   A \in \A(\ocal_{x,y}') \quad\Leftrightarrow\quad m(A)=A \quad\Leftrightarrow\quad [A,V_L] = 0=[A,V_R].
\end{equation}
Thus
\begin{equation}
   \A(\ocal_{x,y}') = \F(\ocal_{x,y}') \cap \{V_L\}' \cap \{V_R\}',
\end{equation}
and taking commutants yields (\ref{it:afv}) together with Lemma~\ref{lemma:disorder}(\ref{it:vlunique}).---Now (\ref{it:anondual}) follows using Lemma~\ref{lemma:disorder}(\ref{it:gammalr}), since $\pi_\pm(\grad)=\pm 1$. 
\end{proof}

To put this in a somewhat more natural context, we follow \cite{Mueger:qdouble} to define the extended algebras
\begin{equation}
   \hat \F(\ocal_{x,y}) := \F(\ocal_{x,y}) \vee \diso_L(\ocal_{x,y}),
\quad
\hat \A(\ocal_{x,y}) := \hat\F(\ocal_{x,y})_+.
\end{equation}
One checks using Lemma~\ref{lemma:disorder}(\ref{it:visot}),(\ref{it:vpoi}) that $\hat\F$ extends to an isotonous, $\pcal_+^\uparrow$-covariant net, although it is in general neither local nor twisted local. Likewise, $\hat\A$ extends to an isotonous, $\pcal_+$-covariant, and \emph{local} net (using Lemma~\ref{lemma:disorder}(\ref{it:vrefl}) and the fact that the disorder operators commute with all even operators).
Further, with modular nuclearity assumed, Lemma~\ref{lemma:disorder}(\ref{it:vnontriv}) and Proposition~\ref{prop:adual}(\ref{it:anondual}) imply that for sufficiently large double cones,
\begin{equation}
  \pi_\pm(\hat\A(\ocal_{x,y})) = \pi_\pm(\A(\ocal_{x,y}'))';
\end{equation}
that is, $\hat \A$ is the dual net of $\A$ there; in particular, $\hat \A$ fulfills Haag duality for these double cones in the sectors $\pi_\pm$.

Apart from the automorphism $\alpha = \adj \grad$ on $\hat\F$, we can define another automorphism $\beta$ as follows. 
Fixing a particular $V \in \diso_L(\ocal_{x,y})$, every $A \in \hat\F(\ocal_{x,y})$ has a unique decomposition of the form
\begin{equation}
   A = A_1 + A_2 V, \quad A_{1,2} \in \F(\ocal_{x,y}).
\end{equation}
We then define
\begin{equation}
   \beta(A) := A_1 - A_2 V.
\end{equation}
This map $\beta$ is in fact independent of the choice of $V$ (Lemma~\ref{lemma:disorder}(\ref{it:vlunique})) 
and of the choice of double cone (Lemma~\ref{lemma:disorder}(\ref{it:visot})), 
hence we obtain a linear map on the quasilocal algebra of $\hat \F$; with Lemma~\ref{lemma:disorder}(\ref{it:vlocaltf}),(\ref{it:vlunique}), one sees that $\beta$ is a $\ast$-automorphism. 

We thus find a simple example of a ``quantum double'' symmetry in the sense of Drinfel'd \cite{Drinfeld:double,Mueger:qdouble}; however, while $\alpha$ is unitarily implemented with invariant vacuum vector, $\beta$ is not, i.e., this part of the symmetry group is ``spontaneously broken'' \cite{Mueger:qdouble}. 

It is amusing that the \emph{untwisted} local net $\check \A(\ocal_{x,y}) := \mcal_x \cap \mcal_y'$ appears as a fixed point of the same (nonlocal) net as our twisted $\A(\ocal)$: 
\begin{proposition}\label{prop:fixpoints}
 Suppose that the graded Borchers triple fulfils modular nuclearity. Then for any double cone $\ocal=\ocal_{x,y}$ with $\delta(y-x)>r$, 
 one has
 \begin{align}\label{eq:ahatfix}
    \hat \A(\ocal)&= \{ A \in \hat\F(\ocal) : \alpha (A) = A\}, 
    \\ \label{eq:ffix}
    \F(\ocal) &= \{ A \in \hat\F(\ocal) : \beta (A) = A \},
\\ \label{eq:acheckfix}
    \check\A(\ocal) &= \{ A\in \hat\F(\ocal) : (\alpha\circ\beta) (A) = A\} ,
\\ \label{eq:afix}
    \A(\ocal) &= \{ A \in \hat\F(\ocal) : \alpha(A)=\beta (A) = A\} .
 \end{align} 
\end{proposition}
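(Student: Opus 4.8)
The plan is to fix a double cone $\ocal=\ocal_{x,y}$ with $\delta(y-x)>r$, pick $V\in\diso_L(\ocal)$ (nonempty by Lemma~\ref{lemma:disorder}(\ref{it:vnontriv})), and work throughout with the direct-sum decomposition $A=A_1+A_2V$, $A_{1,2}\in\F(\ocal)$, valid for every $A\in\hat\F(\ocal)$. The key is to compute how $\alpha$ and $\beta$ act on the two summands. By definition $\beta(A_1+A_2V)=A_1-A_2V$, so the $\beta$-fixed points are exactly those $A$ with $A_2=0$, which is \eqref{eq:ffix}. For $\alpha$: since $V$ commutes with $\grad$ we have $\alpha(A_1+A_2V)=\alpha(A_1)+\alpha(A_2)V$; decomposing $A_i=A_{i,+}+A_{i,-}$ into even and odd parts we get $\alpha(A)=A_{1,+}-A_{1,-}+(A_{2,+}-A_{2,-})V$. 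Hence $\alpha(A)=A$ iff $A_{1,-}=A_{2,-}=0$, i.e. $A=A_{1,+}+A_{2,+}V$ with both coefficients in $\F(\ocal)_+=\A(\ocal)$; this is precisely $\hat\F(\ocal)_+=\hat\A(\ocal)$, giving \eqref{eq:ahatfix} (one should note here that the decomposition $A=A_1+A_2V$ and the even/odd decomposition are compatible because $V$ is even, so that the even part of $A$ is $A_{1,+}+A_{2,+}V$). Combining, $\alpha(A)=\beta(A)=A$ forces simultaneously $A_2=0$ and $A_{1,-}=0$, i.e. $A\in\F(\ocal)_+=\A(\ocal)$, which is \eqref{eq:afix}.

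For \eqref{eq:acheckfix} one first has to make sense of the right-hand side: here $F$ denotes a fixed invertible odd element of $\F(\ocal)_-$ (as in Section~\ref{sec:obs}), and the claim is that $\check\A(\ocal)=\mcal_x\cap\mcal_y'$ equals $\{FA : A\in\hat\F(\ocal),\ (\alpha\circ\beta)(A)=A\}$. Compute $(\alpha\circ\beta)(A_1+A_2V)=\alpha(A_1-A_2V)=A_{1,+}-A_{1,-}-(A_{2,+}-A_{2,-})V$, so the $(\alpha\circ\beta)$-fixed points are those $A$ with $A_{1,+}=0$ and $A_{2,-}=0$, that is $A=A_{1,-}+A_{2,+}V$ (odd part of $\F(\ocal)$ plus even part times $V$). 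One then has to show $F\cdot\{A_{1,-}+A_{2,+}V\}$ equals $\mcal_x\cap\mcal_y'$. The inclusion $\supseteq$: the twist relation \eqref{eq:gradpermute} (with $n=1$) shows that untwisting an odd element of $\F(\ocal)\subset\mcal_x^\twist\cap\mcal_y'$ introduces a factor $\grad$ on one side; more directly, since $V_L$ implements $\adj\grad$ on $\mcal_x'$ and $\id$ on $\mcal_y$, conjugation by the unitary $FV$-type combinations should interchange $\mcal_x^\twist$ with $\mcal_x$ while leaving $\mcal_y'$ alone, so $F\F(\ocal)_-$ and $F\F(\ocal)_+V$ both land in $\mcal_x\cap\mcal_y'$; the reverse inclusion follows by a conditional-expectation/dimension-count argument analogous to the one in the proof of Proposition~\ref{prop:adual}, using that $\{V,\grad\}$-conjugation gives four sectors and $\check\A$ is one of them.

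The cleanest way to organise this is to introduce the group $G\cong\zbb_2\times\zbb_2$ generated by $\alpha$ and $\beta$ acting on $\hat\F(\ocal)$, observe that $\hat\F(\ocal)$ decomposes into the four joint eigenspaces $\F(\ocal)_\pm\oplus\F(\ocal)_\pm V$ indexed by the two signs, identify each of $\A(\ocal)$, $\F(\ocal)$, $\hat\A(\ocal)$ as the direct sum of the appropriate eigenspaces (this is exactly the bookkeeping above), and then for $\check\A(\ocal)$ verify that left multiplication by the invertible $F\in\F(\ocal)_-$ carries the eigenspace $\F(\ocal)_-\oplus\F(\ocal)_+V$ onto $\mcal_x\cap\mcal_y'$. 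The main obstacle is the last point, \eqref{eq:acheckfix}: one must carefully track the twist $Z$ when passing between $\mcal_x^\twist$ and $\mcal_x$ and confirm that the odd part of $\F(\ocal)$, once multiplied by $F$, together with $\F(\ocal)_+V$, really exhausts $\mcal_x\cap\mcal_y'$ and nothing more — the inclusion $\subseteq$ needs that $F\F(\ocal)_-\subset\mcal_x\cap\mcal_y'$ (straightforward from the twist relations and $\grad\mcal\grad^\ast=\mcal$) while $\supseteq$ needs that any element of $\mcal_x\cap\mcal_y'$, when written as $F^{-1}\cdot(\text{something})$, has that something inside $\hat\F(\ocal)$ with the right eigenvalue, which again rests on Proposition~\ref{prop:adual} and the uniqueness statement Lemma~\ref{lemma:disorder}(\ref{it:vlunique}).
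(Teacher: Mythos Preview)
Your treatment of \eqref{eq:ahatfix}, \eqref{eq:ffix}, \eqref{eq:afix} is correct and matches the paper, which simply says these ``follow directly from the definitions''.

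For \eqref{eq:acheckfix} there are two problems. First, you have a sign slip in identifying the $(\alpha\circ\beta)$-fixed points: from $(\alpha\circ\beta)(A)=A_{1,+}-A_{1,-}-(A_{2,+}-A_{2,-})V$ equal to $A_{1,+}+A_{1,-}+(A_{2,+}+A_{2,-})V$ one reads off $A_{1,-}=0$ and $A_{2,+}=0$, \emph{not} the other way round. So the fixed-point set is $\F(\ocal)_+ + \F(\ocal)_- V$, not $\F(\ocal)_- + \F(\ocal)_+ V$.

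Second, the ``$FA$'' in the displayed statement is a typo for ``$A$''; the paper's own proof works with $\{A\in\hat\F(\ocal):(\alpha\circ\beta)(A)=A\}$ directly, and there is no auxiliary odd invertible $F$ in the argument. (Amusingly, your two errors cancel: left-multiplying your wrong fixed-point set by an odd invertible $F$ swaps the parities and lands you back on the correct set $\F(\ocal)_+ + \F(\ocal)_- V$. But the argument should not rely on this coincidence.)

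With the correct fixed-point set in hand, the paper's route is more direct than your conditional-expectation sketch: one checks $\F(\ocal)_+ + \F(\ocal)_- V = \check\A(\ocal)$ by splitting into even and odd parts. The even parts agree trivially since $\F(\ocal)_+=(\M_x^\twist)_+\cap(\M_y')_+=(\M_x)_+\cap(\M_y')_+=\check\A(\ocal)_+$ (the twist does nothing on even elements). For the odd parts one shows $\F(\ocal)_- V=\check\A(\ocal)_-$: if $A\in\F(\ocal)_-$ then $AV\in\M_y'$ is clear, and $AV\in\M_x$ because $A$ anticommutes with odd elements of $\M_x'$ (twisted locality) while $V$ also anticommutes with them (disorder property), so the product $AV$ commutes with all of $\M_x'$. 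The converse runs the same computation with $V^\ast$ in place of $V$, showing that any odd $A\in\check\A(\ocal)$ satisfies $AV^\ast\in\F(\ocal)_-$. This replaces your ``$FV$-type combinations should interchange $\M_x^\twist$ with $\M_x$'' with a one-line commutator check.
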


\begin{proof}
  The statements for $\hat \A$, $\F$, $\A$ follow directly from the definitions (even without modular nuclearity). For $\check\A$, pick $V \in \diso_L(\ocal_{x,y})$ (by Lemma~\ref{lemma:disorder}(\ref{it:vnontriv})); then it is clear that
  \begin{equation}
       \{A \in \hat\F(\ocal) : (\alpha\circ\beta) (A) = A\} = \F(\ocal_{x,y})_{+} + \F(\ocal_{x,y})_{-} V.   
  \end{equation}
  The even parts of this expression and of $\check A(\ocal)$ clearly agree. Now let $A \in \F(\ocal_{x,y})_{-}$; then $AV\in \mcal_{y}'$ and both $A$ and $V$ (anti)commute with even (odd) elements of $\mcal_x'$, thus $AV\in \mcal_x \cap \mcal_y' = \check \A(\ocal_{x,y})$. On the other hand, let $A\in \check \A(\ocal_{x,y})$ be odd; then $B:=AV^\ast \in \mcal_y'$ and both $A$ and $V$ (anti)commute with even (odd) elements of $\mcal_x^{\twist\prime}$, thus $B \in  \F(\ocal_{x,y})_{-}$.
\end{proof}

After these clarifications on the structure of the observable net $\A$, it would now be of interest whether the field net $\F$, of which $\A$ is the even part, can be uniquely recovered from $\A$ by a canonical, model-independent construction. In higher space-time dimensions, the approach to this question would be a Doplicher-Haag-Roberts (DHR) type charge analysis (\cite{DHR:particle_statistics_1}; see \cite{Haa:LQP,Rob:lectures2} for reviews).  

In our current situation, however, this approach leads to technical difficulties: The DHR analysis (as well as its variants) rely on Haag duality of $\A$ as a central ingredient. As we have seen above, this condition is generically violated in our context, as also noted in \cite{Mueger:qdouble}. One might resort to considering the dual net $\A^d=\hat \A$ instead; but apart from the question whether $\A$ and $\hat \A$ describe ``the same physics'', another technical problem arises: in 1+1 space-time dimensions, any Haag-dual net which also has the split property for wedges (as $\hat \A$ would \cite[Sec.~4.3]{Mueger:qdouble}) cannot have nontrivial localized DHR sectors \cite{Mueger:massive2d}. In the following, we will therefore take the field net $\fcal$ defined by \eqref{eq:fdef} as the ``correct'' one for our purposes.

\section{Scattering theory}\label{sec:scattering}

We now deal with the question in which sense our nets describe scattering particles, and in particular, whether these are fermions or bosons. The usual approach would be to apply Haag--Ruelle scattering theory (see, e.g., \cite{Araki:qft}) to the ``observable'' net $\A$ acting on the Hilbert space $\Hil_+$. However, in general (and certainly in the examples discussed in Sec.~\ref{sec:integrable}, or indeed in the free Majorana field), the representation of $\pcal_+^\uparrow$ on $\Hil_+$ does not have an isolated mass shell in its energy-momentum spectrum, i.e., it lacks identifiable single-particle states. In $3+1$-dimensional spacetime, one would therefore resort to a generalization of scattering theory on the level of DHR charge sectors and the field algebras constructed from them, as outlined in \cite[Sec.~VII]{DHR:particle_statistics_2}. Since DHR analysis is however not applicable in our context, as discussed in Sec.~\ref{sec:obs}, we will directly work with the field net $\F$ in \eqref{eq:fdef} instead.

To that end, we will follow the recent approach of Duell \cite{Duell:wedge_scattering} to Haag--Ruelle scattering based on wedge-localized observables, specializing it to 1+1 space-time dimensions. We formulate it here directly for the case of Borchers triples, which comes with no loss of generality, since all 1+1-dimensional local nets which fulfill wedge duality (as assumed in \cite{Duell:wedge_scattering}) arise from such triples \cite{Borchers:1992}. Also, we generalize the framework to the graded case, which is possible with minimal modifications. 

For scattering theory, we need the following extra assumption on the translation group unitaries $U$, which (due to strong continuity) we can write as $U(x)=\exp( i P \cdot x)$; the joint spectral measure of the 2-vector-valued generator $P$ will be denoted as $E$.
\begin{definition}
We say that the graded Borchers triple $(\M,U,\Omega,\grad)$ has a \emph{mass gap} (at mass $\mu>0$) if, for some $M>\mu$, 
\begin{equation}
   \supp E \subset \{0\} \cup \{p: p\cdot p = \mu^2 \} \cup \{p: p\cdot p > M^2 \} .
\end{equation}
\end{definition}

This will be our standing assumption in the following. With $E_\mu := E(\{p: p\cdot p = \mu^2 \} )$, we define the ``single-particle space'' $\kcal:=E_\mu \Hil$. As all $U(g)$, $g \in \pcal_+$, commute with $E_\mu$, we have a representation $g\mapsto U(g)\restriction \kcal$ on that space (which we again denote as $U$). We also consider the (unsymmetrized) Fock space $\fock(\kcal)$ with a representation $U_\fock$ of $\pcal_+$, where $U_\fock(g)$ is the usual (unsymmetrized) second quantization of $U(g)\restriction\kcal$ for $g \in \pcal_+^\uparrow$, and 
 \begin{equation}
    U_\fock(\refl) (\psi_1 \otimes \ldots \otimes \psi_n) := U(\refl) \psi_n \otimes \dotsm \otimes U(\refl) \psi_1.
\end{equation}

We also consider rapidity-ordered subspaces as follows. For $\psi \in \kcal$, define the rapidity support of $\psi$ as 
\begin{equation}
   \raps{\psi}:=\operatorname{cch} \{\operatorname{artanh}(p^0/p^1): p \in \supp E(\cdot)\psi \},
\end{equation}
where $\operatorname{cch}$ denotes the closed convex hull. Correspondingly, for $f \in \scal(\rbb^2)$, set 
\begin{equation}
   \raps{f}:= \operatorname{cch} \{\operatorname{artanh}(p^0/p^1) : p \in \supp \tilde f\}.
\end{equation}
Thus rapidity supports are closed intervals. The usual partial order $\prec$ of intervals then lifts to a partial order $\prec$ (``precursor relation'') on $\kcal$ and on $\scal(\rbb^2)$. We now define the ``rapidity ordered'' Fock space $\fock^\prec(\kcal)$ as the closed span of vectors $\psi_1 \otimes \dotsm \otimes\psi_n$ with $\psi_1 \prec \dotsm \prec \psi_n$; similarly for $\fock^{\succ}(\kcal)$. Note that $U_\fock(\refl)\fock^{\succ}(\kcal)=\fock^{\prec}(\kcal)$. 

In the following, we will often choose a Schwartz function $\chi$ such that the support of its Fourier transform $\tilde \chi$ is compact and intersects the support of $E$ only on the mass shell, $p\cdot p = \mu^2$. (For short, we shall say that ``$\tilde \chi$ has support near the mass shell''.) For $A \in \boundedops$, we set $A^\chi=\int d^2x\, \chi(x) U(x) AU(x)^\ast$. Further for $\raps{f}$ compact, we define
\begin{equation}\label{eq:achidef}
   A^\chi_\tau(f) := \int dx\, f(\tau,x) \, U(\tau,x) A^\chi U(\tau,x)^\ast \quad 
%   \text{where} \quad f(\tau,x):=\int \frac{dk}{2\pi} e^{ixk-i\tau \sqrt{k^2+\mu^2}} \tilde f(k).
   \text{where} \; f(\tau,x):=\int \frac{dk}{2\pi}  \tilde f(k) \exp \big( ixk-i\tau \sqrt{k^2+\mu^2}\big).
\end{equation}

We can now summarize, and slightly extend, the results of \cite{Duell:wedge_scattering} as follows:

\begin{theorem} \label{thm:wedgescatter}
There exist isometries $\mo_\scout: \fock^\prec(\kcal) \to \Hil$ and $\mo_\scin: \fock^\succ(\kcal)\to\Hil$ with the following property. 
Let $x,y\in\rbb^2$ and $A_1,\ldots,A_n \in \M_y'$, $\hat A_1,\ldots,\hat A_n \in \M_x^\twist$ such that $E_\mu A_j \Omega = E_\mu \hat A_j \Omega$. 
Let $f_1,\ldots,f_n\in\scal(\rbb^2)$ such that $\raps{f_1} \prec \dots \prec \raps{f_n}$, and let $\tilde\chi$ have support near the mass shell.
Then 
\begin{equation}\label{eq:moconv}
 \begin{aligned}
     \lim_{\tau \to \infty} A_{1,\tau}^\chi(f_1) A^\chi_{2,\tau}(f_2)\dotsm A^\chi_{n,\tau}(f_n)\Omega &= \mo_\scout \big(A^\chi_{1}(f_1)\Omega \otimes \dotsm \otimes A^\chi_{n}(f_n)\Omega \big),
\\
     \lim_{\tau \to -\infty} A^\chi_{n,\tau}(f_n) \dotsm  A^\chi_{2,\tau}(f_2) A^\chi_{1,\tau}(f_1)\Omega &= \mo_\scin \big(A^\chi_{n}(f_n)\Omega \otimes \dotsm \otimes A^\chi_{1}(f_1)\Omega \big).
\end{aligned} 
\end{equation}
Further we have
\begin{equation}\label{eq:mosymm}
   U(g) \mo_\scinout = \mo_\scinout U_\fock(g) \; \text{ for all $g \in \pcal_+^\uparrow$}, 
   \qquad
   U(\refl) \mo_\scout = \mo_\scin U_\fock(\refl). 
\end{equation}
\end{theorem}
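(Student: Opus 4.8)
The plan is to follow the strategy of Duell's wedge-local Haag--Ruelle construction, track the modifications forced by the grading, and then verify the covariance relations \eqref{eq:mosymm} directly from how the limits transform. First I would construct the isometries $\mo_\scinout$. Fixing $\chi$ as stipulated, one checks (using the mass gap) that each $A^\chi\Omega \in \kcal$ when $A \in \M_y'$ or $\M^\twist_x$, and that the ``creation-operator'' approximants $A^\chi_{j,\tau}(f_j)$ act on the vacuum like smeared single-particle creation operators up to errors vanishing in norm as $|\tau|\to\infty$. The standard Haag--Ruelle argument --- commutator estimates governing the difference quotient $\frac{d}{d\tau}$ of the product, together with the rapidity ordering $\raps{f_1}\prec\dots\prec\raps{f_n}$ ensuring the asymptotic clustering --- gives existence of the limits in \eqref{eq:moconv}; that the limit depends only on the tensor product $A^\chi_1(f_1)\Omega\otimes\dots\otimes A^\chi_n(f_n)\Omega$ (and not on the individual $A_j$, only on $E_\mu A_j\Omega$) follows because two choices differing by vanishing single-particle content give an approximant converging to $0$, again by a clustering estimate. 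One then defines $\mo_\scout$ on the dense span of such tensors in $\fock^\prec(\kcal)$ and shows it is isometric by computing the inner product of two limits: the cross terms between differently ordered creation operators cluster away, leaving exactly the Fock inner product. The graded twist enters here only through signs: when one commutes an odd $A_j$ past an odd $A_k$ using \eqref{eq:gradpermute} (valid because $A_j \in \M_y' \subset \M^{\twist\prime}_x = (\M^\twist_x)^{\twist\prime}$ and the twisted-locality of $\F$), one picks up a factor $(-1)$, which is precisely the sign making $\fock^\prec(\kcal)$ the \emph{unsymmetrized} Fock space rather than the symmetric one; so the combinatorics go through verbatim once one keeps the signs, and the clustering estimates are grading-independent since $\|A^\twist\| = \|A\|$.

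Second, I would establish \eqref{eq:mosymm}. For $g \in \pcal_+^\uparrow$, covariance of the triple (Proposition~\ref{prop:fnet}) gives $U(g)\M_y' U(g)^\ast = \M_{g.y}'$ and $U(g)\M^\twist_x U(g)^\ast = \M^\twist_{g.x}$, so conjugating the defining limit \eqref{eq:moconv} by $U(g)$ turns the $A_j$ into $U(g)A_jU(g)^\ast$ and the $f_j$ into their $g$-transforms $f_j\circ g^{-1}$, whose rapidity supports are translated/ordered consistently (boosts shift rapidity intervals rigidly, preserving $\prec$; translations leave $\raps{f_j}$ unchanged). Since $U(g)$ intertwines $A^\chi_j(f_j)\Omega \mapsto U(g)A^\chi_j(f_j)\Omega = (U(g)A_jU(g)^\ast)^\chi(f_j\circ g^{-1})\Omega$ on the single-particle level --- this is just the statement that $A^\chi$ and $A^\chi_\tau(f)$ are built equivariantly out of translations --- the left-hand side of \eqref{eq:moconv} conjugated by $U(g)$ is again an admissible approximant, converging to $\mo_\scout$ applied to the $g$-transformed tensor, which is $\mo_\scout U_\fock(g)(A^\chi_1(f_1)\Omega\otimes\dots)$ by definition of the second quantization $U_\fock(g)$. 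By density this gives $U(g)\mo_\scout = \mo_\scout U_\fock(g)$, and likewise for $\mo_\scin$.

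Third, the reflection relation $U(\refl)\mo_\scout = \mo_\scin U_\fock(\refl)$. Here $U(\refl) = ZJ$ is \emph{anti}unitary and, crucially, $U(\refl)\M^\twist U(\refl) = \M'$ (shown in the proof of Proposition~\ref{prop:fnet}), so $U(\refl)\M^\twist_x U(\refl) = \M'_{-x}$ and $U(\refl)\M'_y U(\refl) = \M^\twist_{-y}$: conjugation by $U(\refl)$ interchanges the two ``sides''. Applying $U(\refl)$ to the $\scout$-limit in \eqref{eq:moconv}: since $\refl$ reverses the time direction, $\tau\to+\infty$ becomes $\tau\to-\infty$; since $\refl$ flips the sign of the spatial coordinate and of rapidity, the ordering $\raps{f_1}\prec\dots\prec\raps{f_n}$ becomes $\raps{g_n}\prec\dots\prec\raps{g_1}$ for the reflected functions $g_j$ (so the reflected product is exactly of the form appearing in the $\scin$-limit, in reversed order); and antiunitarity means $U(\refl)$ turns the product $A^\chi_{1,\tau}(f_1)\dotsm A^\chi_{n,\tau}(f_n)\Omega$ into $\big(U(\refl)A_nU(\refl)\big)^{\bar\chi}_\tau(g_n)\dotsm\Omega$ (complex-conjugating the smearing, which is harmless after choosing $\chi$ real, and reversing the operator order is automatic since we are reading a product of operators acting on $\Omega$ --- more carefully, one uses that each factor is self-adjoint-ish up to the $\chi$-smearing and exploits $U(\refl)^2=\idop$). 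These reflected operators lie in the algebras required for the $\scin$-limit and satisfy the matching single-particle condition (apply $E_\mu$, which commutes with $U(\refl)$), so the right-hand side converges to $\mo_\scin(\dots)$, and comparing single-particle vectors identifies the Fock argument as $U_\fock(\refl)(A^\chi_1(f_1)\Omega\otimes\dots\otimes A^\chi_n(f_n)\Omega)$; density finishes it.

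The main obstacle I expect is bookkeeping the interplay of three sign/orientation effects simultaneously in the reflection argument: (i) the antiunitarity of $U(\refl)$ and its reversal of operator order and of $\tau$, (ii) the interchange $\M^\twist \leftrightarrow \M'$ it induces, and (iii) the grading signs from \eqref{eq:gradpermute} that must be consistent with the unsymmetrized Fock structure on both the $\scout$ and $\scin$ sides. One must check that the various factors of $-1$ and of $i$ (the latter from $Z$ in $U(\refl)=ZJ$) cancel so that $\mo_\scin$ is genuinely defined by the reversed-order limit with no stray phase --- equivalently, that the definition of $U_\fock(\refl)$ given before the theorem is exactly the one that makes \eqref{eq:mosymm} hold. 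The existence-and-isometry part, by contrast, is essentially the established argument of \cite{Duell:wedge_scattering} with the symmetric Fock space replaced by $\fock^\prec$ and the commutation signs retained; the grading-independence of all norm estimates (since $\|A^\twist\|=\|A\|$ and $\grad$ commutes with $U$) means no genuinely new analytic input is needed there.
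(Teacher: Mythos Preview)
Your proposal is essentially the paper's own approach: the paper does not give a self-contained proof but refers to Duell \cite{Duell:wedge_scattering} (Theorem~6, Proposition~23, Theorem~24), noting only that $\M_x$ is replaced by $\M_x^\twist$, that Corollary~10 there carries over to the twisted commutator via \eqref{eq:gradpermute}, and that the reflection relation is an ``easy addition''. Your outline of the existence/isometry argument and the $\pcal_+^\uparrow$-covariance matches this exactly.

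One correction in your reflection argument: antiunitarity of $U(\refl)$ does \emph{not} reverse operator order in a product; one simply has $U(\refl)A_1\cdots A_n\Omega = (U(\refl)A_1U(\refl))\cdots(U(\refl)A_nU(\refl))\Omega$. The apparent order reversal between the $\scout$ and $\scin$ formulas in \eqref{eq:moconv} comes from two other sources: (i) the $\scin$ limit is \emph{defined} with the operators in reversed index order, and (ii) reflection reverses rapidity, so $\raps{f_1}\prec\cdots\prec\raps{f_n}$ becomes $\raps{\refl.f_1}\succ\cdots\succ\raps{\refl.f_n}$, which after relabelling is precisely the ordering required for the $\scin$ approximant. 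Once you track this correctly, the identification with $U_\fock(\refl)$ as defined before the theorem (which maps $\fock^\prec$ to $\fock^\succ$) falls out without stray phases, and no grading signs enter here since the operators are never actually permuted.
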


The proof is as in \cite{Duell:wedge_scattering}, in particular Theorem 6, Proposition 23, and Theorem 24 there, with the following changes: Apart from the easy addition of the relation for $U(\refl)$ in \eqref{eq:mosymm}, and trivial changes in notation, we have replaced $\mcal_x$ with $\mcal_x^\twist$.
This can be accommodated as follow: Corollary~10 in  \cite{Duell:wedge_scattering} holds analogously for the (anti-)commutator of the even and the odd parts of the operators involved; then, the estimates for approximations of scattering states can be done analogously, utilizing Eq.~\eqref{eq:gradpermute} instead of spacelike commutation relations. Also the crucial Lemma 3 (existence of swappable vectors) carries over. We explain these techniques in some more detail in Lemma~\ref{lemma:swapapprox} and Proposition~\ref{prop:improvedconv} below.

The above result does not depend on the modular nuclearity condition. With that given, however, we can pass to the usual scattering theory with asymptotic Bose/Fermi states. To that end, consider the following representation of the permutation group $\perms{n}$ on $\kcal^{\otimes n}$,
\begin{equation}
   \pi_\grad(\sigma) ( \psi_1 \otimes \dotsm \otimes \psi_n ) =  \prod_{\substack{i < j \\ \sigma(i)>\sigma(j)}} (-1)^{ (1-\Gamma_{\sigma(i)})(1-\Gamma_{\sigma(j)})/4 } \psi_{\sigma(1)} \otimes \dotsm \otimes \psi_{\sigma(n)}, 
\end{equation}
where $\grad_j$ multiplies with $\grad$ in the $j$-th factor; let $\fock^{\grad}(\kcal)\subset\fock(\kcal)$ be the corresponding invariant subspace, the projector onto which is $P_\grad=\bigoplus_{n\geq 0}\frac{1}{n!} \sum_{\sigma\in\perms{n}} \pi_\grad(\sigma)$. Then we obtain:

\begin{theorem}\label{thm:localscatter}
  Suppose that the graded Borchers triple $(\M,U,\Omega,\grad)$ has a mass gap and fulfills modular nuclearity. Then the operators $\mo_\scinout$ uniquely extend to densely defined linear operators $\fock(\kcal) \to \Hil$ such that, in the notation of Theorem~\ref{thm:wedgescatter}, if $A_j=\hat A_j$ and the $f_j$ have \emph{disjoint} (not necessarily ordered) rapidity support, 
  then convergence as in \eqref{eq:moconv} holds. We have $\mo_\scinout P_\grad = \mo_\scinout$. Further, $\mo^\Gamma_{\scinout} := \mo_{\scinout}(N!)^{-1/2} \restriction \fock^\grad(\kcal) $ is isometric.
\end{theorem}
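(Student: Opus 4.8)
The strategy is to deduce Theorem~\ref{thm:localscatter} from Theorem~\ref{thm:wedgescatter} by an interpolation/extension argument, using the modular nuclearity hypothesis only through its consequences established in Section~\ref{sec:triples} (in particular the standard split property of Proposition~\ref{prop:unitaryequiv} and Theorem~\ref{thm:gradednet}), exactly as in the transition from wedge-local to local Haag--Ruelle theory in \cite{Duell:wedge_scattering}. First I would address the extension of $\mo_\scinout$ to all of $\fock(\kcal)$: for a given single-particle vector $\psi\in\kcal$ and a Schwartz function $f$ with $\raps f$ a single point (or small interval) adapted to $\psi$, one must produce an operator $A\in\M_y'\cap(\M_x^\twist)$ -- that is, an element of the \emph{double-cone} algebra $\F(\ocal_{x,y})$ -- with $E_\mu A\Omega=\psi$ and $A^\chi(f)\Omega=\psi$. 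This is the role of Theorem~\ref{thm:gradednet}(\ref{it:reehschlieder}) (the Reeh--Schlieder property for $\F(\ocal_{x,y})$) together with the standard split unitary $V$ of Proposition~\ref{prop:unitaryequiv}, which lets one ``interpolate'' between a left-wedge operator and a right-wedge operator having the same single-particle component, in the manner of \cite[Lemma 3]{Duell:wedge_scattering} (existence of swappable vectors). Once such double-cone generators are available for a dense set of single-particle configurations, and one can choose $n$ of them with \emph{disjoint} rapidity supports localized in mutually spacelike double cones, the convergence \eqref{eq:moconv} with $A_j=\hat A_j$ follows from Theorem~\ref{thm:wedgescatter} by density, giving the unique linear extension.

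Next I would establish the graded symmetry $\mo_\scinout P_\grad=\mo_\scinout$. The key computation is to compare, for test functions with disjoint but \emph{re-ordered} rapidity supports, the limit of $A_{1,\tau}^\chi(f_1)\dotsm A_{n,\tau}^\chi(f_n)\Omega$ computed directly (which by Theorem~\ref{thm:wedgescatter} gives $\mo_\scout(\bigotimes_j A_j^\chi(f_j)\Omega)$ up to the ordering of the $\raps{f_j}$) against the limit computed after permuting the operators so that their rapidity supports are in increasing order. For spacelike-separated double cones the operators satisfy the graded (anti)commutation relations of Section~\ref{sec:bg}, and permuting them introduces precisely the sign factor $\prod_{i<j,\sigma(i)>\sigma(j)}(-1)^{(1-\Gamma_{\sigma(i)})(1-\Gamma_{\sigma(j)})/4}$ appearing in $\pi_\grad(\sigma)$ -- here one uses Eq.~\eqref{eq:gradpermute} applied to the double-cone algebras, together with the fact that the single-particle vectors $A_j^\chi(f_j)\Omega$ lie in the $\pm1$-eigenspaces of $\Gamma$ according to the parity of $A_j$, so that $\Gamma_j$ acting in the $j$-th tensor factor of $\fock(\kcal)$ reproduces the parity weight. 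Hence $\mo_\scout$ intertwines the reordering with $\pi_\grad(\sigma)$ on vectors of disjoint rapidity support; by density of such vectors this yields $\mo_\scout\pi_\grad(\sigma)=\mo_\scout$, i.e.\ $\mo_\scout P_\grad=\mo_\scout$, and similarly for $\mo_\scin$ (using also the $U(\refl)$ relation in \eqref{eq:mosymm} if one prefers to reduce the ``in'' case to the ``out'' case).

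Finally, for the isometry statement: on $\fock^\grad(\kcal)$ the restricted map $\mo_\scinout(N!)^{-1/2}$ must preserve inner products. For a product vector $\Psi=\psi_1\otimes\dotsm\otimes\psi_n$ with strictly ordered rapidity supports $\raps{\psi_1}\prec\dotsm\prec\raps{\psi_n}$, Theorem~\ref{thm:wedgescatter} already tells us that $\mo_\scout$ is isometric on the span of such \emph{ordered} vectors; the content here is that $(n!)^{-1/2}\mo_\scout$ restricted to the symmetrized-in-the-graded-sense subspace is again isometric. One computes $\|\mo_\scout P_\grad\Psi\|^2$ using $\mo_\scout P_\grad=\mo_\scout$ and the fact that $P_\grad\Psi$ is an average over $\sigma\in\perms n$ of (signed) permuted vectors; after re-ordering each permuted vector back to increasing rapidity support via the established intertwining property, the cross terms with genuinely disjoint supports survive with matching signs and combine, by a standard counting argument, to $(n!)\|\Psi\|^2$ on $\fock^\grad(\kcal)$, while terms coming from coincident rapidity supports form a measure-zero set and can be excluded by restricting to a dense subspace, then removed by continuity. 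The main obstacle, as in \cite{Duell:wedge_scattering}, is the first step: systematically producing double-cone-localized generators with prescribed single-particle data and controlling the approximation errors when the $A_j$ are replaced by such generators in the $n$-fold product -- this is where the nuclearity-derived split structure (Proposition~\ref{prop:unitaryequiv}) and the ``swappable vectors'' technique do the real work, and it is spelled out in Lemma~\ref{lemma:swapapprox} and Proposition~\ref{prop:improvedconv}; the sign bookkeeping via \eqref{eq:gradpermute}, by contrast, is essentially formal once the localization is under control.
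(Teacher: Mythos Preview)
Your plan is correct and follows essentially the same route as the paper: extend $\mo_\scout$ by pulling an unordered product vector back to $\fock^\prec(\kcal)$ via the permutation $\pi_\grad(\sigma)$ that orders the rapidity supports, verify $\mo_\scout\pi_\grad(\sigma)=\mo_\scout$ using \eqref{eq:gradpermute} on double-cone-localized creation operators (split into even/odd parts), and obtain isometry of $\mo_\scout^\grad$ from the orthogonality $\pi_\grad(\sigma)\psi\perp\psi$ for $\sigma\neq\id$, which gives $\|P_\grad\psi\|^2=\tfrac{1}{n!}\|\psi\|^2$. One mis-emphasis worth flagging: the paper does not invoke the split unitary of Proposition~\ref{prop:unitaryequiv} or the swappable-vector machinery of Lemma~\ref{lemma:swapapprox} and Proposition~\ref{prop:improvedconv} in this proof---those are reserved for Theorem~\ref{thm:pfgscatter}---and modular nuclearity enters here only through the Reeh--Schlieder property of $\F(\ocal_{x,y})$ (Theorem~\ref{thm:gradednet}(\ref{it:reehschlieder})), which already supplies a dense set of local generators $A_j=\hat A_j$ and is what ``fixes $\mo_\scout$ uniquely''; so the step you call the main obstacle is in fact the lightest one.
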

In this case, we can also define the \emph{scattering operator} $S:=(\mo^\grad_\scout)^\ast \mo_\scin^\grad:\fock^\grad(\kcal) \to \fock^\grad(\kcal)$.

\begin{proof}
 We treat only $\mo_\scout$, the case of $\mo_\scin$ being analogous.
 Let $\psi_1,\ldots,\psi_n\in\kcal$ such that $\psi_1,\ldots,\psi_n$ have disjoint rapidity support. Then there is a unique permutation $\sigma$ such that $\psi_{\sigma(1)}\prec \dotsm \prec\psi_{\sigma(n)}$; hence setting $\psi:=\psi_1 \otimes \dotsm \otimes \psi_n$, we have  $\pi_\grad(\sigma)\psi\in \fock^\prec(\kcal)$. We put
 \begin{equation}
      \mo_\scout \psi := \mo_\scout \pi_\grad(\sigma)\psi,
 \end{equation}
 thus defining $\mo_\scout$ on a total set, and with $\mo_\scout \pi_\grad(\sigma)=\mo_\scout$ for every permutation $\sigma\in\perms{n}$ (on $n$-particle vectors). For $\psi\in\fock^\prec(\kcal)$, we then have $\mo_\scout \psi = \mo_\scout P_\grad \psi$. On the other hand, using that $\pi_\grad(\sigma)\psi \perp \psi$ if $\sigma\neq \id$ (due to disjoint spectral supports), one computes that 
 \begin{equation}
    \|  P_\grad \psi \|^2 = \sum_{\sigma,\sigma'} \frac{1}{(n!)^2} \hscalar{ \pi_\grad(\sigma)\psi}{ \pi_\grad(\sigma')\psi}
    = \frac{1}{n!} \|\psi\|^2 = \frac{1}{n!} \|\mo_\scout \psi\|^2;
 \end{equation}
 using isometry of $\mo_\scout$ on $\fock^\prec(\kcal)$.
 Thus $\| \mo_\scout P_\grad \psi \|= \sqrt{n!} \| P_\grad \psi \|$. Noting that $P_\grad\fock^\prec(\kcal)$ is dense in $\fock^\grad(\kcal)$, we get isometry of $\mo_\scout^\grad$. 
 
 Further, let $A_1,\ldots,A_n \in \F(\ocal_{x,y})$ for some $x,y$, and $f_1\prec \dots \prec f_n$.  For simplicity, we can assume that each $A_j$ is either even or odd. Then, using \eqref{eq:gradpermute} and a generalization of \cite[Corollary~10]{Duell:wedge_scattering} to the relevant (anti)commutators, with $A_j=\hat A_j$, one shows that for each permutation $\sigma$ and with a certain $s_\sigma\in\{\pm 1\}$,
 \begin{equation}
     \lim_{\tau\to\infty} A_{\sigma(1),\tau}^\chi(f_{\sigma(1)}) \dotsm A^\chi_{\sigma(n),\tau}(f_{\sigma(n)}) \Omega
  =  s_\sigma \lim_{\tau\to\infty} A_{1,\tau}^\chi(f_1) \dotsm A^\chi_{n,\tau}(f_n) \Omega .
\end{equation}
 But the right hand side is just $\mo_\scout \pi_\grad(\sigma^{-1})   A_{\sigma(1)}^\chi(f_{\sigma(1)})\Omega \otimes \dotsm \otimes A_{\sigma(n)}^\chi(f_{\sigma(n)})\Omega$ by our definitions.
 Finally, and most importantly, modular nuclearity guarantees that this relation fixes $\mo_\scout$ uniquely, since sufficiently many $A_j=\hat A_j \in \F(\ocal_{x,y})$ can be chosen.
\end{proof}

The above construction is possible for every Borchers triple; however, it may be difficult to work out the action of the operators $A^\chi_{j,\tau}$ in practice. In view of applications in integrable models, we now discuss the case where the Borchers triple has a (tempered) \emph{polarization free generator} in the sense of \cite{BBS:polarizationfree}; see also \cite[Sec.~6]{Lechner:2008}. We emphasize that the modular nuclearity condition is not needed for the following, except where explicitly indicated.

To formulate this, we introduce the Fr\'echet space $\cinfty(\Hil)=\bigcap_{\ell>0} (1+H)^{-\ell}\Hil$ equipped with the norms $\gnorm{ \psi }{}^{(\ell)} := \gnorm{ (1+H)^{\ell}\psi }{} $, cf.~\cite{Bos:short_distance_structure}. We denote by $\lcal(\cinfty(\Hil))$ the linear operators continuous in this topology. $\cinfty(\kcal)$ is defined analogously. 
We now introduce polarization free generators as follows:
\begin{definition}\label{def:pfg}
 A polarization-free generator $\pfg$ assigns to each $\psi\in\cinfty(\kcal)$ a closed, densely defined linear operator $\pfg(\psi)$ on $\Hil$ such that
 \begin{enumerate}[(i)]
  \item \label{it:pfg-domain}
  $\cinfty(\Hil) \subset \dom \pfg(\psi) \cap \dom \pfg(\psi)^\ast$, and $\pfg(\psi)$, $\pfg(\psi)^\ast \in \lcal(\cinfty(\Hil))$ after restriction.
   \item \label{it:pfg-continuous}
For every $\Psi \in \cinfty(\Hil)$, the map $\cinfty(\kcal) \to \cinfty(\Hil)$, $\psi \mapsto \pfg(\psi) \Psi$ is linear and continuous,
  \item \label{it:pfg-vac}
  $\pfg(\psi)\Omega=\psi$ for all $\psi\in\cinfty(\kcal)$,
  \item \label{it:pfg-aff} 
  If $\psi = E_\mu A \Omega$ with some $A =A^\ast \in \M_x'$, then $\pfg(\psi)$ is affiliated with $\mcal_x'$.
 \end{enumerate}

\end{definition}

We will write $\phi(\psi)^\chi$, $\phi(\psi)^\chi_\tau(f)$ analogous to $A^\chi$, $A^\chi_\tau(f)$ earlier; these exists as operators in $\lcal(\cinfty(\Hil))$. 
We now claim

\begin{theorem}\label{thm:pfgscatter}
Suppose that the graded Borchers triple has a mass gap and a polarization-free generator $\pfg$. Let $\psi_1, \ldots,\psi_k \in \kcal$ have compact rapidity support such that $\psi_1 \prec \ldots \prec \psi_k$; and let $\tilde\chi$ be supported near the mass shell with $\tilde\chi(p)=1$ for all $p \in \bigcup_j \supp E(\cdotarg)\psi_j$. Then
\begin{equation}\label{eq:pfgmo}
\mo_\scout (\psi_1 \otimes \dotsm \otimes \psi_k) = \phi(\psi_1)^\chi\dotsm \phi(\psi_k)^\chi \Omega .
 \end{equation}
\end{theorem}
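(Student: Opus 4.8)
The plan is to identify the right-hand side $\phi^\chi(\psi_1) \dotsm \phi^\chi(\psi_k)\Omega$ as the $\tau\to\infty$ limit of the operators $A^\chi_{j,\tau}(f_j)$ appearing in Theorem~\ref{thm:wedgescatter}, and then invoke that theorem directly. First I would recall that for a polarization-free generator, property~(\ref{it:pfg-vac}) gives $\phi(\psi_j)\Omega = \psi_j$, and that the ``smeared, energy-damped'' creation-type operator $\phi^\chi(\psi_j)$ is obtained from $\phi(\psi_j)$ by the averaging $A^\chi = \int d^2x\, \chi(x) U(x) A U(x)^\ast$; since $\tilde\chi$ is supported near the mass shell, $\phi^\chi(\psi_j)$ again creates a single-particle vector from $\Omega$, namely $\phi^\chi(\psi_j)\Omega = E_\mu$-projected version of $\psi_j$, which for $\psi_j\in\kcal$ with rapidity support avoiding the complement of the mass shell is just (a multiple of) $\psi_j$ after suitable normalization of $\chi$. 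This matches the form of the vectors $A^\chi_j(f_j)\Omega$ on the right-hand side of \eqref{eq:moconv}.

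The key step is to show that $\phi^\chi(\psi_j)$ can be written as $A^\chi_{j,\tau}(f_j)$-type objects in a way compatible with the convergence in Theorem~\ref{thm:wedgescatter}. Concretely, I would fix functions $f_j$ with $\widetilde{f_j}$ peaked so that $f_j(\tau,\cdot)$ implements, on the single-particle level, the free time evolution of a wave packet with rapidity support $\raps{\psi_j}$; using property~(\ref{it:pfg-aff}), for $\psi_j = E_\mu A_j\Omega$ with $A_j\in\M_{x_j}'$ suitably chosen, the operator $\phi(\psi_j)$ is affiliated with $\M_{x_j}'$, so that the bounded operators $A_j$ (or spectral/bounded transforms thereof) provide admissible inputs to Theorem~\ref{thm:wedgescatter} with $E_\mu A_j\Omega = \psi_j$. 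Because $\phi(\psi_j)$ already has the ``no polarization'' property --- it creates exactly a one-particle state from $\Omega$ with no multi-particle admixture --- the Haag-Ruelle-type approximants $A^\chi_{j,\tau}(f_j)$ acting on $\Omega$ are in fact \emph{independent of $\tau$} up to a term that vanishes, and equal $\phi^\chi(\psi_j)\Omega$; this is the standard mechanism (cf.~\cite[Sec.~6]{Lechner:2008}) by which PFGs short-circuit the scattering approximation. Stringing $k$ of these together along the ordered rapidities $\psi_1\prec\dotsm\prec\psi_k$ and passing to the limit, Theorem~\ref{thm:wedgescatter} yields $\mo_\scout(\psi_1\otimes\dotsm\otimes\psi_k)$ on the left and $\phi^\chi(\psi_1)\dotsm\phi^\chi(\psi_k)\Omega$ on the right.

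The main obstacle is the domain/continuity bookkeeping: $\phi(\psi_j)$ is unbounded, only affiliated with (not contained in) $\M_{x_j}'$, so Theorem~\ref{thm:wedgescatter} --- stated for bounded $A_j\in\M_{x_j}'$ --- does not apply verbatim. I would handle this by approximating $\phi(\psi_j)$ by bounded operators in $\M_{x_j}'$ (e.g. via bounded Borel functions of its self-adjoint/antiselfadjoint parts, or via the operators $(1+H)^{-\ell}\phi(\psi_j)(1+H)^{-\ell}$ which lie in $\lcal(\cinfty(\Hil))$ by property~(\ref{it:pfg-domain})), controlling the errors in the $\cinfty(\Hil)$-topology using property~(\ref{it:pfg-continuous}) and the energy bounds, and then taking the bounded-approximation limit and the $\tau$-limit in the correct order. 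The other point requiring care is matching normalizations of $\chi$ so that $\phi^\chi(\psi_j)\Omega = \psi_j$ exactly (not just up to scalar), which fixes the claim's precise form; this is routine once $\tilde\chi$ is chosen to equal $1$ on the relevant part of the mass shell.
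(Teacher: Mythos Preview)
Your overall strategy---approximate the unbounded $\phi(\psi_j)$ by bounded operators in the wedge algebra and feed these into Theorem~\ref{thm:wedgescatter}---is the same as the paper's, but the step you describe as ``taking the bounded-approximation limit and the $\tau$-limit in the correct order'' hides the real difficulty. Any bounded approximant $C_\epsilon$ of $\phi(\psi_1)$ will have $\|C_\epsilon\|\to\infty$ as the approximation improves, while Theorem~\ref{thm:wedgescatter} gives no uniformity in the norms of the input operators; so there is no iterated-limit argument that closes. The paper resolves this with three additional ingredients you have not identified: (a) a specific approximation lemma (Lemma~\ref{lemma:closeapprox}) producing $C_\epsilon\in\M_y'$ with $\|C_\epsilon\|\leq\epsilon^{-1}$ and $\|(\phi(\psi_1)-C_\epsilon)\Psi\|\leq\epsilon\,\|\,|\phi(\psi_1)|^2\Psi\|$; (b) a quantitative strengthening of Theorem~\ref{thm:wedgescatter} (Proposition~\ref{prop:improvedconv}) giving convergence at rate $\tau^{-N}$ with constants \emph{explicit in} $\|A_j\|,\|\hat A_j\|$; and (c) coupling $\epsilon=\tau^{-1}$ so that the diverging factors $\epsilon^{-1}$ and $\|f_1(\tau,\cdot)\|_1=O(\tau^{1/2})$ are beaten by $\tau^{-N}$. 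This is a diagonal limit, not a choice of order.

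Two smaller points. First, the $\tau$-independence you invoke (``acting on $\Omega$'') is not sufficient: in the inductive step one needs $\phi(\psi_1)^\chi_\tau(f_1)$ to be $\tau$-independent when applied to the \emph{already constructed} $(k{-}1)$-particle vector $\Psi_2=\mo_\scout(\psi_2\otimes\dotsm\otimes\psi_k)$, which requires the operator-level statement of Lemma~\ref{lemma:pfgwave}. Second, your suggested approximants $(1+H)^{-\ell}\phi(\psi_j)(1+H)^{-\ell}$ are not in $\M'$ (since $H$ is global), so they cannot be plugged into Theorem~\ref{thm:wedgescatter}; only the functional-calculus approximants affiliated with $\M'$ work, and the paper's choice $C_\epsilon=V\epsilon^{-1}\sin(\epsilon|\phi(\psi_1)|)$ is tailored to give the quantitative bounds needed in (a). The paper also organizes the argument as an induction on $k$, peeling off one $\phi^\chi(\psi_1)$ at a time, which keeps the estimates manageable.
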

We postpone the proof for a moment. Combining this result with Theorem~\ref{thm:localscatter}, we can conclude:

\begin{corollary}\label{corr:sfrompfg}
Suppose that the graded Borchers triple fulfills modular nuclearity, has a mass gap and a polarization-free generator $\pfg$. Then, for $\psi_1 \prec \ldots \prec \psi_k$ and $\eta_1 \prec \ldots \prec \eta_n$ with compact rapidity support, and $\chi$ as in Theorem~\ref{thm:pfgscatter}, we have
\begin{equation}\label{eq:pfgs}
\begin{aligned}
    \hscalar{ P_\grad \psi_1 \otimes \dotsm \otimes \psi_k & } { S \, P_\grad \eta_n\otimes \dotsm \otimes \eta_1 }_{\fock^\grad(\kcal)}
   \\  &= \frac{1}{\sqrt{k!n!}}
    \hscalar{ \phi(\psi_1)^\chi\dotsm \phi(\psi_k)^\chi \Omega  }{ U(\refl) \phi(U(\refl)\eta_1)^\chi\dotsm \phi(U(\refl) \eta_n)^\chi \Omega }_{\Hil} .
\end{aligned}
\end{equation}
\end{corollary}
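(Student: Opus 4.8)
The proof is essentially an assembly of Theorem~\ref{thm:localscatter}, Theorem~\ref{thm:pfgscatter} and the reflection intertwiner in \eqref{eq:mosymm}, so the plan is mostly bookkeeping. First I would unfold the left-hand side: since $S=(\mo^\grad_\scout)^\ast\mo_\scin^\grad$, it equals $\hscalar{\mo^\grad_\scout P_\grad(\psi_1\otimes\dotsm\otimes\psi_k)}{\mo_\scin^\grad P_\grad(\eta_n\otimes\dotsm\otimes\eta_1)}$. As $\psi_1\prec\dotsm\prec\psi_k$, the vector $\psi_1\otimes\dotsm\otimes\psi_k$ lies in $\fock^\prec(\kcal)$ and in the $k$-particle sector; using $\mo_\scinout P_\grad=\mo_\scinout$ and $\mo^\grad_\scout=\mo_\scout(N!)^{-1/2}\restriction\fock^\grad(\kcal)$ (note that $P_\grad$ preserves particle number) one gets $\mo^\grad_\scout P_\grad(\psi_1\otimes\dotsm\otimes\psi_k)=(k!)^{-1/2}\mo_\scout(\psi_1\otimes\dotsm\otimes\psi_k)$, and similarly a factor $(n!)^{-1/2}$ for the $\eta$'s. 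Theorem~\ref{thm:pfgscatter} then turns $\mo_\scout(\psi_1\otimes\dotsm\otimes\psi_k)$ into $\phi^\chi(\psi_1)\dotsm\phi^\chi(\psi_k)\Omega$.

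Next I would identify $\mo_\scin(\eta_n\otimes\dotsm\otimes\eta_1)$ via $U(\refl)\mo_\scout=\mo_\scin U_\fock(\refl)$. The point is that spacetime reflection preserves rapidity supports: since $U(\refl)$ is antiunitary and $U(\refl)U(x)=U(-x)U(\refl)$, the generator $P$ commutes with $U(\refl)$, hence $\raps{U(\refl)\psi}=\raps{\psi}$ and $U(\refl)$ maps $\cinfty(\kcal)$ into itself. Therefore $U(\refl)\eta_1\prec\dotsm\prec U(\refl)\eta_n$, the vector $U(\refl)\eta_1\otimes\dotsm\otimes U(\refl)\eta_n$ lies in $\fock^\prec(\kcal)$, and by the definition of $U_\fock(\refl)$ (which reverses the tensor order) it equals $U_\fock(\refl)(\eta_n\otimes\dotsm\otimes\eta_1)$. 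Passing this through the intertwiner and then through Theorem~\ref{thm:pfgscatter} once more yields
\[
  \mo_\scin(\eta_n\otimes\dotsm\otimes\eta_1) = U(\refl)\,\mo_\scout\big(U(\refl)\eta_1\otimes\dotsm\otimes U(\refl)\eta_n\big) = U(\refl)\,\phi^\chi(U(\refl)\eta_1)\dotsm\phi^\chi(U(\refl)\eta_n)\Omega .
\]
Substituting the two resulting vectors into the inner product above, together with the prefactor $(k!n!)^{-1/2}$, reproduces the right-hand side of \eqref{eq:pfgs}.

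The step I expect to need the most care --- and really the only nontrivial one --- is to fix a single Schwartz function $\chi$ that is simultaneously ``suitable'' in the sense of Theorem~\ref{thm:pfgscatter} for the whole finite family $\psi_1,\dots,\psi_k,U(\refl)\eta_1,\dots,U(\refl)\eta_n$. Each of these vectors has compact rapidity support (hence lies in $\cinfty(\kcal)$, the domain of $\phi$, so that $\phi^\chi(\cdotarg)$ is defined on it), so it suffices to take $\tilde\chi$ supported in a small enough common neighbourhood of the mass shell; with that one choice all applications of Theorem~\ref{thm:pfgscatter} above are legitimate, and no estimate beyond the cited theorems is required.
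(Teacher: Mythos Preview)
Your proof is correct and follows precisely the route the paper intends: the corollary is stated there without a detailed proof, merely as the combination of Theorem~\ref{thm:localscatter} with Theorem~\ref{thm:pfgscatter} (and implicitly the reflection relation in \eqref{eq:mosymm}), and your argument unpacks exactly these ingredients. The only point worth noting is that your verification of $\raps{U(\refl)\psi}=\raps{\psi}$ and your care in choosing a single $\chi$ make explicit what the paper leaves tacit, but there is no deviation in strategy.
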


Thus in a theory with a polarization-free generator, the scattering matrix is fully determined by matrix elements of the generator. 

In the remainder of this section, we give a proof of Theorem~\ref{thm:pfgscatter}, the hypothesis of which is assumed in the following. We need some technical prerequisites.

\begin{lemma}\label{lemma:swapapprox}
  For any $x,y \in \rbb^2$, $\delta(y-x)>0$, the following inclusions are dense:
  \begin{align}
  \label{eq:swapdense}
     \{\psi : \psi=A\Omega=\hat A\Omega \text{ with some } A \in \M', \hat A \in \M^\twist \} &\subset \Hil, 
\\
  \label{eq:hdense}
     \{\psi :  \psi=A\Omega=\hat A\Omega \text{ with some } A \in \M_y', \hat A \in \M_x^\twist, A,\hat A \in \lcal(\cinfty(\Hil))  \} &\subset \cinfty(\Hil), 
\\
  \label{eq:kdense}
     \{\psi : \psi = E_\mu A\Omega = E_\mu \hat A\Omega \text{ with some } A \in \M_y', \hat A \in \M_x^\twist, A,\hat A \in \lcal(\cinfty(\Hil))  \} &\subset \cinfty(\kcal).
  \end{align}
 Further, given $\psi \in \kcal$ with compact rapidity support, we can approximate it in $\cinfty(\kcal)$ with vectors of the form $E_\mu A(f) \Omega = E_\mu \hat A(f) \Omega$, where $A \in \M_y'$, $\hat A \in \M_x^\twist$, $A,\hat A \in \lcal(\cinfty(\Hil))$, and $\raps{f} \subset \raps{\psi}+[-\epsilon,\epsilon]$, where any $\epsilon>0$ can be chosen.
\end{lemma}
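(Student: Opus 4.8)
The plan is to establish the four density statements in turn, starting from the simplest and bootstrapping. The key structural input is the existence of disorder operators (Lemma~\ref{lemma:disorder}(\ref{it:vnontriv})) or, more directly, the split-type unitary $V$ from Proposition~\ref{prop:unitaryequiv}: for $x,y$ with $\delta(y-x)$ large enough this gives us, for any $A\in\M_y$, an operator $A^{\twist}$ in $\M_x^{\twist}$ with the same action on $\Omega$ up to the twist, and more importantly a way to ``swap'' a vector $A\Omega$ (with $A\in\M'$, hence localized in the left wedge) to a vector $\hat A\Omega$ with $\hat A\in\M^{\twist}$. However, the statement is asserted for \emph{all} $x,y$ with merely $\delta(y-x)>0$, so I cannot invoke modular nuclearity here. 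Instead I would argue as follows.

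For \eqref{eq:swapdense}: the set $\M'\Omega$ is dense by the cyclicity of $\Omega$ for $\M'$ (Tomita-Takesaki, since $\Omega$ is cyclic and separating for $\M$). The point is to show that a dense subset of these vectors is \emph{simultaneously} of the form $\hat A\Omega$ with $\hat A\in\M^{\twist}$. Here I would use the modular structure: for $A\in\M'$ with $A\Omega\in\dom\Delta^{1/2}$ suitably, the Tomita operator relates $A\Omega$ to an element of $\M$, and after the twist $Z$ (which commutes with $\Delta$) to an element of $\M^{\twist}$. Concretely, $\M'\Omega$ and $\M^{\twist}\Omega=\M^{\prime\twist}{}^{\prime}\Omega$ both contain the common core $\M'\Omega\cap\M^{\twist}\Omega$; I would identify this intersection by noting that $JZ^\ast\M' ZJ = J\M^{\prime\twist}J$ sits inside $\M^{\twist}$ by the modular theory of the twisted algebra (its modular conjugation is $ZJZ^\ast$ or $JZ^\ast$ as computed in Proposition~\ref{prop:fnet}), and that $\Delta^{it}$-analytic vectors in $\M'\Omega$ are mapped into $\M^{\twist}\Omega$. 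The density of such analytic vectors gives \eqref{eq:swapdense}.

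For \eqref{eq:hdense} and \eqref{eq:kdense}: I would upgrade the previous argument by smearing with translations. Given $\psi=A\Omega=\hat A\Omega$ from \eqref{eq:swapdense} with $A\in\M'$, $\hat A\in\M^{\twist}$, and a test function $g$ supported in a small neighbourhood of the origin, set $A(g)=\int g(x)U(x)AU(x)^\ast\,dx$ and similarly $\hat A(g)$. By covariance (Proposition~\ref{prop:fnet}), for $\supp g$ small enough $A(g)\in\M_y'$ and $\hat A(g)\in\M_x^{\twist}$, the smearing makes these operators map $\cinfty(\Hil)$ continuously into itself (the standard argument: $(1+H)^\ell A(g)(1+H)^{-\ell}$ is bounded because $g$ smears the energy-momentum translations, cf.~\cite{Bos:short_distance_structure}), and $A(g)\Omega=\hat A(g)\Omega$ since $\grad$ and $U$ commute and $\Omega$ is translation invariant. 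Running $g$ through an approximate identity makes $A(g)\Omega\to A\Omega=\psi$ in $\Hil$; to get convergence in $\cinfty(\Hil)$ one starts instead from a $\psi\in\cinfty(\Hil)$ in the image of \eqref{eq:swapdense}-vectors (dense in $\cinfty(\Hil)$ since $\cinfty(\Hil)=\bigcap_\ell(1+H)^{-\ell}\Hil$ and $(1+H)^{-\ell}$ maps the dense set of \eqref{eq:swapdense} to a dense set of $(1+H)^{-\ell}\Hil$) and smears; the $\cinfty$-continuity of $A(g),\hat A(g)$ then gives $\cinfty$-convergence. Applying $E_\mu$ (which commutes with $U$ and is continuous on $\cinfty$) and using that $E_\mu\cinfty(\Hil)=\cinfty(\kcal)$ yields \eqref{eq:kdense}.

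For the final assertion: given $\psi\in\kcal$ with compact rapidity support $\raps{\psi}$, I would first approximate $\psi$ in $\cinfty(\kcal)$ by a vector $\psi_0=E_\mu A\Omega=E_\mu\hat A\Omega$ from \eqref{eq:kdense} (which may enlarge the rapidity support a bit), and then \emph{localize in rapidity} by replacing $A$ with $A(f)$ where $\tilde f$ is supported near the mass shell within rapidity window $\raps{\psi}+[-\epsilon,\epsilon]$; since $E_\mu$ restricts to the mass shell, $E_\mu A(f)\Omega$ has rapidity support inside $\raps{f}$, and by choosing $\tilde f$ close to $1$ on the relevant region (and using that $E_\mu A\Omega$ already has rapidity support close to $\raps{\psi}$) one controls $\|E_\mu A(f)\Omega-\psi\|^{(\ell)}$. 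The main obstacle, and the point demanding care, is precisely the first density statement \eqref{eq:swapdense}: showing that one can realize the \emph{same} vector by an element of $\M'$ and by an element of $\M^{\twist}$ without invoking nuclearity, purely from the modular relation between $\M'$ and $\M^{\twist}$ — here one must exploit that $\Omega$ is cyclic and separating for $\M$ (hence for $\M^{\twist}$ and $\M'$), that $Z$ commutes with $\Delta$, and the interplay of the Tomita operators of $\M$ and $\M^{\twist}$, with $\Delta^{it}$-analytic vectors providing the common dense domain.
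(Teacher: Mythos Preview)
Your overall architecture matches the paper's: modular theory for \eqref{eq:swapdense} (the paper uses the explicit formula $\hat A = U(\refl)\Delta^{1/2}A\Delta^{-1/2}U(\refl)$ for $A\in\M'$ analytic for the modular group, which is what your ``$\Delta^{it}$-analytic vectors'' amount to), translation-smearing for \eqref{eq:hdense}/\eqref{eq:kdense}, and a rapidity cutoff for the last statement. The last two parts are essentially as in the paper.

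There is, however, a genuine gap in your argument for \eqref{eq:hdense}. You want to start from a vector $\psi$ that is \emph{simultaneously} in the set \eqref{eq:swapdense} and in $\cinfty(\Hil)$, and you claim such vectors are dense in $\cinfty(\Hil)$ because ``$(1+H)^{-\ell}$ maps the dense set of \eqref{eq:swapdense} to a dense set of $(1+H)^{-\ell}\Hil$''. This does not follow: $(1+H)^{-\ell}$ certainly maps a Hilbert-dense set to a dense subset of its range, but $(1+H)^{-\ell}\psi$ for $\psi=A\Omega=\hat A\Omega$ has no reason to remain of the form $A'\Omega=\hat A'\Omega$ with $A'\in\M'$, $\hat A'\in\M^\twist$; the resolvent of $H$ does not preserve the wedge algebras. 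A priori the intersection (set in \eqref{eq:swapdense}) $\cap\,\cinfty(\Hil)$ could be trivial, and your argument gives no control over it.

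The paper avoids this problem by a different order of operations: it first fixes $g\in\ccal_c^\infty(\ocal_{-y,y})$ and notes that $\psi(g)=\int g(x)\,U(x)\psi\,d^2x$ is $\cinfty$-dense as $g\to\delta$ and $\psi$ ranges over $\cinfty(\Hil)$; then, for \emph{fixed} $g$, it approximates $\psi$ in the \emph{Hilbert} norm only by $B\Omega=\hat B\Omega$ from \eqref{eq:swapdense}. The point is that
\[
\big\|H^\ell\big(\psi(g)-B(g)\Omega\big)\big\| \;=\; \Big\|\int d^2x\,\partial_0^\ell g(x)\,U(x)(\psi-B\Omega)\Big\| \;\leq\; \|\partial_0^\ell g\|_1\,\|\psi-B\Omega\|,
\]
so a Hilbert-norm approximation of $\psi$ upgrades, after smearing both sides with the same $g$, to a $\cinfty$-approximation of $\psi(g)$. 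This integration-by-parts trick (transferring $H^\ell$ onto derivatives of $g$) is the missing ingredient in your argument; once you insert it, the rest of your proof goes through as written.
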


\begin{proof}
  Regarding \eqref{eq:swapdense}, we know that $\{ A\Omega: A \in\M' \}\subset \Hil$ is dense (Def.~\ref{def:triple}(\ref{it:btcyclic})). For finding $\hat A$, it suffices to consider $A$ which is invariant under the involution $A \mapsto Z A^\ast Z^\ast$. Tomita-Takesaki theory yields
  \begin{equation}
      A \Omega = ZA^\ast \Omega = U(\refl) \Delta^{1/2} A \Omega,
  \end{equation}
  hence $\hat A := U(\refl) \Delta^{1/2} A \Delta^{-1/2} U(\refl)\in\M^\twist$ results in $A\Omega=\hat A \Omega$, \emph{supposing} that $\Delta^{1/2} A \Delta^{-1/2}$ exists in $\M'$. A sufficiently large set of such $A$ can be constructed by averaging methods as in \cite[Appendix B]{Duell:wedge_scattering}.

For density in \eqref{eq:hdense}, we may assume $x=-y$. We first note that vectors of the form $\psi(g):=\int d^2x g(x) U(x)\psi$, where $\psi\in\cinfty(\Hil)$, $g \in\ccal_c^\infty(\ocal_{-y,y})$, are dense in $\cinfty(\Hil)$. (Namely, since $U(x)$ is continuous on $\cinfty(\Hil)$, we get $\psi(g) \to \psi$ when $g\to\delta$.) Now given $\psi$ and $g$, and chosen $\epsilon>0$, pick $B \in \M'$, $\hat B \in \M^\twist$ such that $B\Omega=\hat B \Omega$, $\| \psi - B \Omega \| < \epsilon$ as above. Then $A:=B(g)\in\M_{y}'$ and $\hat A :=\hat B(g)\in\M_{-y}^\twist$ are in $\lcal(\cinfty(\Hil))$, as one easily checks; and
  \begin{equation}
      \big\lVert H^\ell \big(\psi(g) - A \Omega\big) \big\rVert = \Big \| \int d^2x \, \partial_0^\ell g(x) \, U(x)(\psi-B\Omega) \Big\| \leq \epsilon \| \partial_0^\ell g \|_1,
  \end{equation}
  which becomes small as $\epsilon \to 0$ at fixed $\ell$. Hence density in \eqref{eq:hdense} holds; then density in \eqref{eq:kdense} follows since $E_\mu$ is continuous in the $\cinfty(\Hil)$-topology. 
  
  Finally, let $\raps{\psi}$ be compact. Let $A,\hat A$ be as in \eqref{eq:kdense}, and choose $f \in \scal(\rbb)$ such that $\tilde f(p)=1$ on $\ucal:=\mu\sinh\raps{\psi}$, $\tilde f=0$ outside an $\epsilon$-neighbourhood $\ucal_\epsilon$ of $\ucal$, and $|\tilde f|\leq 1$ everywhere. Then 
\begin{equation}
   \| A(f) \Omega - \psi \|^2 = \| E_\mu \tilde f(P^1) A \Omega - \psi \|^2 
   \leq \| E_\mu E(p^1 \in \ucal) (A \Omega - \psi) \|^2+ \|E_\mu E(p^1 \in \ucal_\epsilon \backslash \ucal) A \Omega \|.
\end{equation}
The first term can be made small by choice of $A$, and the second as $\epsilon \to 0$ (for given $A$).
\end{proof}

We now show the following variant of \cite[Lemma~3.1(a)]{BBS:polarizationfree}.

\begin{lemma}\label{lemma:pfgwave}
If $\pfg$ is a polarization-free generator, and $\Phi,\hat\Phi\in\cinfty(\Hil)$, then
\begin{equation}\
   g_{\Phi,\hat\Phi,\psi} :  \rbb^2 \to \cbb, \quad x \mapsto \hscalar{\Phi}{U(x) \pfg(\psi) U(x)^\ast \hat\Phi}
\end{equation}
is a smooth solution of the Klein-Gordon equation with mass $\mu$. If $f\in\scal(\rbb^2)$ such that $\raps{f}$ is compact, and $\tilde \chi$ has support near the mass shell, then  
$\pfg(\psi)^\chi_\tau(f)$ is independent of $\tau$.
\end{lemma}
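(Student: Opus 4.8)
The plan is to prove two separate assertions about $g_{\Phi,\hat\Phi,\psi}$, namely smoothness together with the Klein-Gordon equation, and then the $\tau$-independence of $\pfg(\psi)^\chi_\tau(f)$, which follows from the former.

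First I would establish that $g_{\Phi,\hat\Phi,\psi}$ is smooth: writing $U(x)=e^{iP\cdot x}$ and using that $\Phi,\hat\Phi\in\cinfty(\Hil)$ lie in the domain of arbitrary powers of $H$ (hence of all components of $P$, by the spectrum condition), one differentiates under the pairing. Here one must be slightly careful because $\pfg(\psi)$ is unbounded, but Definition~\ref{def:pfg}(\ref{it:pfg-domain}) guarantees that $\pfg(\psi)$ and $\pfg(\psi)^\ast$ restrict to continuous operators on $\cinfty(\Hil)$, and $U(x)$ acts continuously on $\cinfty(\Hil)$; so $x \mapsto U(x)\pfg(\psi)U(x)^\ast\hat\Phi$ is a smooth $\cinfty(\Hil)$-valued map, and pairing with $\Phi$ preserves smoothness. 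Next, to get the Klein-Gordon equation, I would apply $(\dalembert_x + \mu^2)$ to $g$, pushing the derivatives onto the translations. The key input is Definition~\ref{def:pfg}(\ref{it:pfg-vac}), $\pfg(\psi)\Omega=\psi\in\kcal=E_\mu\Hil$, combined with the mass gap: one writes $U(x)\pfg(\psi)U(x)^\ast = \int$ (spectral decomposition of $P$ acting from the left and right) and uses that, because the relevant spectral content responsible for the Klein-Gordon behaviour sits on the mass shell $p\cdot p=\mu^2$, each plane wave $e^{ip\cdot x}$ with $p\cdot p=\mu^2$ is annihilated by $\dalembert+\mu^2$. More precisely, since $U(x)^\ast\hat\Phi$ has a well-defined energy-momentum content and $\pfg(\psi)$ creates a single-particle state from the vacuum, the field-vacuum correlation inherits the mass shell; the cleanest route is to observe $g_{\Phi,\hat\Phi,\psi}(x) = \hscalar{U(-x)\Phi}{\pfg(\psi) U(-x)^\ast\hat\Phi}$ and expand $U(-x)^\ast\hat\Phi$ as $\int U(-x)^\ast dE(q)\hat\Phi$, then use the single-particle affiliation/creation property so that only the $\mu$-mass-shell part of the $P$-action survives in the variable $x$.

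The second assertion, $\tau$-independence of $\pfg(\psi)^\chi_\tau(f)$, should follow by the same mechanism that makes the Haag-Ruelle "creation operators" $\tau$-independent in the free case. Writing out $\pfg(\psi)^\chi_\tau(f) = \int dx\, f(\tau,x)\, U(\tau,x)\pfg(\psi)^\chi U(\tau,x)^\ast$ with $f(\tau,x)$ the Klein-Gordon wave packet defined before Theorem~\ref{thm:wedgescatter}, one differentiates with respect to $\tau$. The $\tau$-derivative of $f(\tau,x)$ equals (up to sign and the $i$) the action of $-\sqrt{-\partial_x^2+\mu^2}$, equivalently on the Fourier side multiplication by $-i\sqrt{k^2+\mu^2}$; integrating by parts in $x$ and matching this against the $x$-translation derivative of $U(\tau,x)\pfg(\psi)^\chi U(\tau,x)^\ast$, one sees that the time-derivative term combines with the spatial Laplacian term exactly into $(\dalembert+\mu^2)$ applied to the operator-valued distribution $x\mapsto U(\tau,x)\pfg(\psi)^\chi U(\tau,x)^\ast$ paired with the wave packet — and this vanishes because $\pfg(\psi)^\chi$, being an averaging of $\pfg(\psi)$ with $\tilde\chi$ supported near the mass shell, still satisfies the Klein-Gordon equation in the above correlation-function sense by the first part (the $\chi$-smearing only restricts to, and does not leave, the mass shell). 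So $\partial_\tau \hscalar{\Phi}{\pfg(\psi)^\chi_\tau(f)\hat\Phi}=0$ for all $\Phi,\hat\Phi\in\cinfty(\Hil)$, and since these are dense the operator itself is $\tau$-independent on $\cinfty(\Hil)$, hence (by closedness, Definition~\ref{def:pfg}(\ref{it:pfg-domain})) everywhere it is defined.

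I expect the main obstacle to be handling the unboundedness of $\pfg(\psi)$ rigorously while moving derivatives around: one cannot naively treat $g$ as a matrix element of a bounded operator, and one has to justify interchanging differentiation, integration over $x$, and the (generally unbounded) operator $\pfg(\psi)$. The $\cinfty(\Hil)$-continuity in Definition~\ref{def:pfg}(\ref{it:pfg-domain}),(\ref{it:pfg-continuous}) is exactly designed to make this work, so the argument is really about carefully invoking that continuity at each step rather than any deep new idea; the Klein-Gordon property itself is then essentially forced by $\pfg(\psi)\Omega\in E_\mu\Hil$ together with the mass gap, as in \cite[Lemma~3.1(a)]{BBS:polarizationfree}. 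A secondary point to check is that the $\chi$-averaging $A\mapsto A^\chi$ genuinely preserves the Klein-Gordon behaviour of the relevant correlation, which is immediate once one notes $\tilde\chi$ meets $\supp E$ only on the mass shell.
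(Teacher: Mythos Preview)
Your proposal has a genuine gap in the Klein--Gordon step. You argue that $g_{\Phi,\hat\Phi,\psi}$ solves the Klein--Gordon equation because $\pfg(\psi)\Omega=\psi\in E_\mu\Hil$; but that vacuum property alone constrains nothing about the energy--momentum transfer of $\pfg(\psi)$ on a general vector $\hat\Phi$. The Fourier transform of $g$ in $x$ is supported on the set of momentum transfers of $\pfg(\psi)$ between arbitrary spectral subspaces of $\Phi$ and $\hat\Phi$, and there is no reason this should lie on the mass shell merely because the vacuum is mapped to $\kcal$. (A bounded operator $T$ with $T\Omega\in E_\mu\Hil$ can have arbitrary momentum transfer elsewhere.) Your spectral-decomposition sketch does not close this gap: expanding $U(-x)\hat\Phi$ in $dE(q)\hat\Phi$ still leaves $\pfg(\psi)$ acting on non-vacuum components. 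The phrase ``single-particle affiliation/creation property'' conflates two different items of Definition~\ref{def:pfg}.

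What is missing is precisely the affiliation property, item~(\ref{it:pfg-aff}) of Definition~\ref{def:pfg}, which you never invoke. The paper's argument uses it as follows: approximate $\hat\Phi$ by $A\Omega$ with $A\in\M_y\cap\lcal(\cinfty(\Hil))$ and $\psi$ by $E_\mu B\Omega$ with $B\in\M_y'$ (Lemma~\ref{lemma:swapapprox}); then $\pfg(\psi)$ is affiliated with $\M_y'$, so for $x\in\wedg_L+y$ one can commute $U(x)\pfg(\psi)U(x)^\ast$ past $A$ to obtain $g(x)=\hscalar{A^\ast\Phi}{U(x)\pfg(\psi)\Omega}=\hscalar{A^\ast\Phi}{U(x)\psi}$, which is manifestly Klein--Gordon since $\psi\in E_\mu\Hil$. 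Varying $y$ extends this to all $x$. Without this wedge-commutation step there is no mechanism to reduce to the vacuum, and your argument does not go through. Your treatment of the second assertion ($\tau$-independence) is then built on the first and so inherits the problem; once the Klein--Gordon property is correctly established, the $\tau$-independence follows by the standard positive/negative frequency pairing argument, essentially as you outline (though the paper phrases it via $\chi\ast g$ being a negative-energy solution paired against the positive-energy packet $f$, rather than by direct differentiation).
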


\begin{proof}
 Using Lemma~\ref{lemma:swapapprox}, the continuity properties of $\pfg$, and the fact that the generators of $U(x)$ are continuous on $\cinfty(\Hil)$ (hence limits of solutions are solutions), it suffices to consider the case $\hat\Phi = A\Omega$ and $\psi=E_\mu B \Omega$ with $A \in \M_y\cap \lcal(\cinfty(\Hil))$ for some $y\in\wedg_R$, and $B \in \M'_y\cap \lcal(\cinfty(\Hil))$. 
 In this case, since $\pfg(\psi)$ is affiliated with $\M_y'$, and for $\delta(y-x)>0$,
 \begin{equation}
   g_{\Phi,\hat\Phi,\psi} (x) =  \hscalar{A^\ast \Phi}{U(x) \pfg(\psi) U(x)^\ast \Omega} = \hscalar{A^\ast \Phi}{U(x) E_\mu B \Omega}. 
 \end{equation}
 This is a solution of the Klein-Gordon equation since $U(x)E_\mu$ is. Thus $g_{\Phi,\hat\Phi,\psi}$ is a solution at least in the region $x \in \wedg_L+y$. Since $y$ was arbitrary, the first claim follows.
 
 Now one computes for $\Phi,\hat\Phi \in \cinfty(\Hil)$,
 \begin{equation}
    \hscalar{\Phi}{ \pfg(\psi)^\chi_\tau(f) \hat\Phi}
    = \int dx\, f(\tau,x)\, \big(\chi \ast g_{\Phi,\hat\Phi,\psi}\big)(\tau,x)   
 \end{equation}
 where $\ast$ denotes convolution in two dimensions. By the support properties of $\tilde \chi$, the function $\chi \ast g_{\Phi,\hat\Phi,\psi}$ is a negative-energy solution of the Klein-Gordon equation, while $(\tau,x)\mapsto f(\tau,x)$ is a positive-energy solution by its definition in \eqref{eq:achidef}, hence the integral expression is independent of $\tau$. 
\end{proof}

Next, we note the following improvement of the convergence statement    in Theorem~\ref{thm:wedgescatter}:
\begin{proposition}\label{prop:improvedconv}
Let $x,y\in\rbb^2$ and $A_1,\ldots,A_n \in \M_y'$, $\hat A_2,\ldots,\hat A_n \in \M_x^\twist$ such that $E_\mu A_j \Omega = E_\mu \hat A_j \Omega$ for $j\geq 2$. 
Let $f_1,\ldots,f_n\in\scal(\rbb)$ such that $\raps{f_1} \prec \dots \prec \raps{f_n}$. 
For every $\ell\geq 0$, $N >0$ there exists a constant $c$ such that for all $\tau \geq 1$,
\begin{equation}
\begin{split}
     \| A_{1,\tau}^\chi(f_1) A^\chi_{2,\tau}(f_2)\dotsm A^\chi_{n,\tau}(f_n)\Omega 
     - \mo_\scout \big(A^\chi_{1}(f_1)\Omega \otimes \dotsm \otimes A^\chi_{n}(f_n)\Omega\big) \|^{(\ell)} 
    \\ \leq c \tau^{-N} \|A_1\| \prod_{j\geq 2} \big( \|A_j\| + \|\hat A_j\| \big)
\end{split}
\end{equation}
where the constant $c$ may depend on $n$, $f_j$, $\chi$, $\ell$, $N$ but not on the $A_j$, $\hat A_j$.
\end{proposition}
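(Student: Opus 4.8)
The plan is a Cook-type argument on $\Phi(\tau):=A_{1,\tau}^\chi(f_1)\dotsm A_{n,\tau}^\chi(f_n)\Omega$, combined with the graded forms of the swapping estimates of \cite{Duell:wedge_scattering}, keeping track throughout of the decay rate, of the $\cinfty(\Hil)$-norms, and of the linear dependence on operator norms. Write $B_j^\tau:=A_{j,\tau}^\chi(f_j)$ and, for $j\ge 2$, $\hat B_j^\tau:=\hat A_{j,\tau}^\chi(f_j)$. Two preliminary facts will be needed. First, since $\tilde\chi$ meets $\supp E$ only on the mass shell, one has $A_j^\chi\Omega\in\kcal$, and a direct computation on $\kcal$ (where $U(\tau,x)$ acts as $e^{i\tau\omega(P^1)}e^{-ixP^1}$) gives $B_j^\tau\Omega=\tilde f_j(P^1)A_j^\chi\Omega$, independently of $\tau$; the same for $\hat B_j^\tau$, and using $E_\mu A_j\Omega=E_\mu\hat A_j\Omega$ these agree, so $B_j^\tau\Omega=\hat B_j^\tau\Omega=:\psi_j$ is $\tau$-independent and hence $\tfrac{d}{d\tau}B_j^\tau\,\Omega=0$ for every $j$. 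Second, $A_j^\chi$ has energy--momentum transfer in $\supp\tilde\chi$, and so do $B_j^\tau$, $\hat B_j^\tau$ and (after using the Klein--Gordon equation for $f_j$ to trade $\tau$- for $x$-derivatives, which leaves the transfer and the essential localization unchanged) their $\tau$-derivatives; consequently all of these lie in $\lcal(\cinfty(\Hil))$ with $\|\cdot\|^{(\ell)}$-operator norms $\le P_\ell(\tau)\|A_j\|$ (resp.\ $\le P_\ell(\tau)\|\hat A_j\|$) for polynomials $P_\ell$ depending only on $f_j,\chi,\ell$, and, up to $\cinfty$-norm tails of order $\tau^{-\infty}$, $B_j^\tau$ and $\tfrac{d}{d\tau}B_j^\tau$ are built from translates of $\M'$ by $(\tau,\tau v)$ with $v\in\tanh\raps{f_j}$, while $\hat B_j^\tau$ is built likewise from translates of $\M^\twist$.

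The core estimate is the following: for every $\ell\ge 0$ and $N\ge 1$ there is a constant $c$, depending on $n,\chi,f_j,\ell,N$ only, such that $\|\tfrac{d}{d\tau}\Phi(\tau)\|^{(\ell)}\le c\,\tau^{-N}\,\|A_1\|\prod_{j\ge 2}(\|A_j\|+\|\hat A_j\|)$ for all $\tau\ge 1$. Granting this, integrating $\tfrac{d}{d\tau}\Phi$ from $\tau$ to $\infty$ (applying the estimate with $N+1$) shows that $\Phi(\tau)$ converges in $\cinfty(\Hil)$ with exactly the rate and operator-norm dependence asserted, so it only remains to identify the limit. To prove the core estimate, expand $\tfrac{d}{d\tau}\Phi(\tau)=\sum_{j=1}^n B_1^\tau\dotsm B_{j-1}^\tau\big(\tfrac{d}{d\tau}B_j^\tau\big)B_{j+1}^\tau\dotsm B_n^\tau\Omega$ by the Leibniz rule and treat the $j$-th summand by bringing $\tfrac{d}{d\tau}B_j^\tau$ adjacent to $\Omega$: one moves it rightward past $B_{j+1}^\tau,\dotsc,B_n^\tau$, replacing a factor $B_k^\tau$ ($k>j$, whence $k\ge 2$) by its twisted partner $\hat B_k^\tau$ whenever this is needed to reach twisted-spacelike separation. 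Here the ordering $\raps{f_1}\prec\dotsm\prec\raps{f_n}$, together with twisted locality of $\F$ (Proposition~\ref{prop:fnet}) and the elementary wedge geometry, guarantees that a translate of $\M'$ near $(\tau,\tau v)$ graded-commutes with a translate of $\M^\twist$ near $(\tau,\tau v')$ once $v\le v'$; for merely touching (rather than disjoint) rapidity supports the two regions share a tip, so commutation still holds up to the $\tau^{-\infty}$ tails of the wave packets. The graded commutators are evaluated via \eqref{eq:gradpermute} and contribute only $\tau^{-\infty}$-terms; the replacement errors are handled using $(B_k^\tau-\hat B_k^\tau)\Omega=0$ and are likewise $\tau^{-\infty}$-small (in every $\cinfty$-norm); and once $\tfrac{d}{d\tau}B_j^\tau$ stands next to $\Omega$ the term vanishes, since $\tfrac{d}{d\tau}B_j^\tau\,\Omega=0$. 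The leftmost factors $B_1^\tau,\dotsc,B_{j-1}^\tau$ are never moved, which is precisely why no twisted partner $\hat A_1$ is required. Each surviving error term is thus a product of $\lcal(\cinfty(\Hil))$-operators of norm $\le P(\tau)\|A_i\|$ or $\le P(\tau)\|\hat A_i\|$ (for $i\ge 2$ bounded in either case by $P(\tau)(\|A_i\|+\|\hat A_i\|)$) times a single $\tau^{-\infty}$-small commutator or replacement factor, carrying just one $\|A_1\|$; choosing $N$ large absorbs the polynomials and yields the core estimate with the stated structure of the constant.

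It remains to identify $\lim_{\tau\to\infty}\Phi(\tau)$ with $\mo_\scout\big(A_1^\chi(f_1)\Omega\otimes\dotsm\otimes A_n^\chi(f_n)\Omega\big)$. When $A_1$ itself admits a twisted partner this is Theorem~\ref{thm:wedgescatter}; for a general $A_1\in\M'$ the $\Hil$-limit depends on $A_1$ only through $\psi_1=A_1^\chi(f_1)\Omega$, which by Lemma~\ref{lemma:swapapprox} can be approximated in $\kcal$ by ``swappable'' vectors with rapidity support still $\prec\raps{f_2}$, and the limit in $\tau$ may be interchanged with this approximation using the uniform-in-$\tau$ bound just obtained (applied to differences), giving the identification. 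The main obstacle is the graded, quantitative form of the swapping step --- the $\tau^{-\infty}$ decay of the graded commutators and replacement errors, uniform over all $\cinfty$-norms and linear in the operator norms; this is, however, only the graded counterpart of estimates already performed in \cite{Duell:wedge_scattering}, so the genuinely new ingredient is the systematic use of \eqref{eq:gradpermute} in place of ordinary (anti)commutation together with the bookkeeping of constants, the geometric input being unchanged.
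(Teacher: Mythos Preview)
Your proposal is essentially correct and follows the same strategy as the paper: a Cook-type argument together with the graded versions of the swap estimates from \cite{Duell:wedge_scattering}, with the bookkeeping of constants and the systematic use of \eqref{eq:gradpermute} being the only genuinely new ingredients. Two small differences from the paper are worth noting.

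First, for the $\|\cdot\|^{(\ell)}$-norms: you control these directly by showing that $B_j^\tau$, $\hat B_j^\tau$ and their $\tau$-derivatives lie in $\lcal(\cinfty(\Hil))$ with polynomially growing operator norms, and then run the swap estimate inside $\cinfty(\Hil)$. The paper instead estimates $\|\partial_\tau^{\ell+1}\Psi_n(\tau)\|$ (using the Klein--Gordon equation for $f_j$ to keep only first time derivatives in the final expressions). Both approaches are standard and equivalent in effect.

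Second, and more substantively: your identification of the limit via approximating $\psi_1$ by swappable vectors is more circuitous than necessary, and the sentence ``the limit in $\tau$ may be interchanged with this approximation using the uniform-in-$\tau$ bound just obtained (applied to differences)'' is not quite justified as stated --- the Cook bound depends on $\|A_1\|$, not on $\|\psi_1\|$, so the required uniformity in the approximation index is not automatic. The paper's route avoids this: one simply observes (as you yourself note earlier) that in the inductive proof of \cite[Theorem~6]{Duell:wedge_scattering} the outermost factor $A_1$ is never swapped, so the construction of $\mo_\scout$ and the identification of the limit already go through without any $\hat A_1$. No density argument is needed. Alternatively, the swap argument applied to $(B_1^\tau-(B_1')^\tau)$ directly shows that the limit depends only on $\psi_1$, and the computation of Fock norms (isometry, \cite[Proposition~23]{Duell:wedge_scattering}) likewise does not use $\hat A_1$; this gives continuity of the limit in $\psi_1$ and hence the identification.
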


\begin{proof}
 We indicate the changes over \cite[Theorem~6]{Duell:wedge_scattering}, referring mostly to the notation and numbering used in that paper:
 \begin{enumerate}[(a)]
 \item As evident from the proof of Theorem 6 there, ``swapability'' of the vector $E_\mu A_j \Omega $ is only required for $j \geq 2$; the base case of the induction does not require it. Hence we can drop the operator $\hat A_1$ from our assumptions.
 \item All estimates there, in particular in Proposition~8(iv),(v), hence in Lemma 9 and Corollary 10, can be done uniformly in $\|A_j\|, \|\hat A_j\|$; see also \cite[Appendix~B]{Duell:reehschlieder}.
  \item Cook's method in Eq.~(30) can likewise be applied to bound $\|H^\ell \Psi_n(\tau) \|$, by estimating $\|\partial_\tau^{\ell+1} \Psi_n(\tau) \| \leq C_{N,\ell} \tau^{-N}$. Note that, since $f_j(\tau,x)$ solve the Klein-Gordon equation, only first time derivatives of $f$ actually occur in the estimates. 
  \item The graded case can be accommodated by showing Corollary 10 analogously for even and for odd operators $B$, $B^\perp$, then splitting all $A_j$, $\hat A_j$  into even and odd parts and replacing their commutators with graded expressions as in \eqref{eq:gradpermute}. \qedhere
\end{enumerate}
\end{proof}

Finally, we make use of the following approximation property for closed operators affiliated with a von Neumann algebra.\footnote{The authors would like to thank Y.~Tanimoto for discussions on this aspect.}

\begin{lemma}\label{lemma:closeapprox}
Let $T$ be a closed, densely defined operator affiliated with a von Neumann algebra $\ncal$. Then for each $\epsilon>0$ there exists $C_\epsilon\in\ncal$ such that $\|C_\epsilon\| \leq \epsilon^{-1}$ and for each $\Psi \in \dom T^\ast T$,
\begin{equation}
 \| (T-C_\epsilon) \Psi\| \leq  \epsilon \,\|T^\ast T \Psi\|.  
\end{equation}
\end{lemma}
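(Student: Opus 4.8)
The plan is to use the polar decomposition of $T$ together with the Borel functional calculus to manufacture $C_\epsilon$ inside $\ncal$. Write $T = W|T|$ with $|T| = (T^\ast T)^{1/2}$ and $W$ a partial isometry; since $T$ is affiliated with $\ncal$, both $W\in\ncal$ and all bounded Borel functions of $|T|$ lie in $\ncal$. The natural candidate is
\begin{equation}
  C_\epsilon := W\, |T|\, g_\epsilon(|T|),
\end{equation}
where $g_\epsilon:[0,\infty)\to[0,1]$ is a bounded Borel function, to be chosen so that $g_\epsilon\equiv 1$ on a large interval (so that $C_\epsilon$ approximates $T$ on vectors with $|T|$-spectrum concentrated near the origin) and decays fast enough at infinity to keep the operator norm under control. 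Concretely, on the spectral subspace where $|T|\leq 1/\epsilon$ we want $g_\epsilon=1$, and for larger values of $|T|$ we let $g_\epsilon(\lambda)=1/(\epsilon\lambda)$; then $\lambda\, g_\epsilon(\lambda)\leq 1/\epsilon$ everywhere, so $\|C_\epsilon\|\leq \|W\|\,\sup_\lambda \lambda g_\epsilon(\lambda)\leq \epsilon^{-1}$ as required.

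For the approximation estimate, fix $\Psi\in\dom T^\ast T = \dom |T|^2$. Then $\Psi\in\dom|T|\subset\dom T$ and
\begin{equation}
  (T - C_\epsilon)\Psi = W\,|T|\,\bigl(1 - g_\epsilon(|T|)\bigr)\Psi = W\, h_\epsilon(|T|)\,\Psi,
  \qquad h_\epsilon(\lambda):=\lambda\bigl(1-g_\epsilon(\lambda)\bigr).
\end{equation}
By our choice of $g_\epsilon$, the function $h_\epsilon$ vanishes on $[0,1/\epsilon]$, and for $\lambda>1/\epsilon$ one has $h_\epsilon(\lambda)=\lambda - 1/\epsilon \leq \lambda$; in any case $h_\epsilon(\lambda)\leq \epsilon\,\lambda^2$ for all $\lambda\geq 0$ (this holds trivially where $h_\epsilon=0$, and where $\lambda>1/\epsilon$ we have $h_\epsilon(\lambda)\leq\lambda<\epsilon\lambda^2$). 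Hence, using that $W$ is a contraction and the spectral theorem for the self-adjoint operator $|T|$,
\begin{equation}
  \|(T-C_\epsilon)\Psi\| \leq \|h_\epsilon(|T|)\Psi\| \leq \epsilon\,\bigl\||T|^2\Psi\bigr\| = \epsilon\,\|T^\ast T\Psi\|.
\end{equation}

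The only genuinely delicate point is making sure the domain bookkeeping is airtight: one must check that $\Psi\in\dom T^\ast T$ indeed lies in $\dom T$ and that $T\Psi = W|T|\Psi$ (standard for the polar decomposition of a closed operator, but worth citing), and that $C_\epsilon\in\ncal$ — which follows because affiliation of $T$ with $\ncal$ gives $W\in\ncal$ and $|T|\,\eta\,\ncal$, so bounded Borel functions of $|T|$, and in particular the bounded operator $|T|g_\epsilon(|T|)$, belong to $\ncal$. Everything else is elementary spectral calculus, and no use of separability, nuclearity, or the grading is needed.
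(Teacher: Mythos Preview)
Your proof is correct and follows essentially the same approach as the paper: polar decomposition $T=W|T|$ together with functional calculus to produce $C_\epsilon = W f_\epsilon(|T|)$ for a bounded function $f_\epsilon$ with $\sup|f_\epsilon|\leq \epsilon^{-1}$ and $|\lambda - f_\epsilon(\lambda)|\leq \epsilon\lambda^2$. The only difference is cosmetic: the paper picks $f_\epsilon(\lambda)=\epsilon^{-1}\sin(\epsilon\lambda)$, whereas you use the truncation $f_\epsilon(\lambda)=\min(\lambda,\epsilon^{-1})$; both choices satisfy the required pointwise bounds.
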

\begin{proof}
  With $T = V |T|$ the polar decomposition, set $C_\epsilon := V \epsilon^{-1} \sin(\epsilon |T|)\in\ncal$. Then clearly $\|C_\epsilon\|\leq\epsilon^{-1}$ and
  \begin{equation}
    \| (T-C_\epsilon) \Psi\| \leq  \epsilon \| g(\epsilon |T|) \, |T|^2 \, \Psi \|  \quad \text{where } g(\lambda) =  \frac{\lambda - \sin \lambda}{\lambda^2}. 
  \end{equation}
  It now suffices to note that $|g(\lambda)| \leq 1$.
  \end{proof}

\begin{proof}[Proof of Theorem~\ref{thm:pfgscatter}]
By Lemma~\ref{lemma:swapapprox} and the continuity properties of polarization free generators, we can assume that $\psi_j= E_\mu A_j (f_j)\Omega =E_\mu \hat A_j (f_j)\Omega$ with $A_j \in \M_y' \cap \lcal(\cinfty(\Hil))$, $\hat A_j \in \M_x^\twist \cap \lcal(\cinfty(\Hil))$, and $f_1 \prec \dots \prec f_k$. 
By linearity, we may also assume that $A_1=A_1^\ast$ (noting that $\hat A_1$ does not play a role in the following).
We will show that
\begin{equation}\label{eq:morecursive}
  \mo_\scout (\psi_1 \otimes \dotsm \otimes \psi_k) = \pfg(\psi_1)^\chi \mo_\scout (\psi_2 \otimes \dotsm \otimes \psi_k) =: \pfg(\psi_1)^\chi \Psi_{k-1} 
\end{equation}
of which the claim \eqref{eq:pfgmo} follows by induction on $k$. 

For that, first note that the expressions in \eqref{eq:morecursive} are well-defined: With the restrictions introduced above, we know that $\psi_j\in\cinfty(\kcal)$. Further, due to proprety \eqref{eq:mosymm} for time translations, $\mo_\scout$ intertwines $H$ with the second-quantized Hamiltonian on $\fock(\kcal)$. Together, this implies that $\Psi_{k-1} \in \cinfty(\hcal)$; in particular, $\Psi_{k-1}$ is in the domain of $\pfg(\psi_1)^\chi$.

Now using that $\pfg(E_\mu A_1 \Omega)$ is affiliated with $\M_y'$, choose an approximating sequence $C_\epsilon\in\M_y'$ for $\pfg(E_\mu A_1 \Omega)$ as in Lemma~\ref{lemma:closeapprox}. 
Then  
 \begin{equation}\label{eq:pfgest1}
\big\lVert \pfg(\psi_1)^\chi(f_1) \Psi_{k-1} -C_{\epsilon,\tau}^\chi(f_{1}) \Psi_{k-1} \big\rVert 
\leq  \epsilon \, \|\chi\|_1 \,\|f_{1}(\tau,\cdotarg)\|_1\,
 \big\lVert \,\lvert \pfg(E_\mu A_1 \Omega)\rvert^2 \Psi_{k-1} \big\rVert 
 \leq 
  c \epsilon \tau   ^{1/2}  
 \end{equation}
 using the standard estimate  $\|f_{1}(\tau,\cdotarg)\|_1 = O(\tau^{1/2})$, and noting that $\pfg(\psi_1)^\chi_\tau(f_{1})$ is independent of $\tau$ by  Lemma~\ref{lemma:pfgwave}. 
 
 Further, using Proposition~\ref{prop:improvedconv} (for $k-1$ in place of $k$), we find
 \begin{equation}\label{eq:pfgest2}
 \begin{aligned}
    \big\lVert C_{\epsilon,\tau}^\chi(f_{1})  A_{2,\tau}^\chi(f_2) \dotsm A_{k,\tau}^\chi(f_k)\Omega & - C_{\epsilon,\tau}^\chi(f_{1}) \Psi_{k-1} \big\rVert
  \\
    &\leq \| C_{\epsilon}\|\, \|\chi\|_1 \, \|f_{1}(\tau,\cdotarg)\|_1\, c' \tau^{-N}
    %\\
    %&
    \leq c'' \epsilon^{-1} \tau^{1/2} \tau^{-N}
\end{aligned}
  \end{equation}
 with constants $c'$, $c''$, where $N$ can be chosen arbitrary. Further, from Proposition~\ref{prop:improvedconv},
\begin{equation}\label{eq:pfgest3}
    \Big\lVert C_{\epsilon,\tau}^\chi(f_{1})  A_{2,\tau}^\chi(f_2) \dotsm A_{n,\tau}^\chi(f_n) \Omega- \mo_\scout\big(E_\mu C_\epsilon^\chi(f_1) \Omega \otimes \psi_2\otimes\dotsm \otimes \psi_n \big)  \Big\rVert
    \leq c''' \epsilon^{-1} \tau^{-N}.
\end{equation}

The final step is as follows: We note that
\begin{equation}
   \psi_1 = \pfg(\psi_1)\Omega = \pfg(\psi_1)^\chi \Omega =   \pfg(E_\mu A_1 \Omega)^\chi \Omega,
\end{equation}
where the first equality follows since $\pfg$ is a polarization-free generator, and the second since $\tilde\chi(p)=1$ for all $p \in \supp E(\cdotarg) \psi_1$. Now using isometry of $\mo_\scout$ on $\fock^\prec(\kcal)$, we find
\begin{equation}\label{eq:pfgest4}
\begin{aligned}
    \Big\lVert \mo_\scout\big(E_\mu C_\epsilon^\chi(f_1)\Omega\otimes \psi_2 \otimes \dotsm \otimes \psi_n\big) &- \mo_\scout(\psi_1 \otimes \dotsm \otimes \psi_n)  \Big\rVert
    \\ &\leq \, 
    \big\lVert E_\mu \big(C_\epsilon-\pfg(E_\mu A_1 \Omega)\big)^\chi(f_1)\Omega \big\rVert \, \| \psi_2 \otimes \dotsm \otimes \psi_n  \|
    \\ &\leq c'''' \epsilon \tau^{1/2} \tau^{-N}. 
\end{aligned}
\end{equation}
Overall, choosing $\epsilon = \tau^{-1}$ and $N$ large enough, all of \eqref{eq:pfgest1}, \eqref{eq:pfgest2}, \eqref{eq:pfgest3}, \eqref{eq:pfgest4} vanish as $\tau \to \infty$, showing the claim. 
 \end{proof}

\section{Graded integrable models} \label{sec:integrable}

We now present a class of examples for graded Borchers triples, taken from the theory of integrable models. Specifically, we take the Borchers triples for integrable models with one species of scalar particle \cite{Lechner:2008}, which we now consider with a grading.

\subsection{Wedge-local fields and algebras}

We recall how these triples were constructed: Let $S$ be a meromorphic function on $\cbb$ which is analytic and bounded on the strip $\rbb + i[0,\pi]$ and fulfils the symmetry conditions
\begin{equation}\label{eq:srelat}
 S(\zeta)^{-1}=S(-\zeta)=\overline{S(\bar{\zeta})\vphantom{\hat S}}=S(\zeta+i \pi).
\end{equation}
Related to this, we consider generalized annihilation and creation operators with rapidity-dependent kernels $z(\theta)$, $\zd(\theta)$, which fulfill the Zamolodchikov relations,
\begin{equation}\label{eq:zamol}
\begin{aligned}
\zd(\theta)\zd(\theta') &= S(\theta-\theta')\zd(\theta')\zd(\theta),\\
z(\eta)z(\eta') &= S(\eta-\eta')z(\eta')z(\eta),\\
z(\eta)\zd(\theta) &= S(\theta-\eta)\zd(\theta)z(\eta) + \delta(\theta-\eta) \idop.
\end{aligned}
\end{equation}
They act on an ``$S$-symmetrized'' Fock space $\Hil$ over the single-particle space $\kcal=L^2(\rbb,d\theta)$. On this space, one has a second-quantized version of $\pcal_+^\uparrow$, in particular the translations $U(x)$ of the Borchers triple, and the Fock vacuum vector $\Omega$.

The algebra $\M$ is defined as follows. We consider the operator-valued distribution,
\begin{equation}\label{eq:wlfield}
\pfg(f):=  \int d\theta \, \Big(\tilde f(p(\theta)) \zd(\theta) +  \tilde f(-p(\theta)) z(\theta) \Big) , \qquad 
 f\in \mathcal{S}_\rbb(\mathbb{R}^{2}),
\end{equation}
where $p(\theta)=\mu(\cosh\theta,\sinh\theta)$. This $\pfg(f)$, formally analogous to a free Bose field, is $\pcal_+^\uparrow$-covariant but not local in the usual sense. It is essentially selfadjoint, hence we can define
\begin{equation}
 \M := \{  \exp i \pfg(f)^- | \supp f \subset \wedg_L \}'.
\end{equation}
Then $(\M,U,\Omega)$ fulfills the usual axioms of a Borchers triple \cite{Lechner:2008}. It is worth noting that $\M$ can also be obtained as
\begin{equation}\label{eq:}
 \M = \{  \exp i \phi'(f)^- | \supp f \subset \wedg_R \}''
\end{equation}
with the ``right field'' $\phi'(f) = J \pfg(\refl.f) J$ where $J$ is the reflection operator of the (non-graded) Borchers triple.

We now add the grading $\grad = (-1)^N$ with $N$ denoting the particle number operator. Since $\grad$ anticommutes with the $z$, $\zd$, we have $\grad \pfg(f) \grad = -\pfg(f)$, and hence $\grad \mcal \grad = \mcal$. Also, $\grad$  commutes with all $U(g)$, $g \in \pcal_+^\uparrow$, and leaves $\Omega$ invariant, hence $(\M,U,\Omega,\grad)$ is a graded Borchers triple in the sense of Definition~\ref{def:triple}.

With the twist induced by this $\grad$, we note that
\begin{equation}
\M^\twist= \{ \exp i \hat\pfg(f)^- | \supp f \subset \wedg_R \}'' \quad \text{where } \hat\pfg(f) = Z \phi'(f) Z^\ast  = i (-1)^N \phi'(f).
\end{equation}
That is, compared with the situation in \cite{Lechner:2008}, we ``twist'' the right field (replace $\phi'$ with $\hat \phi$) but leave the left field invariant ($\phi$ is unchanged).

For a certain class of $S$, the modular nuclearity condition has been established, including the case $S=1$ \cite{BDL:modular_nuclearity} and those with a bounded extension to a strip $-\epsilon < \im \zeta < \pi + \epsilon$ and $S(0)=-1$ \cite{AL2017}. Hence at least in these cases, Theorem~\ref{thm:gradednet} applies, and our double cone algebras $\F(\ocal_{x,y})$ are nontrivial for suitably large double cones. Thus the construction yields a nontrivial, twisted-local net.

In terms of scattering theory, we note that $\pfg$ is indeed a polarization-free generator in the sense of Definition~\ref{def:pfg}: Extending it by $\phi(\psi):=\zd(\psi)+z(\bar \psi)$ to all $\psi \in \kcal$, then passing to its operator closure, the domain and continuity conditions in Def.~\ref{def:pfg}(\ref{it:pfg-domain}),(\ref{it:pfg-continuous}) follow from the $H$-bound $\|(1+H)^{-1/2} \phi(\psi)\|<\infty$ and from the relation $[(1  +H)^{-1},\phi(\psi)] = i(1+H)^{-1} \phi(-iH \psi) (1+H)^{-1}$ (cf.~\cite{FreHer:pointlike_fields}). Further if $\psi = E_\mu A \Omega$, $A =A^\ast\in \mcal'$, meaning that $\psi$ lies in the usual ``standard subspace'' associated with $\wcal_L$ in free field theory, one finds as in \cite[p.~151]{Lechner:2003} that $\phi(\psi)$ commutes spectrally with all $\phi'(g)$, $\supp g \subset \wcal_R$, and hence $\phi(\psi)$ is affiliated with $\M'$.

Also, if $\raps{\psi}$ is compact, $\tilde \chi$ has support near the mass shell and value $1$ on $\supp \tilde\psi$, we have $\phi(\psi)^\chi=\zd(\psi)$. Therefore, at least in cases with modular nuclearity, we obtain from Corollary~\ref{corr:sfrompfg},
\begin{equation}
\begin{aligned}
    \hscalar{ P_\grad \psi_1 \otimes \dotsm \otimes \psi_k } { S &\, P_\grad \eta_n\otimes \dotsm \otimes \eta_n }_{\fock^\grad(\kcal)}
     \\ &= \frac{i^n}{\sqrt{k!n!}}
    \hscalar{ \zd(\psi_{\sigma(1)})\dotsm \zd(\psi_{\sigma(k)}) \Omega  }{ U(\refl) \zd(\eta_{\rho(1)})\dotsm \zd( \eta_{\rho(n)}) \Omega }_{\Hil},
\end{aligned}
\end{equation}
when $\sigma$, $\rho$ are the permutations that ``order'' the rapidity support of the $\psi_j$ and $\eta_j$ respectively. We see that the above expression is nonzero only for $k=n$, and using the form of $U(\refl)$ and \eqref{eq:zamol}, we find for the integral kernel of $P_\grad S P_\grad$ at $n$-particle level 
\begin{equation}
 (P_\grad S P_\grad)_n (\thetav,\etav) =   \Big(\frac{1}{n!}\sum_{\sigma\in\perms{n}} \prod_{j=1}^n\delta(\theta_j-\eta_{\sigma(j)}) \Big)
 \prod_{k < m} \Big(-S(|\theta_k-\theta_m|) \Big).
\end{equation}
That is, the model has a factorizing scattering matrix with asymptotic \emph{fermions} (the space $\fock^\grad$ is the antisymmetric Fock space) and a 2-particle scattering matrix $-S$ rather than $S$. (Cf.~\cite[Theorem~6.3]{Lechner:2008} for the bosonic case.)
The simplest interacting example is certainly $S=1$, where the 2-particle scattering matrix is $-S(\theta)=-1$; a fermionic analogue to the (bosonic) massive Ising model.

For illustration, we also note the special case $S=-1$: In this case, the relations \eqref{eq:zamol} specialize to the CAR. Instead of the field $\pfg$ in \eqref{eq:wlfield}, one may consider a free Majorana field, with components
\begin{equation}
   \psi_{\pm}(f) = \int d\theta \,   e^{i \pi(-2\pm 1)/4} e^{\pm\theta/2} \tilde f(p(\theta)) \zd(\theta) + h.c.
\end{equation}
One verifies that $\psi_\pm(f)$, $\supp f \subset \wedg_L$ commutes with $\phi'(g)$, $\supp g \subset \wedg_R$ (note that both are bounded operators in this case),
and in fact, using that $U(\refl) \psi_\pm(f) U(\refl) = \psi_\pm(\refl.f)$, one can check that
\begin{equation}
  \M' = \{ \psi_\pm(f) : \supp f \subset \wedg_L \}'', \quad
  \M^\twist = \{ \psi_\pm(f) : \supp f \subset \wedg_R \}''.
\end{equation}
In other words, our net $\F$ is generated by a free Majorana field; this is consistent with our result that the asymptotic states are fermions and the 2-particle scattering matrix is $-S(\theta)=1$.

\subsection{Connection to the form factor programme}

We now want to analyze the structure of operators in the local field algebras $\F(\ocal_{x,y})$, exhibiting a connection to the form factor approach to integrable models. This is largely analogous to the analysis in \cite{BostelmannCadamuro:characterization} for the bosonic case, hence we shall mainly confine ourselves to giving a rough overview and to pointing out the necessary changes in the fermionic case.

First off, independent of the structure of the local operators, every operator on the $S$-symmetric Fock space $\Hil$ can be expanded \cite[Thm.~3.8]{BostelmannCadamuro:expansion} in a series of annihilators and creators,
\begin{equation}\label{eq:fmnseries}
   A = \sum_{m,n} \int \frac{d^m \thetav d^n \etav}{m!n!} \cme{m,n}{A}(\thetav,\etav) z^{\dagger}(\theta_1)\dotsm  z^\dagger(\theta_m)z(\eta_1)\dotsm z(\eta_n),
\end{equation}
where the coefficients $\cme{m,n}{A}(\thetav,\etav)$ depend linearly on $A$. More precisely, the expansion holds in the following sense. We choose an \emph{indicatrix} $\omega$, a subadditive, less-than-linearly growing function (see \cite[Def.~2.1]{BostelmannCadamuro:characterization} for details). Based on this, we consider a class of quadratic forms $A\in\qf^\omega$ on $\Hil$, essentially those where $Ae^{-\omega(H)}$ and $e^{-\omega(H)} A$ are bounded at fixed particle number cutoff on both sides. (In particular, $\qf^\omega$ includes all bounded operators.) Then the expansion holds for all $A\in\qf^\omega$, with the $\cme{m,n}{A}$ being distributions of a certain regularity class. (See \cite[Sec.~2.3]{BostelmannCadamuro:characterization}.) The series is read in matrix elements with particle number cutoffs, where it is a finite sum; convergence issues do not arise.

Moreover, we extend the notion of locality from bounded operators to quadratic forms. We say that $A\in\qf^\omega$ is $\omega$-local in the wedge $\wedg_L$ if it commutes with $\phi'(g)$ for all $\supp g \subset \wedg_R$ in the weak sense; in $\wedg_R$ if $U(\refl) A U(\refl)$ is $\omega$-local in $\wedg_R$; and analogously for shifted wedges; finally, in $\ocal_{x,y}$ if it is $\omega$-local in both $\wedg_R+x$ and in $\wedg_L+y$. (See \cite[Def.~2.4]{BostelmannCadamuro:characterization}.)

Noting that the notion of localization in $\wedg_L$ is independent of the grading, the following characterization from \cite{BostelmannCadamuro:characterization} applies identically in our context:

\begin{theorem}\label{thm:ffwedge}
$A \in \qcal^\omega$ is $\omega$-local in $\wedg_L$ if and only if 
\begin{equation}\label{eq:fmnfromF}
  \cme{m,n}{A}(\thetav,\etav)=F_{m+n}(\thetav+i\zerov,\etav+i\piv-i\zerov).
\end{equation}
where the functions $F_k$ ($k \in \nbb_0$) fulfill:

\begin{enumerate}
\renewcommand{\theenumi}{(FW\arabic{enumi})}
\renewcommand{\labelenumi}{\theenumi}

\item \label{it:fwmero}
%\emph{Analyticity:}
$F_k$ is analytic on $\rbb^k + i \ical^k_+$, where $\ical^k_+:=\{\lambdav: 0 < \im \lambda_1 < \dots < \im\lambda_k < \pi\}$,

\item \label{it:fwsymm} %\emph{$S$-symmetry:}
For any $\sigma \in \perms{k}$, we have
$
\displaystyle{
F_k(\thetav+i\zerov)
= S^\sigma(\thetav) F_k(\thetav^\sigma +i\zerov)
}
$, where $\thetav^\sigma =(\theta_{\sigma(1)},\ldots,\theta_{\sigma(k)})$.

\item \label{it:fwboundsreal}
%\emph{Bounds at nodes:}
For each $j \in \{0,\ldots,k\}$, we have $\| F_k\big( \cdotarg + i (0,\dotsc,0,\underbrace{\pi,\dotsc,\pi}_{\text{$j$ entries}}) + i \zerov\big) \|_{(k-j) \times j}^{\omega} < \infty$.

\item \label{it:fwboundsimag}
%\emph{Pointwise bounds:}
%
There exist $c,c'>0$ such that for all $\im\zetav\in\ical^k_+$,
\begin{equation*}
  |F_k(\zetav)| \leq c \operatorname{dist}(\im \zetav,\partial \ical_+^k)^{-k/2} \prod_{j=1}^k \exp \big(\mu r  \im \sinh \zeta_j+ c' \omega(\cosh \re \zeta_j)\big).
\end{equation*}
\end{enumerate}
\noindent
\end{theorem}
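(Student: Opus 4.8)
The plan is to reduce Theorem~\ref{thm:ffwedge} to the already-established bosonic characterization in \cite{BostelmannCadamuro:characterization}, exploiting the fact that the notion of $\omega$-locality in $\wedg_L$ refers only to commutation with the \emph{left field} $\phi'(g)$, $\supp g \subset \wedg_R$, and the left field (together with $\M'$) is literally unchanged when passing from the non-graded to the graded Borchers triple. First I would make this precise: the field $\phi$ in \eqref{eq:wlfield}, the $S$-symmetric Fock space $\Hil$, the translations $U(x)$, the reflection $J$, the right field $\phi'(g) = J\phi(\refl.g)J$, and hence the defining condition ``$A$ commutes weakly with all $\phi'(g)$, $\supp g\subset\wedg_R$'' are all identical data in \cite{Lechner:2008} and in our Section~\ref{sec:integrable}; the grading $\grad=(-1)^N$ enters only through the twist, which affects $\M^\twist$ (the right wedge algebra) and the definition of $\F(\ocal_{x,y})$, but plays no role whatsoever in the notion of $\wedg_L$-localization. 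Likewise, the expansion \eqref{eq:fmnseries} into $z^\dagger$'s and $z$'s via \cite[Thm.~3.8]{BostelmannCadamuro:expansion}, the space $\qf^\omega$ of quadratic forms, the indicatrix $\omega$, and the regularity of the coefficients $\cme{m,n}{A}$ are grading-independent, since $z,\zd$ and $H$ do not involve $\grad$.

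Having established that every object appearing in the statement of Theorem~\ref{thm:ffwedge} coincides with its counterpart in the bosonic setting, the proof is then simply a citation: apply the corresponding theorem of \cite{BostelmannCadamuro:characterization} verbatim. Concretely, I would note that in \cite{BostelmannCadamuro:characterization} it is shown that $A\in\qf^\omega$ is $\omega$-local in $\wedg_L$ iff its coefficients $\cme{m,n}{A}$ arise, via the boundary-value prescription \eqref{eq:fmnfromF}, from a family of analytic functions $F_k$ obeying exactly \ref{it:fwmero}--\ref{it:fwboundsimag}; the $S$-symmetry \ref{it:fwsymm} comes from the Zamolodchikov relations \eqref{eq:zamol} (which are unchanged), analyticity \ref{it:fwmero} and the bounds \ref{it:fwboundsreal}, \ref{it:fwboundsimag} come from the $H$-bounds and the strip-analyticity properties of matrix elements of $\phi'(g)$ together with the support of $g$ in $\wedg_R$ of radius parameter $r$. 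Since none of these inputs sees the grading, the equivalence transfers without modification.

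The only thing that genuinely requires a remark — and this is the ``main obstacle'', though a mild one — is to double-check that no step of the bosonic proof covertly used \emph{local} (as opposed to \emph{wedge-local}) structure, or used the specific identification $\M^\twist = \{\exp i\phi'(f)^-\}''$ with the untwisted right field. It does not: the characterization in \cite{BostelmannCadamuro:characterization} is proved purely at the level of the wedge, testing against $\phi'(g)$ with $\supp g\subset\wedg_R$, and the passage to the double cone $\ocal_{x,y}$ (which \emph{does} differ in the graded case, being governed by $\M_x^\twist\cap\M_y'$ rather than $\M_x\cap\M_y'$) is deferred to the subsequent double-cone characterization, not needed here. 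Hence I would phrase the proof as: ``The statement, including all quantities entering it, is independent of the grading $\grad$; it therefore follows directly from \cite[Thm.~\dots]{BostelmannCadamuro:characterization} applied to the underlying non-graded Borchers triple $(\M,U,\Omega)$.'' If one wished to be fully self-contained one could instead indicate the two implications — from $F_k$ satisfying \ref{it:fwmero}--\ref{it:fwboundsimag} one reconstructs a wedge-local $A$ by \eqref{eq:fmnfromF} and \eqref{eq:fmnseries}, using the contour-shift argument of \cite{BostelmannCadamuro:characterization} to verify weak commutativity with $\phi'(g)$; conversely, wedge-locality forces the commutator of the expansion coefficients with the one-particle kernel of $\phi'(g)$ to vanish, which by the residue/meromorphic-continuation argument there yields exactly the boundary-value form \eqref{eq:fmnfromF} and properties \ref{it:fwmero}--\ref{it:fwboundsimag} — but since this reproduces \cite{BostelmannCadamuro:characterization} word for word, I would not reproduce it.
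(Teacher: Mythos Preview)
Your proposal is correct and matches the paper's approach exactly: the paper does not give a separate proof of this theorem, but simply observes that ``the notion of localization in $\wedg_L$ is independent of the grading'' and states that the characterization from \cite{BostelmannCadamuro:characterization} therefore applies identically. Your write-up is a more detailed justification of precisely that one-line reduction.
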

Here $+i\zerov$ denotes the boundary distribution when approached from within $\ical_+^k$. The factor $S^\sigma$ is as in 
\cite[Eq.~(2.2)]{BostelmannCadamuro:expansion}. 
For the norm \mbox{$\|\cdotarg \|_{(k-j) \times j}^{\omega}$} see \cite[Eq.~(2.25)]{BostelmannCadamuro:characterization}, but its details will not matter here. 

Corresponding statements hold for operators localized in translated wedges, with $\hat A = U(x)AU(x)^\ast$ corresponding to the analytic functions $\hat F_k (\zetav)= e^{ip(\zetav)\cdot x}F_k(\zetav)$.

The notion of localization in $\wedg_R$, however, and hence in double cones, does depend on the grading, if only via the form of the reflection operator, $U(\refl)=ZJ$. We focus on locality in the standard double cone $\ocal_r$, $r>0$, and obtain by analogy with \cite[Theorem 5.4]{BostelmannCadamuro:characterization}:

\begin{theorem}\label{thm:ffdcone}
$A \in \qcal^\omega$ is $\omega$-local in $\ocal_{r}$ if and only if
\begin{equation}
  \cme{m,n}{A}(\thetav,\etav)=F_{m+n}(\thetav+i\zerov,\etav+i\piv-i\zerov).
\end{equation}
where the functions $F_k$ ($k \in \nbb_0$) fulfill:

\begin{enumerate}
\renewcommand{\theenumi}{(FD\arabic{enumi})}
\renewcommand{\labelenumi}{\theenumi}

\item \label{it:fdmero}
%\emph{Analyticity:}
$F_k$ is meromorphic on $\cbb^k$, and analytic where $\im \zeta_1 < \ldots < \im \zeta_k < \im \zeta_1 + \pi$.

\item \label{it:fdsymm} %\emph{$S$-symmetry:} 
For any $\sigma \in \perms{k}$, we have
$
\displaystyle{
F_k(\zetav)
=  S^\sigma(\zetav) F_k(\zetav^\sigma) .
}
$

\renewcommand{\theenumi}{(FD\arabic{enumi}$_\grad$)}

\item \label{it:fdperiod} %\emph{$S$-periodicity:}
$\displaystyle{
F_k (\zetav + 2i\pi \ev^{(j)} ) =
{(-1)^k} \Big(\prod_{\substack{i=1 \\ i \neq j}}^k S(\zeta_i-\zeta_j)\Big)  F_k (\zetav ) .
}$

\item \label{it:fdrecursion}

%\emph{Recursion relations:}
%
The $F_k$ have first order poles at $\zeta_n-\zeta_m = i \pi$, where $1 \leq m < n \leq k$,
and one has with $\hat\zetav = (\zeta_1,\ldots,\widehat{\zeta_m},\ldots, \widehat{\zeta_n}, \ldots, \zeta_k)$ where the hats denote omitted variables,
\begin{equation*}
\res_{\zeta_n-\zeta_m = i \pi} F_{k}(\boldsymbol{\zeta})
= - \frac{1}{2\pi i }
\Big(\prod_{j=m}^{n} S(\zeta_j-\zeta_m) \Big)
\Big(1- {(-1)^k}\prod_{p=1}^{k} S(\zeta_m-\zeta_p) \Big)
F_{k-2}( \boldsymbol{\hat\zeta} ).
\end{equation*}

\renewcommand{\theenumi}{(FD\arabic{enumi})}

\item \label{it:fdboundsreal}
%\emph{Bounds at nodes:}
For each $j \in \{0,\ldots,k\}$, we have
\begin{equation*}
\onorm{ F_k\big( \cdotarg + i (0,\dotsc,0,\underbrace{\pi,\dotsc,\pi}_{\text{$j$ entries}}) + i \zerov \big) }{(k-j) \times j} < \infty,
\quad
\onorm{ F_k\big( \cdotarg + i (-\pi,\dotsc,-\pi,\underbrace{0,\dotsc,0}_{\text{$j$ entries}}) + i \zerov \big) }{(k-j) \times j} < \infty.
\end{equation*}
Here $+i\zerov$ denotes approach from inside the region of analyticity as in \ref{it:fdmero}.

\item \label{it:fdboundsimag}
%\emph{Pointwise bounds:}
%
There exist $c,c'>0$ such that for all $\im \zetav\in\ical^k_\pm$, where $\ical^k_- = \ical^k_+ - (\pi,\dotsc,\pi)$: 
\begin{equation*}
  |F_k(\zetav)| \leq c \,{ \operatorname{dist}(\im \zetav,\partial \ical_\pm^k)^{-k/2}} \prod_{j=1}^k \exp \big(\mu r  |\im \sinh \zeta_j|+ c' \omega(\cosh \re \zeta_j)\big).
\end{equation*}
\end{enumerate}

\end{theorem}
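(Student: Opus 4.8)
The plan is to follow closely the proof of the bosonic result \cite[Theorem~5.4]{BostelmannCadamuro:characterization}, inserting the twist operator $Z$ at the two places where the grading actually enters: the reflection operator $U(\refl)=ZJ$, and the parity-dependent phases that conjugation by $Z$ contributes on the $(m,n)$-particle components of an operator. Recall that $A$ is $\omega$-local in $\ocal_r=\ocal_{(0,-r),(0,r)}$ precisely when it is $\omega$-local both in $\wedg_L+(0,r)$ and in $\wedg_R+(0,-r)$. The first condition is handled directly by Theorem~\ref{thm:ffwedge}, applied with the translation shift $x=(0,r)$: this multiplies the associated analytic functions by $e^{ip(\zetav)\cdot(0,r)}$, which accounts for the $\mu r$-dependent exponential appearing in (FD6) while leaving the real-line bounds of (FW3) intact. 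Since localisation in $\wedg_L$ is grading-independent, this step yields a family of functions $F_k$ on $\rbb^k+i\ical^k_+$ satisfying the analogues of (FW1)--(FW4). Since $\refl$ carries $\wedg_R+(0,-r)$ onto $\wedg_L+(0,r)$, the second condition is equivalent to $\omega$-locality of the reflected operator $\tilde A:=U(\refl)AU(\refl)$ in $\wedg_L+(0,r)$, to which Theorem~\ref{thm:ffwedge} again applies, producing a second family $G_k$ with the same list of properties.

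The heart of the argument is the relation between the coefficients of $\tilde A$ and those of $A$. Since $U(\refl)=ZJ$, conjugation by $J$ acts on the $S$-symmetric Fock space by complex conjugation of the rapidity arguments together with the exchange of creators and annihilators, producing the $S^\sigma$-factors of \cite[Eq.~(2.2)]{BostelmannCadamuro:expansion} exactly as in the bosonic case; while conjugation by $Z$ multiplies the $(m,n)$-component of an operator by a parity-dependent phase ($1$ when $m\equiv n$ modulo $2$, and $\pm i$ otherwise). Tracking both through the expansion \eqref{eq:fmnseries} expresses $G_{m+n}$ in terms of $F_{n+m}$ up to reflection of the arguments, complex conjugation (absorbed using the symmetry relations \eqref{eq:srelat} of $S$), and a phase which, unlike in the bosonic computation, is not identically $1$. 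Applied once, this gluing relation continues $F_k$ from the strip $0<\im\zeta_1<\dots<\im\zeta_k<\pi$ across the boundary lines $\im\zeta_j\in\{0,\pi\}$, giving meromorphy on $\cbb^k$ and analyticity where $\im\zeta_1<\dots<\im\zeta_k<\im\zeta_1+\pi$, i.e.\ (FD1); the $S$-symmetry (FD2) is inherited from (FW2). Applied twice, the gluing relation returns to the original strip shifted by $2\pi i$ in one variable, and collecting the accumulated $S$-factors together with the twist phase produces the periodicity (FD3$_\grad$) with its characteristic extra factor $(-1)^k$. The two-sided bounds (FD5), (FD6) merely collect the $\omega$-bounds of (FW3), (FW4) for $F_k$ and for $G_k$, the latter living in the reflected strip $\ical^k_-$.

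The remaining and most delicate point is the kinematical residue (FD4$_\grad$). The first-order pole at $\zeta_n-\zeta_m=i\pi$ is forced because the boundary value of $F_k$ along $\im\zeta_n=\pi$ (dictated by $\wedg_L$-locality) and along $\im\zeta_m=0$ must be compatible with the contraction term $\delta(\theta-\eta)\idop$ in the Zamolodchikov relations \eqref{eq:zamol}; the $\delta$ contributes the pole, whose residue re-expresses $F_k$ through $F_{k-2}$. The computation proceeds as in \cite[Theorem~5.4]{BostelmannCadamuro:characterization}, the sole change being that twisted locality (anticommutation) reverses the relative sign of the two contributions to the contraction, so that the bosonic combination $1-\prod_p S(\zeta_m-\zeta_p)$ is replaced by $1-(-1)^k\prod_p S(\zeta_m-\zeta_p)$. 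I expect the main obstacle to be exactly this sign bookkeeping: ensuring that the factor $(-1)^k$ (with $k=m+n$) appears consistently in (FD3$_\grad$) and (FD4$_\grad$), and in particular that it is $Z$ --- applied on the correct side of $J$ and interacting correctly with the $S$-symmetrisation --- that is responsible, despite $Z$ contributing phases $\pm i$ on odd-particle sectors while the final formulas involve only signs $\pm 1$. The converse implication (that functions satisfying (FD1)--(FD6) define an $\omega$-local operator in $\ocal_r$) follows by reversing the argument: defining $A$ through \eqref{eq:fmnseries} with the prescribed coefficients, properties (FD1)--(FD3$_\grad$), (FD5), (FD6) guarantee $A\in\qf^\omega$ and $\omega$-locality in $\wedg_L+(0,r)$ via Theorem~\ref{thm:ffwedge}, while (FD3$_\grad$), (FD4$_\grad$) together with the reflection relation ensure that $\tilde A$ again satisfies (FW1)--(FW4), hence $A$ is $\omega$-local in $\wedg_R+(0,-r)$ as well.
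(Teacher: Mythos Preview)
Your outline is correct and follows the same route as the paper: apply Theorem~\ref{thm:ffwedge} to both $A$ and its reflection, derive a modified version of the reflection formula \cite[Prop.~3.11]{BostelmannCadamuro:expansion} accounting for $U(\refl)=ZJ$, patch the two analytic functions across $\im\zetav=0$ and $\im\zetav=\piv$, extend $2\pi i$-periodically with the extra $(-1)^k$, and read off the residue from the boundary-value mismatch; the converse runs by exhibiting $U(\refl)A^\ast U(\refl)$ as another left-wedge-local form.

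Two points where your write-up deviates from the paper and could trip you up when you fill in the details. First, the paper makes the graded reflection formula fully explicit: beyond the overall phase $(-1)^{m(m+n)}\sqrt{(-1)^{m+n}}$ that you correctly trace to $Z$, each contraction term acquires a factor $(-1)^{|D|(m+n-1)}$ and the contraction kernel $R_D$ is replaced by $\dot R_D$, i.e.\ the same expression with $S$ replaced by $-S$. It is this last modification, not merely the global $Z$-phase, that produces the $(-1)^k$ inside the residue bracket of \ref{it:fdrecursion}. Second, your attribution of that sign to ``twisted locality (anticommutation)'' is off: in this framework $\omega$-locality in $\wedg_L$ is still defined by \emph{commutation} with $\phi'(g)$ and is grading-independent, as the paper stresses before Theorem~\ref{thm:ffwedge}. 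The grading enters \emph{only} through $U(\refl)=ZJ$; no anticommutator appears anywhere in the derivation. Keeping this straight matters, because otherwise one is tempted to insert an extra sign when matching boundary values at $\im\zetav=\zerov$, where in fact nothing changes relative to the bosonic case.
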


This is in line with what is expected by the form factor programme \cite[Sec.~3]{Lashkevich:1994}, noting some change in conventions. 

\begin{proof}
We outline the proof, pointing out the necessary changes to the arguments in \cite[Sec.~5]{BostelmannCadamuro:characterization}.
If $A$ is $\omega$-local in the double cone, then both to $U(-r \ev^{(1)})AU(-r\ev^{(1)})^\ast$ and  $U(-r\ev^{(1)})U(\refl)AU(\refl)U(-r\ev^{(1)})^\ast$, Theorem~\ref{thm:ffwedge} can be applied. On the other hand, we have the key formula
\begin{equation}\label{eq:fmnrefl}
\cme{m,n}{U(\refl)A U(\refl)}(\thetav, \etav) =(-1)^{m(m+n)} \sqrt{(-1)^{m+n}} \sum_{D \in \mathcal{C}_{m,n}} (-1)^{|D|(m+n-1)}  \delta_D S_D \dot{R}_D(\thetav, \etav) f^{[A]}_{n- |D|, m - |D|}(\hat{\etav}, \hat{\thetav}),
\end{equation}
which is altered from \cite[Prop.~3.11]{BostelmannCadamuro:expansion} to take the modified reflection operator $U(\refl)$ into account. Here the sum runs over all contractions $D$ of $m+n$ indices (see \cite{BostelmannCadamuro:expansion} for this notation, including the hatted variables), the factors $\delta_D$, $S_D$ are as in \cite{BostelmannCadamuro:expansion}, and $\dot R_D$ denotes the factor $R_D$ from \cite[Eq.~(3.44)]{BostelmannCadamuro:expansion} with $S$ replaced by $-S$; the square root has values in the upper half plane.

Therefore, we can define an analytic function $F_k$ on $\rbb^k + i \ical^k_\pm$ in terms of those arising from Theorem~\ref{thm:ffwedge} by 
\begin{equation}
  F_k (\zetav) :=\begin{cases}
    F_k^{[A]}(\zetav) &\text{for }\im\zetav\in\ical^k_+,   
\\
    \sqrt{(-1)^k}\, F_k^{[U(\refl)AU(\refl)]}(\zetav +i \piv) &\text{for }\im\zetav\in\ical^k_-,   
                \end{cases}
\end{equation}
which match at the real hyperplane due to \eqref{eq:fmnrefl}. (Cf.~\cite[Eq.~(5.7)]{BostelmannCadamuro:characterization}.) This automatically has properties \ref{it:fdboundsreal} and \ref{it:fdboundsimag}. We can extend the function periodically by 
\begin{equation}
 F_k  (\zetav+2i\piv):= (-1)^k  F_k (\zetav) 
\end{equation}
where again the boundaries match due to \eqref{eq:fmnrefl}. (Cf.~\cite[Eq.~(5.13)]{BostelmannCadamuro:characterization}.) By extension to the convex hull using the tube theorem, this defines $F$ on the domain $\im \zeta_1 < \dots < \im\zeta_k < \im \zeta_1+\pi$.
Further, one can extend $F$ \emph{meromorphically} to the domain $\im \zeta_1 < \dots < \im\zeta_k < \im \zeta_1+2\pi$ by setting
\begin{equation}\label{eq:stairs}
F_k(\zetav):= (-1)^{mk} F_{k}(\zeta_{m+1},\ldots,\zeta_{k},\zeta_{1}+2i\pi,\ldots,\zeta_{m}+2i\pi)
\end{equation}
for all $0 < m \leq k$, and again using the tube theorem. (Cf.~\cite[Eq.~(5.15)]{BostelmannCadamuro:characterization}.)
From differences of boundary values \eqref{eq:fmnrefl}, one computes that $F_k$ has a first-order
pole at $\zeta_i-\zeta_j=i\pi$, with residue given by \ref{it:fdrecursion}. $F$ can then be further continued as a meromorphic function on $\cbb^k$ using \ref{it:fdsymm} as a definition, which is consistent on the existing domain due to \ref{it:fwsymm}. Now \ref{it:fdperiod} follows by combining \eqref{eq:stairs} with \ref{it:fdsymm}.

Conversely, let $F_k$ be meromorphic functions fulfilling conditions (FD$_\grad$). Defining $A$ by the expansion \eqref{eq:fmnseries}, with $\cme{m,n}{A}(\thetav,\etav)=F_{m+n}(\thetav + i\zerov,\etav+i\piv-i\zerov)$, it is immediate from Theorem~\ref{thm:ffwedge} that this $A$ is $\omega$-local in $\wedg_L+r\ev^{(1)}$. Likewise, the quadratic form $A^\pi$ defined by the expansion coefficients $\cme{m,n}{A^\pi}(\thetav,\etav)=\sqrt{(-1)^{m+n}}F_{m+n}(\thetav -i\piv+ i\zerov,\etav-i\zerov)$ is $\omega$-local in $\wedg_L+r\ev^{(1)}$. However, using Eq.~\eqref{eq:fmnrefl} and conditions (FD$_\Gamma$), one shows that
\begin{equation}
  \cme{m,n}{U(\refl)A^\ast U(\refl)} = \cme{m,n}{A^\pi}\, \text{ for all $m,n$};
\end{equation}
in other words, $U(\refl)A^\ast U(\refl) = A^\pi$.
(Cf.~\cite[Prop.~5.10]{BostelmannCadamuro:characterization}.)
Thus $A = U(\refl)(A^\pi)^\ast U(\refl)$ is also $\omega$-localized in $\refl . (\wedg_L+r\ev^{(1)})= \wedg_R-r\ev^{(1)}$, which concludes the proof.
\end{proof}

This characterizes the elements of $\F(\ocal_r)$, which under nuclearity assumptions we know to exist by abstract reasons. 
However, we can also give explicit examples of solutions of the form factor equations. Let us consider the example $S=1$---which, due to the presence of the grading $\Gamma=(-1)^N$, does \emph{not} yield the net of the free Bose field. 
Picking a test function $g$ supported in $\ocal_r$, the following sequence of analytic functions fulfills the conditions in Theorem~\ref{thm:ffdcone}:
\begin{equation}
F_{2k}=0, \quad F_{2k+1} = \frac{(-1)^k}{(4\pi)^k k!}\,\tilde g(p(\zetav)) \sum_{\sigma \in \perms{2k+1}} e^{\mp \frac{1}{2} \theta_{\sigma(1)}} \prod_{i=1}^k \operatorname{sech} \frac{\theta_{\sigma(2i)} - \theta_{\sigma(2i+1)}}{2}.
\end{equation}
(Cf.~\cite[Eq.~(4.5)]{Lashkevich:1994}.) This hence gives a quadratic form $A \in \qf^\omega$ which is $\omega$-local in $\ocal_r$. It is less imminent whether the form extends to a closed or even bounded operator. However, with the summation methods of \cite[Sec.~5]{BostelmannCadamuro:examples}, it should be possible to establish $A$ as a closable operator for suitable choices of $g$ and $\omega$.

It is natural to ask whether the disorder operators $V$ can be characterized in terms of their coefficients $\cme{m,n}{V}$. Since unitarity and indeed boundedness of $V$ is very difficult to encode into the series expansion \eqref{eq:fmnseries} that involves unbounded objects, we will formulate this result only in one direction: we derive analyticity properties of $\cme{m,n}{V}$ when $V \in \diso_{L}(\ocal_r)$. For a full characterization, one would need to pass to possibly unbounded disorder-type operators, defined in terms of the (weak) commutation relations with the wedge-local fields, a step that we skip here for brevity. We claim:

\begin{theorem}\label{thm:ffdiso}
If $V \in \diso_{L}(\ocal_r)$ , then
\begin{equation}
  \cme{m,n}{V}(\thetav,\etav)=F_{m+n}(\thetav+i\zerov,\etav+i\piv-i\zerov).
\end{equation}
where the functions $F_k$ ($k \in \nbb_0$) fulfill conditions \ref{it:fdmero}, \ref{it:fdsymm}, \ref{it:fdboundsreal}, \ref{it:fdboundsimag} from Theorem~\ref{thm:ffdcone}, which vanish for odd $k$, and:

\begin{enumerate}
\setcounter{enumi}{2}
\renewcommand{\theenumi}{(FD\arabic{enumi}$_L$)}
\renewcommand{\labelenumi}{\theenumi}

\item \label{it:fdlperiod} %\emph{$S$-periodicity:}
%\footnote{\textbf{Sign?}}
$\displaystyle{
F_k (\zetav + 2i\pi \ev^{(j)} ) =
 \textcolor{red}{-}\Big(\prod_{\substack{i=1 \\ i \neq j}}^k S(\zeta_i-\zeta_j)\Big)  F_k (\zetav ) .
}$

\item \label{it:fdlrecursion}

The $F_k$ have first order poles at $\zeta_n-\zeta_m = i \pi$, where $1 \leq m < n \leq k$,
and one has with $\hat\zetav = (\zeta_1,\ldots,\widehat{\zeta_m},\ldots, \widehat{\zeta_n}, \ldots, \zeta_k)$ where the hats denote omitted variables,
\begin{equation*}
\res_{\zeta_n-\zeta_m = i \pi} F_{k}(\boldsymbol{\zeta})
= - \frac{1}{2\pi i }
\Big(\prod_{j=m}^{n} S(\zeta_j-\zeta_m) \Big)
\Big(1 + \prod_{p=1}^{k} S(\zeta_m-\zeta_p) \Big)
F_{k-2}( \boldsymbol{\hat\zeta} ).
\end{equation*}

\end{enumerate}
\end{theorem}

Again, this is in line with \cite[Sec.~3]{Lashkevich:1994}. The proof follows the same lines as in Theorem~\ref{thm:ffdcone}, so that we report the changes briefly:
\begin{proof}
  Since $V$ is even by definition, one finds $\cme{m,n}{V}=0$ for $m+n$ odd; hence let $m+n$ be even in the following. In view of Lemma~\ref{lemma:disorder}(\ref{it:vrefl}),(\ref{it:gammalr}), both  
 $U(-r \ev^{(1)})AU(-r\ev^{(1)})^\ast$ and  $\Gamma U(-r\ev^{(1)})U(\refl)AU(\refl)U(-r\ev^{(1)})^\ast$ are local in the wedge $\wedg_L$ and we can apply Theorem~\ref{thm:ffwedge} to them. But these two are linked by the formula
\begin{equation}\label{eq:refldiso}
\cme{m,n}{\Gamma U(\refl)V U(\refl)}(\thetav, \etav) =(-1)^{m} \sum_{D \in \mathcal{C}_{m,n}} \delta_D S_D \hat{R}_D(\thetav, \etav) \cme{n- |D|, m - |D|}{V}(\hat{\etav}, \hat{\thetav}),
\end{equation}
cf.~\eqref{eq:fmnrefl}, where now $\hat R_D$ is the factor $R_D$ from \cite[Eq.~(3.44)]{BostelmannCadamuro:expansion} with the minus sign there replaced by a plus.

The analytic function $F_k$ ($k$ even) can then be defined and extended meromorphically by  
\begin{align}   
  F_k (\zetav) &:=\begin{cases}
    F_k^{[V]}(\zetav) &\text{for }\im\zetav\in\ical^k_+,   
\\
    F_k^{[\Gamma U(\refl)V U(\refl)]}(\zetav +i \piv) &\text{for }\im\zetav\in\ical^k_-,   
                \end{cases}
\\ \label{eq:fdlperiod0}
F_k  (\zetav+2i\piv) &:=  F_k (\zetav),
\\ \label{eq:fdlstairs}
F_k(\zetav)&:= (-1)^{m} F_{k}(\zeta_{m+1},\ldots,\zeta_{k},\zeta_{1}+2i\pi,\ldots,\zeta_{m}+2i\pi)
\end{align}
and use of the tube theorem, where the consistency condition \eqref{eq:refldiso} is used to match the values at common boundaries. Comparing at these boundaries also yields the residue formula \ref{it:fdlrecursion}, while \ref{it:fdlperiod} arises from \eqref{eq:fdlstairs} and \ref{it:fdsymm}.
\end{proof}

\section{Conclusions and Outlook}\label{sec:conclusions}

In this paper, we have shown how fermionic particles can be accommodated in the rigorous construction of integrable models in 1+1 spacetime dimensions; this can be done either by modifying one of the two wedge-local fields, or by deforming the von Neumann algebras associated with wedges (at the level of Borchers triples), or from a different point of view, by deforming the PCT operator. One might say that we have ``fermionized'' these models, although the notion of bosonization and fermionization should be read with some care: As shown in Proposition~\ref{prop:fixpoints}, both the bosonic and the fermionic model can be understood as subnets of a common, non-local net of algebras, but the models are distinct both mathematically (being nonequivalent as graded-local nets) and physically (leading to a different asymptotic particle spectrum and scattering matrix).

We have applied this here to models with one type of scalar massive particle for simplicity, where we have shown that the results are compatible with the description of fermions in the form factor programme, both for local fields and for disorder operators. While it is clear that analogous modifications can be made to more intricate models, in particular with several particle species \cite{LechnerSchuetzenhofer:2012,AL2017}, a more unified description including these cases may be desirable; we hope to return to this point elsewhere.

Thus our methods give rise to a new class of (low-dimensional) interacting quantum field theories that were not previously amenable to a fully rigorous description. One may now investigate how other properties of local observables translate to these situations. For example, quantum energy inequalities known from the bosonic case \cite{BostelmannCadamuroFewster:2013,BostelmannCadamuro:2016} have an analogue in the fermionic case \cite{BCM:qeimulti} although the change in interaction will be reflected in the numerical value of the negative energy bounds.

Finally, we remark that similar deformation schemes should also lead to fermionic analogues of wedge-local models in higher dimensions \cite{GrosseLechner:2007}; in fact, our approach bears some similarity to the models with string- and brane-localized observables described in \cite{BuchholzSummers:2007}.

\section*{Acknowledgements}
The authors would like to thank J.~Mandrysch for pointing out references regarding the treatment of fermions in the form factor programme, and C.~A.~Ferrentino for comments on an earlier version of the draft.
D.C.~is supported by the Deutsche Forschungsgemeinschaft (DFG) within the Emmy Noether grant CA1850/1-1 and within the Research Training Group RTG 2522/1.
H.B.~would like to thank the Institute for Theoretical Physics at the University of Leipzig for hospitality.

\addcontentsline{toc}{section}{References}

\bibliographystyle{alpha} 
\bibliography{../../integrable}

\end{document}